\newtheorem{theorem}{Theorem}[section]
\newtheorem{lemma}[theorem]{Lemma}
\newtheorem{question}[theorem]{Question}
\newtheorem{definition}[theorem]{Definition}
\newtheorem{corollary}[theorem]{Corollary}
\newtheorem{remark}[theorem]{Remark}
\newtheorem{claim}[theorem]{Claim}
\newcommand{\bi}{\boldsymbol{i}}
\newcommand{\wh}{\widehat}
\newcommand{\wt}{\widetilde}
\newcommand{\eps}{\varepsilon}
\newcommand{\Ot}{\widetilde{O}}
\newcommand{\abs}[1]{|#1|}
\newcommand{\supp}{\mathsf{supp}}
\newcommand{\poly}{\mathsf{poly}}
\DeclareMathOperator*{\E}{\mathbb{E}}
\newcommand{\FF}{\mathcal{F}}
\renewcommand{\i}{\mathbf{i}}
\DeclareMathOperator{\rect}{rect}
\DeclareMathOperator{\sinc}{sinc}
\newenvironment{proofof}[1]{\bigskip \noindent {\it Proof of #1.}\quad }
{\qed\par\vskip 4mm\par}
\newcommand{\define}[4][ignore]{%
  \ifstrequal{#1}{ignore}{}{
  \@namedef{thmtitle@#2}{#1}}%
  \@namedef{thm@#2}{#4}%
  \@namedef{thmtypen@#2}{lemma}%
  \newtheorem{thmtype@#2}[theorem]{#3}%
  \newtheorem*{thmtypealt@#2}{#3~\ref{#2}}%
}
\newcommand{\state}[1]{%
  \@namedef{curthm}{#1}
  \@ifundefined{thmtitle@#1}{
  \begin{thmtype@#1}
    }{
  \begin{thmtype@#1}[\@nameuse{thmtitle@#1}]
  }
    \label{#1}
    \@nameuse{thm@#1}
  \end{thmtype@#1}
  \@ifundefined{thmdone@#1}{
  \@namedef{thmdone@#1}{stated}%
  }{}
}
\newcommand{\restate}[1]{%
  \@namedef{curthm}{#1}
  \@ifundefined{thmtitle@#1}{
    \begin{thmtypealt@#1}
    }{
  \begin{thmtypealt@#1}[\@nameuse{thmtitle@#1}]
  }
    \@nameuse{thm@#1}
  \end{thmtypealt@#1}
  \@ifundefined{thmdone@#1}{
  \@namedef{thmdone@#1}{stated}%
  }{}
}
\newcommand{\thmlabel}[1]{
  \@ifundefined{thmdone@\@nameuse{curthm}}{\label{#1}
    }{\tag*{\eqref{#1}}}
}
\begin{document}

\title{Estimating the Frequency of a Clustered Signal}
\author{   Xue Chen\thanks{Supported by research funding from Northwestern University. Part of this work was done while the author was in the University of Texas at Austin supported by NSF Grant CCF-1526952 and a Simons Investigator Award (\#409864, David Zuckerman).}\\
  \texttt{xue.chen1@northwestern.edu}\\
  Northwestern University
 \and   Eric Price\thanks{Supported in part by NSF Award CCF-1751040 (CAREER).}\\
  \texttt{ecprice@cs.utexas.edu}\\
  The University of Texas at Austin
}


\maketitle

\begin{abstract}
  We consider the problem of locating a signal whose frequencies are ``off grid" and 
  clustered in a narrow band.  Given noisy sample access to a function
  $g(t)$ with Fourier spectrum in a narrow range
  $[f_0 - \Delta, f_0 + \Delta]$, how accurately is it possible to
  identify $f_0$?  We present generic conditions on $g$ that allow for
  efficient, accurate estimates of the frequency.  We then show bounds
  on these conditions for $k$-Fourier-sparse signals that imply
  recovery of $f_0$ to within $\Delta + \Ot(k^3)$ from samples on
  $[-1, 1]$.  This improves upon the best previous bound of
  $O\big( \Delta + \Ot(k^5) \big)^{1.5}$.  We also show
  that no algorithm can do better than $\Delta + \Ot(k^2)$.

  In the process we provide a new $\Ot(k^3)$ bound on the ratio between the
  maximum and average value of continuous $k$-Fourier-sparse signals,
  which has independent application.
\end{abstract}


\section{Introduction}


A natural question, dating at least to the work of Prony in 1795, is
to estimate a signal from samples, assuming the signal has a
$k$-sparse Fourier representation, i.e., that the signal is a sum of
$k$ complex exponentials:
$g(t) = \sum_{j=1}^k v_j e^{2\pi \mathbf{i} f_j t}$ for some set of
frequencies $f_j$ and coefficients $v_j$.

If the frequencies are located on a discrete grid (giving a sparse
discrete Fourier transform), then a long line of work has studied
efficient algorithms for recovering the
signal (e.g., \cite{Man92,GGIMS,AGS,GMS,HIKP12,IK}).  If the frequencies are
not on a grid, then Prony's method from 1795~\cite{Prony} or matrix
pencil~\cite{BM86} can still identify them in the absence of noise.
With noise, however, one cannot robustly recover frequencies that are
too close together: if one listens to a signal for the interval
$[-T, T]$ then any two frequencies $\theta$ and $\theta + \eps/T$ will
be $O(\eps)$-close to each other, and so cannot be distinguished with
noise.  As shown in~\cite{Moitra}, this nonrobustness grows
exponentially in $k$.  On the other hand,~\cite{Moitra} also showed
that recovery with polynomially small noise is possible if all the
frequencies have separation $1/2T$, and~\cite{PS15} showed that a
constant fraction of noise is tolerable with separation
$\log^{O(1)}(FT)/T$, where $F$ is the bandlimit of the signal.

So what \emph{is} possible for arbitrary Fourier-sparse signals,
without any assumption of frequency separation?  One cannot hope to
identify the frequencies exactly, but one can still estimate the
\emph{signal itself}.  If two frequencies are similar enough to be
indistinguishable over the sampled interval, we do not need to
distinguish them.  In~\cite{CKPS17}, this led to an algorithm for
an arbitrary $k$-Fourier-sparse signal that used $\text{poly}(k, \log(FT))$ samples to estimate it with only a constant factor increase in the
noise.  However, this polynomial is fairly poor.

Since prior work could handle the case of well-separated frequencies,
a key challenge in~\cite{CKPS17} is the setting with all the
frequencies in a narrow cluster.  Formally, consider the following
subproblem: if all the frequencies $f_i$ of the signal lie in a narrow
band $[f_0 - \Delta, f_0 + \Delta]$, how accurately can we estimate
$f_0$?  Note that while we would like an efficient algorithm that
takes a small number of samples, the key question is \emph{information
  theoretic}.  And we can ask this question more generally: if the
signal is not $k$-sparse, but still has all its frequencies in a narrow
band, can we locate that band?

\begin{question}
  Let $g(t)$ be a signal with Fourier transform supported on
  $[f_0 - \Delta, f_0 + \Delta]$, for some $f_0 \in [-F, F]$.  Suppose that
  we can sample from $y(t) = g(t) + \eta(t)$ at points in $[-T, T]$,
  where $\eta(t)$ could be any $\ell_2$ bounded noise on $[-T,T]$ with 
  \[
    \E_{t \in [-T, T]} \big[ |\eta(t)|^2\big] \leq \eps \E_{t \in [-T, T]} \big[ |g(t)|^2\big]
  \]
  for a small constant $\eps$.  Under what conditions on $g$ can we
  estimate $f_0$, and how accurately?
\end{question}

One might expect to be able to estimate $f_0$ to
$\pm (\Delta + O(\frac{1}{T}))$ for all functions $g$; after all, $g$ is
just a combination of individual frequencies, each of which points to
some frequency in the right range, and each individual frequency in
isolation can be estimated to within $\pm O(\frac{1}{T})$ in the
presence of noise.  Unfortunately, this intuition is false.

To see this, consider the family of $k$-sparse Fourier functions with
$f_j = \eps j$, i.e.,
\[
  \text{span}(e^{2\pi\mathbf{i} (j \eps)t} \mid j \in [k]).
\]
By sending $\eps \to 0$ and taking a Taylor expansion, this family can
get arbitrarily close to any degree $k-1$ polynomial, on any interval
$[-T', T']$.  Thus, to solve the question, one would also need to
solve it when $g(t)$ is a polynomial even for arbitrarily small
$\Delta$.

There are two ways in which $g(t)$ being a degree $d$ polynomial can
lead to trouble.  The first is that $g(t)$ could itself be a Taylor
expansion of $e^{\pi \mathbf{i} f t}$.  If $d \gtrsim f T$, this Taylor
approximation will be quite accurate on $[-T, T]$; with the noise
$\eta$, the observed signal can \emph{equal} $e^{\pi \mathbf{i} f t}$.
Thus the algorithm has to output $f$, which can be $\Theta(d/T)$ far
from the ``true'' answer $f_0 = 0$.

The second way in which $g(t)$ can lead to trouble is by removing most
of the signal energy.  If $g(t)$ is the (slightly shifted) Chebyshev
polynomial $g(t) = T_d\big( t/T + O(\frac{\log^2 d}{d^2}) \big)$, then
$\abs{g(t)} \leq 1$ for $t \leq \big(1 - O(\frac{\log^2 d}{d^2}) \big) T$, while
$g(t) \geq d$ for $t \geq \big(1 - O(\frac{\log^2 d}{d^2})\big) T$.  That is to
say, the majority of the $\ell_2$ energy of $g$ can lie in the final
$O(\frac{\log^2 d}{d^2})$ fraction of the interval.  In such a case, a
small constant noise level $\eta$ can make samples outside that
$T\cdot \Ot(1/d^2)$ size region equal to zero, and hence completely
uninformative; and samples in that region still have to tolerate
noise.  This leads to an ``effective'' interval size of
$T' = T \cdot \Ot(\frac{1}{d^2})$, leading to accuracy
$O(1/T') = \Ot(d^2)/T$.

Our main result is that, in a sense, these two types of difficulties
are the only ones that arise.  We can measure the second type of
difficulty by looking at how much larger the maximum value of $g$ is
than its average:
\[
  R := \frac{\sup_{t \in [-T, T]} \abs{g(t)}^2}{\E_{t \in [-T, T]} \abs{g(t)}^2}.
\]
We can measure the former by observing that while a polynomial may
approximate a complex exponential on a bounded region, as
$t \to \infty$ the polynomial will blow up.  In particular, we take
the $S$ such that
\[
  |g(t)|^2 \leq \poly(R) \cdot \E_{t \in [-T,T]} \big[ |g(t)|^2 \big] \cdot |\frac{t}{T}|^{S}
\]
for all $\abs{t} \geq T$.  We show that if $R$ and $S$ are bounded,
one can estimate $f_0$ to within $\Delta + \Ot(R + S)/T$, which is almost tight from the above discussion of polynomials.  Moreover,
the time and number of samples required are fairly efficient:

\define{thm:general_bound}{Theorem}{
Given any $T>0, F>0, \Delta>0,R$, and $S>0$, let $g(t)$ be a signal with the following properties:
\begin{enumerate}
\item $\supp(\wh{g}) \subseteq [f_0-\Delta,f_0 + \Delta]$ where $f_0 \in [-F,F]$.
\item $\underset{t \in [-T,T]}{\sup} \big[ |g(t)|^2 \big] \le R \cdot \underset{t \in [-T,T]}{\E} \big[ |g(t)|^2 \big] $.
\item $|g(t)|^2$ grows as at most $\poly(R) \cdot \underset{t \in [-T,T]}{\E} \big[ |g(t)|^2 \big] \cdot |\frac{t}{T}|^{S}$ for $t \notin [-T,T]$. 
\end{enumerate}   
Let $y(t)=g(t)+\eta(t)$ be the observable signal on $[-T,T]$, where $\underset{t \in [-T,T]}{\E} \big[ |\eta(t)|^2 \big] \le \epsilon \cdot \underset{t \in [-T,T]}{\E} \big[ |g(t)|^2 \big]$ for a sufficiently small constant $\epsilon$. For $\Delta'=\Delta + \frac{\wt{O}(R + S)}{T}$ and any $\delta>0$, there exists an efficient algorithm that takes $O(R \log \frac{F}{\Delta' \cdot \delta})$ samples from $y(t)$ and outputs $\wt{f}$ satisfying $|f_0-\wt{f}| \le O(\Delta')$ with probability at least $1-\delta$.
}

\state{thm:general_bound}

\paragraph{Application to sparse Fourier transforms}
Specializing to $k$-Fourier-sparse signals, we give bounds on $R$ and
$S$ for this family.  Since (as described above) this family can
approximate degree-$(k-1)$ polynomials, we know that $R \gtrsim k^2$
and $S \gtrsim k$; we show that $R \lesssim k^3 \log^2 k$ and
$S \lesssim k^2 \log k$.  Thus, whenever $R$ is between $k^2$ and
$\Ot(k^3)$, we can identify $k$-Fourier-sparse signals to within
$\Delta + \Ot(R)/T$.  This is an improvement over the results
in~\cite{CKPS17} in several ways.

Formally, for a given sparsity level $k$, we consider signals in
$$
\mathcal{F} :=\left\{g(t)=\sum_{j=1}^k v_j e^{2 \pi \bi f_j t} \bigg| f_j \in [-F,F] \right\}.$$

\define{thm:worst_case_sFFT}{Theorem}{ 
For any $k$ and $T$, 
\begin{equation}\label{eq:def_R}
R := \sup_{g \in \mathcal{F}} \frac{\underset{x \in [-T,T]}{\sup} |g(x)|^2}{\underset{x \in [-T,T]}{\E}[|g(x)|^2]} = O(k^3 \log^2 k).
\end{equation}
}

\state{thm:worst_case_sFFT}

It was previously known that
$R \lesssim k^4 \log^{3} k$~\cite{CKPS17}, and this fact was used
in~\cite{AKMMVZ18}.  (Thus, our improved bound on $R$ immediately
implies an improvement in Theorem 8 of~\cite{AKMMVZ18}, from
$s_{\mu,\eps}^5 \log^3 s_{\mu,\eps}$ to
$s_{\mu,\eps}^4 \log^2 s_{\mu,\eps}$.)

Next we bound the growth $S=\wt{O}(k^2)$ for any $|t| \ge T$.
\define{thm:bound_growth_around1}{Theorem}{
There exists $S=O(k^2 \log k)$ such that for any $|t|>T$ and $g(t)=\sum_{j=1}^k v_j \cdot e^{2 \pi \bi f_j t}$, $|g(t)|^2 \le \poly(k) \cdot \underset{x \in [-T,T]}{\E}[|g(x)|^2] \cdot |\frac{t}{T}|^{S}$.
}

\state{thm:bound_growth_around1}

This is analogous to Theorem 5.5 of~\cite{CKPS17}, which proves a
bound of $(kt)^k$ rather than $t^{\wt{O}(k^2)}$.  These bounds are
incomparable, but the $t^{\wt{O}(k^2)}$ bound is actually more useful for this
problem: what really matters is showing that $g(t)$ is not too large
just outside the interval.  Theorem~\ref{thm:bound_growth_around1}
gives the ``correct'' polynomial dependence at $t = (1 + 1/k^2)T$.

We can now apply Theorem~\ref{thm:general_bound} to get an efficient
algorithm to recover the center of a cluster of $k$ frequencies within
accuracy $\tilde{O}(R)$.

\define{thm:sparse_freq_estimation}{Theorem}{
  Given $F,T,$ and $k$, let $R$ be the ratio between the maximum and average value of continuous $k$-Fourier-sparse signals defined in \eqref{eq:def_R}. Given $\Delta$, let $g(t)$ be a $k$-Fourier-sparse signal centered around $f_0$: $g(t)=\sum_{i
    \in [k]} v_i \cdot e^{2 \pi \i f_i t}$ where each $f_i \in
  [f_0-\Delta,f_0 + \Delta]$ and
  $y(t)=g(t)+\eta(t)$ be the observable signal on $[-T,T]$, where $\underset{t \in [-T,T]}{\E} \big[ |\eta(t)|^2 \big] \le \epsilon \cdot \underset{t \in [-T,T]}{\E} \big[ |g(t)|^2 \big]$ for a sufficiently small constant $\epsilon$.

For any $\delta>0$, there exist $\Delta'=\Delta + \frac{\tilde{O}(R)}{T}$ and an
efficient algorithm that takes $O(k \log^2 k \log \frac{F}{\Delta' \cdot
  \delta})$ samples from $y(t)$ and outputs $\wt{f}$ satisfying
$|f_0-\wt{f}| \le O(\Delta')$ with probability at least $1-\delta$.
}

\state{thm:sparse_freq_estimation}

Note that the sample complexity here is $\Ot(k)$ not $\Ot(R)$.  This
is because, based on the structure of the problem, we can use a
nonuniform sampling procedure that performs better.  Otherwise this
theorem is just Theorem~\ref{thm:general_bound} applied to the $R$ and
$S$ from Theorems~\ref{thm:worst_case_sFFT}
and~\ref{thm:bound_growth_around1}.

Theorem~\ref{thm:sparse_freq_estimation} is a direct improvement on
Theorem 7.5 of~\cite{CKPS17}, which for $T=1$ could estimate to within
$O\left(\Delta + \wt{O}(k^5)\right)^{1.5}$ accuracy and used
$\text{poly}(k)$ samples.  In particular, in addition to improving the
additive $\text{poly}(k)$ term, our result avoids a multiplicative
increase in the bandwidth $\Delta$ of $g$.

The main technical lemma in proving Theorems~\ref{thm:general_bound} and~\ref{thm:sparse_freq_estimation} is
a filter function $H$ with a compact supported Fourier transform $\wh{H}$ that simulates a
box function on $[-T,T]$ for any $g$ satisfying the conditions in
Theorem~\ref{thm:general_bound}.

\define{lem:effect_H_k_sparse}{Lemma}{
Given any $T$, $S$, and $R$, there exists a filter function $H$ with $\big|\supp(\wh{H})\big| \le \frac{\tilde{O}(R+S)}{T}$ such that for any $g(t)$ satisfying the second and third conditions in Theorem~\ref{thm:general_bound},
\begin{enumerate}
\item $H$ is close to a box function on $[-T,T]$: $\int_{-T}^T |g(t) \cdot H(t)|^2 \mathrm{d} t \ge 0.9 \int_{-T}^T |g(t)|^2 \mathrm{d} t$.
\item The tail of $H(t) \cdot g(t)$ is small: $\int_{-T}^T |g(t) \cdot H(t)|^2 \mathrm{d} t \ge 0.95 \int_{-\infty}^{\infty} |g(t) \cdot H(t)|^2 \mathrm{d} t.$
\end{enumerate}
}

\state{lem:effect_H_k_sparse}

\paragraph{Organization} 
We introduce some notation and tools in Section~\ref{sec:preli}. Then we provide a technical overview in Section~\ref{sec:overview}. We show our filter function and prove Lemma~\ref{lem:effect_H_k_sparse} in Section~\ref{sec:filter_func}. Next we present the algorithm about frequency estimation of Theorem~\ref{thm:general_bound} in Section~\ref{sec:freq_estimation}. Finally we prove the results about sparse Fourier transform --- Theorem~\ref{thm:worst_case_sFFT} and Theorem~\ref{thm:bound_growth_around1} in Section~\ref{sec:condition_num_growth}.

\section{Preliminaries}\label{sec:preli}
In the rest of this work, we fix the observation interval to be $[-1,1]$ and define 
\begin{equation}\label{eq:def_l2}
\|g\|_2=\big( \underset{x \sim [-1,1]}{\E} |g(x)|^2 \big)^{1/2},
\end{equation} because we could rescale $[-T,T]$ to $[-1,1]$ and $[-F,F]$ to $[-FT,FT]$. 

We first review several facts about the Fourier transform. The Fourier transform $\wh{g}(f)$ of an integrable function $g:\mathbb{R} \rightarrow \mathbb{C}$ is
$$
\wh{g}(f)=\int_{-\infty}^{+\infty} g(t)e^{-2\pi\bi f t} \mathrm{d} t \text{ for any real } f.
$$
We use $g \cdot h$ to denote the pointwise dot product $g(t) \cdot h(t)$ and $g^{k}$ to denote $\underbrace{g(t) \cdots g(t)}_{k}$. Similarly, we use $g * h$ to denote the convolution of $g$ and $h$: $\int_{-\infty}^{+\infty} g(x) \cdot h(t-x) \mathrm{d} x$. In this work, we always set $g^{*k}$ as the convolution $\underbrace{g(t) * \cdots * g(t)}_{k}$. Notice that $\supp(g \cdot h)=\supp(g) \cap \supp(h)$ and $\supp(g * h)=\supp(g)+\supp(h)$.

We define the box function and its Fourier transform $\sinc$ function as follows. Given a width $s>0$, the box function $\rect_s(t)=1/s$ iff $|t| \le s/2$; and its Fourier transform is $\sinc(sf)=\frac{\sin(\pi fs)}{\pi f s}$ for any $f$. 

We state the Chernoff bound for random sampling \cite{chernoff1952}.
\begin{lemma}\label{lem:chernoff_bound}
Let $X_1, X_2, \cdots, X_n$ be independent random variables in $[0,R]$ with expectation $1$. For any $\eps<1/2$ and $n \gtrsim \frac{R}{\epsilon^2}$, $X=\frac{\sum_{i=1}^n X_i}{n}$ with expectation 1 satisfies 
\[
\Pr[|X-1| \ge \eps] \le 2 \exp(-\frac{\eps^2}{3} \cdot \frac{n}{R}).
\]
\end{lemma}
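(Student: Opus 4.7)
The plan is to reduce the statement to the standard multiplicative Chernoff bound for sums of independent $[0,1]$-valued random variables by a rescaling argument. Define $Y_i := X_i/R$, so that each $Y_i \in [0,1]$ is independent with $\E[Y_i] = 1/R$, and the sum $S_n := \sum_{i=1}^n Y_i$ has mean $\mu := n/R$. The event $|X - 1| \ge \eps$ is identical to $|S_n - \mu| \ge \eps \mu$, since $X = \frac{R}{n} S_n$.

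Next I would invoke the textbook Chernoff bounds for bounded independent random variables with mean $\mu$: for any $0 < \eps < 1/2$,
\[
\Pr[S_n \ge (1+\eps)\mu] \le \exp\!\left(-\frac{\eps^2 \mu}{3}\right), \qquad \Pr[S_n \le (1-\eps)\mu] \le \exp\!\left(-\frac{\eps^2 \mu}{2}\right).
\]
A union bound then gives $\Pr[|S_n - \mu| \ge \eps \mu] \le 2\exp(-\eps^2 \mu/3)$. Substituting $\mu = n/R$ yields the claimed bound $\Pr[|X-1|\ge \eps] \le 2\exp\bigl(-\tfrac{\eps^2}{3}\cdot \tfrac{n}{R}\bigr)$. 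The condition $n \gtrsim R/\eps^2$ is only needed to ensure that the right-hand side is meaningfully small (i.e., bounded away from $2$); the inequality itself holds for all $n$.

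The standard multiplicative Chernoff bound in turn follows from a Markov-bound argument on the moment generating function: for $\lambda > 0$, one uses $\E[e^{\lambda Y_i}] \le \exp\!\bigl((e^\lambda - 1)\E[Y_i]\bigr)$, which holds because $e^{\lambda y} \le 1 + (e^\lambda - 1) y$ for $y \in [0,1]$; taking products across independent $Y_i$, optimizing $\lambda$, and applying $e^\lambda - 1 - \lambda(1+\eps) \le -\eps^2/3$ for $0<\eps<1/2$ yields the upper tail, and the symmetric calculation with $\lambda < 0$ yields the lower tail.

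There is essentially no obstacle: the only subtlety is the rescaling, and the $1/3$ constant is the standard bound that accommodates both tails simultaneously for $\eps < 1/2$. Since this is a preliminaries lemma being used as a black box, the proof can be stated as a one-line reduction to the classical Chernoff bound (e.g., Theorem~4.4--4.5 of Mitzenmacher--Upfal) applied to the rescaled variables $Y_i = X_i/R$.
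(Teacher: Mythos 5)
Your derivation is correct. The paper itself does not prove this lemma; it is stated in the preliminaries and attributed to \cite{chernoff1952}, to be used as a black box. Your rescaling $Y_i = X_i/R$, the observation that $\{|X-1|\ge\eps\}$ coincides with $\{|S_n-\mu|\ge\eps\mu\}$ for $\mu=n/R$, and the appeal to the standard multiplicative Chernoff bounds $\Pr[S_n\ge(1+\eps)\mu]\le\exp(-\eps^2\mu/3)$ and $\Pr[S_n\le(1-\eps)\mu]\le\exp(-\eps^2\mu/2)$ followed by a union bound is exactly the canonical argument one would give; your remark that the hypothesis $n\gtrsim R/\eps^2$ merely makes the bound nontrivial rather than being needed for its validity is also accurate.
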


\section{Proof Overview}\label{sec:overview}
We first outline the proofs of Lemma~\ref{lem:effect_H_k_sparse} and Theorem~\ref{thm:general_bound}. Then we show the proof sketch of $R=\tilde{O}(k^3)$ and $S=\tilde{O}(k^2)$ of $k$-Fourier-sparse signals.

\paragraph{The filter functions $(H,\wh{H})$ in Lemma~\ref{lem:effect_H_k_sparse}.} Ideally, to satisfy the two claims in Lemma~\ref{lem:effect_H_k_sparse}, we could set $H(t)$ to be the box function $2 \rect_2(t)$ on $[-1,1]$. However, by the uncertainty principle, it is impossible to make its Fourier transform $\wh{H}$ compact using such an $H(t)$. Hence our construction of $(H,\wh{H})$ is in the inverse direction: we build $\wh{H}(f)$ by box functions and $H(t)$ by the Fourier transform of box functions --- the sinc function. In the rest of this discussion, we focus on using the sinc function to prove Lemma~\ref{lem:effect_H_k_sparse} given the properties of $g$ in Theorem~\ref{thm:general_bound}.

We first notice that any $H$ with the following two properties is effective in Lemma~\ref{lem:effect_H_k_sparse} for $g$ satisfying $|g(t)|^2 \le R \cdot \|g\|_2^2$ for any $|t| \le 1$ and $|g(t)|^2 \le \poly(R) \|g\|^2_2 \cdot |t|^S$ for $|t|>1$:
\begin{enumerate}
\item $H(t) = 1 \pm 0.01$ for any $t \in [-1+\frac{1}{C \cdot R},1-\frac{1}{C \cdot R}]$ of a large constant $C$. This shows
$$\int_{-1}^1 |H(t) \cdot g(t)|^2 \mathrm{d} t \ge 0.99^2 \int_{-1+\frac{1}{C \cdot R}}^{1-\frac{1}{C \cdot R}} |g(t)|^2 \mathrm{d} t.
$$
Because $|g(t)|^2 \le R \cdot \|g\|_2^2$ for any $t \in [-1,1] \setminus [-1+\frac{1}{C \cdot R},1-\frac{1}{C \cdot R}]$, the constant on the R.H.S. is at least $0.99^2 \cdot (1-\frac{1}{C}) \ge 0.9$, which implies the first claim of Lemma~\ref{lem:effect_H_k_sparse}.
\item $H(t)$ declines to $\frac{1}{\poly(R) \cdot t^{2S}}$ for any $|t|>1$. This shows
$$
\int_{1}^{\infty} |H(t) \cdot g(t)|^2 \mathrm{d} t \le 0.01 \int_{-1}^1 |g(t)|^2 \mathrm{d} t,
$$ which implies the second claim.
\end{enumerate}
 
For ease of exposition, we start with $S=0$. We plan to design a filter $H_0(t)$ with compact $\wh{H_0}$ dropping from $0.99$ at $t=1-\frac{1}{C \cdot R}$ to $\frac{1}{\poly(R)}$ at $t=1$ in a small range $\frac{1}{C R}$ using the sinc function. To apply the sinc function, we notice that 
$$
\sinc(C R \cdot t)^{O(\log R)}=\left( \frac{\sin(\pi C R \cdot t)}{\pi C R \cdot t} \right)^{O(\log R)}
$$ decays from 1 at $t=0$ to $1/\poly(R)$ at $t = \frac{1}{C \cdot R}$, which matches the dropping of $H_0(t)$ from $t=1-\frac{1}{C \cdot R}$  to $t=1$. 

Then, to make $H(t) \approx 1$ for any $|t|\le 1-\frac{1}{C \cdot R}$, let us consider a convolution of $\rect_{1}(t)$ and $\sinc(C R \cdot t)^{O(\log R)}$. Because most of the mass of the latter is in $[-\frac{1}{C R},\frac{1}{C R}]$, this convolution keeps almost the same value in $[-\frac{1}{2}+\frac{1}{C R},\frac{1}{2}-\frac{1}{C R}]$ and drops down to $1/\poly(R)$ at $t=\frac{1}{2}+\frac{1}{C R}$. At the same time, it will keep the compactness of $\wh{H_0}$ since it corresponds to the dot product on the Fourier domain.  By normalizing and scaling, this gives the desired $(H_0,\wh{H_0})$ for $S=0$.

Next we describe the construction of $S>0$. The high level idea is to consider the decays of $H(t)$ in $\log_2 S + O(1)$ segments rather than one segment of $S=0$: 
$$
(1-\frac{1}{C R}, 1], (1, 1+\frac{1}{S}], (1+\frac{1}{S}, 1+\frac{2}{S}], \ldots, (1+\frac{2^j}{S}, 1+\frac{2^{j+1}}{S}], \ldots, (1+\frac{S/2}{S}, 2], (2, +\infty).
$$
For each segment, we provide a power of sinc functions matching its decay in $H(t)$ like the construction of $H_0$ on $(1-\frac{1}{C R}, 1]$. The final construction is the convolution of the dot product of all sinc powers and a box function, which appears in Section~\ref{sec:filter_func}.

\paragraph{The Algorithm of Theorem~\ref{thm:general_bound}.} Now we show
how to estimate $f_0$ given the observable signal $y=g+\eta$ where
$\supp(\wh{g}) \subseteq [f_0 - \Delta, f_0+\Delta]$ and
$\|\eta\|_2^2 \le \eps \|g\|_2^2$ (with $\ell_2$ norm taken over
$[-T, T]$ defined in \eqref{eq:def_l2}). We instead consider $y_H(t)=y(t) \cdot H(t)$ with the
filter function $(H,\wh{H})$ from Lemma~\ref{lem:effect_H_k_sparse}
and the corresponding dot products $g_H=g \cdot H$ and
$\eta_H=\eta \cdot H$. The starting point is that for a sufficiently
small $\beta$, we expect
\[
  y_H(t+\beta) \approx e^{2 \pi \bi f_0 \beta} \cdot y_H(t)
\]
because $y_H$ has Fourier spectrum concentrated around $f_0$.  This
does not hold for \emph{all} $t$, but it does hold on average:
\begin{align}
\int_{-1}^1 |y_H(t+\beta) - e^{2 \pi \bi f_0 \beta} \cdot y_H(t)|^2
\mathrm{d} t \lesssim \eps \cdot \int_{-1}^1 |y_H(t)|^2 \mathrm{d} t.\label{eq:shift}
\end{align}
This is because we can use Parseval's identity to replace these
integrals by an integral over Fourier domain---Parseval's identity
would apply if the integrals were from $-\infty$ to $\infty$, but
because of the filter function $H$, relatively little mass in $y_H$
lies outside $[-1, 1]$.  Then, the Fourier transform of the term
inside the left square is
$e^{2 \pi \bi f \beta} \cdot \wh{y_H}(f) - e^{2 \pi \bi f_0 \beta}
\cdot \wh{y_H}(f)$.  Note that $\wh{y_H}=\wh{g_H}+\wh{\eta_H}$ has
most of its $\ell_2$ mass in
$\supp(g_H) \subseteq [f_0-\Delta',f_0+\Delta']$ for
$\Delta'=\Delta+|\supp(\wh{H})|$, and every such frequency shrinks in
the left by a factor
$|e^{2 \pi \bi f \beta} - e^{2 \pi \bi f_0 \beta}| = O(\beta \Delta')$.  Thus, for
$\beta \ll 1/\Delta'$,~\eqref{eq:shift} holds.

To learn $f_0$ through $e^{2 \pi \bi f_0 \beta}$, we design a sampling procedure to output $\alpha$ satisfying 
$$
|y_H(\alpha+\beta) - e^{2 \pi \bi f_0 \beta} y_H(\alpha)| \le 0.3 \cdot y_H(\alpha) \text{ with probability \emph{more than half}}.
$$ 
Even though the above discussion shows the left hand side is smaller
than the R.H.S. on average, a uniformly random $\alpha \sim [-1,1]$
may not satisfy it with good probability:
$|y_H(\alpha)| \ge \|y_H\|_2$ may be only true for $1/R$ fraction of
$\alpha \in [-1,1]$, while the corruption by adversarial noise $\eta$ has $\|\eta\|_2^2 \gtrsim \eps \|y_H\|_2^2$ for a constant
$\eps \gg 1/R$. At the same time, even for many points
$\alpha_1,\ldots,\alpha_m$ where some of them satisfy the above
inequality, it is infeasible to verify such an $\alpha_i$ given $f_0$
is unknown. We provide a solution by adopting the importance sampling:
for $m=O(R)$ random samples $\alpha_1,\ldots,\alpha_m \in [-1,1]$, we
output $\alpha$ with probability proportional to the weight
$|y_H(\alpha_i)|^2$.

We prove the correctness of this sampling procedure in Lemma~\ref{lem:good_sampling} in Section~\ref{sec:freq_estimation}.

Finally, learning $e^{2\pi \mathbf{i} f_0 \beta}$ is not enough to
learn $f_0$: because of the noise, we only learn
$e^{2\pi \mathbf{i} f_0 \beta}$ to within a constant $\eps$, which
gives $f_0$ to within $\pm O(\eps/\beta)$; and because of the
different branches of the complex logarithm, this is only up to
integer multiples of $1/\beta$.  Therefore to fully learn $f_0$, we
repeat the sampling procedure at logarithmically many different scales
of $\beta$, from $\beta=1/2F$ to $\beta = \frac{\Theta(1)}{\Delta'}$.

\paragraph{$k$-Fourier-sparse signals.} Finally, we show $R=\wt{O}(k^3)$ and $S=\wt{O}(k^2)$ such that for any $g(t)=\sum_{j=1}^k v_j \cdot e^{2 \pi \bi f_j t}$ --- not necessarily one with the $f_j$ clustered together---
$$
\frac{\underset{t \in [-1,1]}{\sup} |g(t)|^2}{\|g\|_2^2} \le R \text{ and } |g(t)|^2 \le \poly(R) \cdot \|g\|_2^2 \cdot |t|^S.
$$

We first review the previous argument of $R=\wt{O}(k^4)$ \cite{CKPS17}. The key point is to show for some $d=\wt{O}(k^2)$ that $g(1)$ is a linear combination of $g(1-\theta),\ldots,g(1- d \cdot \theta)$ using bounded integer coefficients $c_1,\ldots,c_d=O(1)$ for any $\theta \le \frac{2}{d}$.  Then
\begin{equation}\label{eq:comb_g}
g(1)=\sum_{j \in [d]} c_j \cdot g(1-j \cdot \theta) \text{ implies } |g(1)|^2 \le (\sum_{j \in [d]} |c_j|^2) \cdot (\sum_{j \in [d]} |g(1-j \cdot \theta)|^2 ).
\end{equation}

If we think of $g(1)$ as the supremum and 
$g(1-j \cdot \theta)$ as the average $\|g\|_2$---which we can formally
do up to logarithmic factors by averaging over $\theta$---this shows
$|g(1)|^2 \le \wt{O}(d^2) \|g\|_2^2$. One natural idea to improve it
is to use a smaller value $d$ and a shorter linear combination
\cite{CP18}. However, $d=\tilde{\Omega}(k^2)$ for such a combination
when $g$ is approximately the degree $k-1$ Chebyshev polynomial. In
this work, we use a geometric sequence to control $c_j$ such that
$\sum_j |c_j|^2 = O(d/k)$ instead of $O(d)$, which provides an
improvement of a factor $\wt{O}(k)$ on $R$.

Then we bound $S=\wt{O}(k^2)$ for $g(t)$ at $|t|>1$. The intuition is that given \eqref{eq:comb_g} holds for any $g(t)$ in terms of $g(t-\theta),\ldots,g(t-d \cdot \theta)$ with $\theta=\frac{2}{d}$, it implies $|g(t)|^2 \le \poly(k) \cdot \|g\|_2^2 \cdot e^{(t-1) \cdot O(d)}$ for $t > 1$. Combining this with an alternate bound $|g(t)|^2 \le \poly(k) \cdot \|g\|_2^2 \cdot (k \cdot t)^{O(k)}$ for $t > 1 + 1/k$, it completes the proof of Theorem~\ref{thm:bound_growth_around1} about $S$.

Finally we notice that we could improve the sample complexity in Theorem~\ref{thm:sparse_freq_estimation} to $\wt{O}(k) \log \frac{F}{\Delta'}$ using a biased distribution~\cite{CP18} to generate $\alpha$. These results about $k$-Fourier-sparse signals appear in Section~\ref{sec:condition_num_growth}.

\section{Our Filter Function}\label{sec:filter_func}
The main result is an explicit filter function $H$ with compact support $\wh{H}$ that  is close to the box function on $[-1,1]$ for any $g$ satisfying the conditions in Theorem~\ref{thm:general_bound}.

We show our filter function as follows. 
\begin{definition}\label{def:filter_func_general}
Given $R$, the growth rate $S$ and an even constant $C$, we define the filter function
$$
H(t)= s_0 \cdot \left( \sinc(C R \cdot t)^{C \log R} \cdot \sinc\big(C \cdot S \cdot t\big)^{C} \cdot \sinc\big(\frac{C \cdot S}{2} \cdot t\big)^{2C} \cdots \sinc\big(C \cdot t \big)^{C \cdot S} \right)*\rect_2(t)
$$ where $s_0 \in \mathbb{R}^+$ is a parameter to normalize $H(0)=1$. On the other hand, its Fourier transform is
$$
\wh{H}(f) = s_0 \cdot \left( \rect_{C R}(f)^{*C \log R} * \rect_{C \cdot S}(f)^{*C} * \rect_{\frac{C \cdot S}{2}}(f)^{*2C} * \cdots * \rect_{C}(f)^{*C S} \right) \cdot \sinc(2t),
$$
whose support size is $O(CR \cdot C \log R + C S \cdot C + \cdots + C \cdot C \cdot S)=O(R \log R + S \log S)$.
\end{definition}


We prove Lemma~\ref{lem:effect_H_k_sparse} using $H(\alpha x)$ with a large constant $C$ and a scale parameter $\alpha=\frac{1}{2}+\frac{1.2}{\pi C R}$. For convenience, we state the full version of Lemma~\ref{lem:effect_H_k_sparse} for $T=1$ as follows.

\begin{theorem}\label{thm:HwithScaling}
  Let $R, S > 0$, let $C$ be a large even constant, and define
  $\alpha=(\frac{1}{2}+\frac{1.2}{\pi C R})$. Consider any function
  $g$ satisfying the following two conditions:
\begin{enumerate}
\item $\underset{t \in [-1,1]}{\sup} |g(t)|^2 \le R \cdot \|g\|_2^2 $
\item And $|g(t)|^2 \le \poly(R) \cdot \|g\|^2_2 \cdot |t|^{S}$ for $t \notin [-1,1]$, 
\end{enumerate}
Then the filter function $H\big(\alpha x \big)$ is such that
$H\big( \alpha x \big) \cdot g(x)$ satisfies
\begin{enumerate}
\item $\int_{-1}^1 |g(x) \cdot H\big( \alpha x \big)|^2 \mathrm{d} x \ge 0.9 \int_{-1}^1 |g(x)|^2 \mathrm{d} x$.
\item $\int_{-1}^1 |g(x) \cdot H\big( \alpha x \big)|^2 \mathrm{d} x \ge 0.95 \int_{-\infty}^{\infty} |g(x) \cdot H\big( \alpha x \big)|^2 \mathrm{d} x.$
\item $|H(x)| \le 1.01$ for any $x$.
\end{enumerate}
\end{theorem}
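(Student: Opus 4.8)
\emph{Plan.} The idea is to reduce all three claims to elementary estimates on the explicit kernel. Write $P(t)$ for the product of sinc powers in Definition~\ref{def:filter_func_general}, so that $H=s_0\cdot(P*\rect_2)$; since $\widehat{\rect_2}(f)=\sinc(2f)$ the stated Fourier-side formula holds, with $\big|\supp(\wh H)\big|=O(R\log R+S\log S)$ by summing the supports of the convolution factors $\rect_{CR}^{*C\log R},\rect_{CS}^{*C},\dots,\rect_{C}^{*CS}$. The key structural facts are: (i) because $C$ is even, every exponent ($C\log R$, $C$, $2C$, \dots, $CS$) is even, so $0\le P(t)$ pointwise, and $P$ is even; (ii) $(P*\rect_2)(t)=\tfrac12\int_{t-1}^{t+1}P$, so after normalizing $H(0)=1$ we get $s_0=2/\!\int_{-1}^{1}P$ and
\[
H(t)=\frac{\int_{t-1}^{t+1}P(u)\,\md u}{\int_{-1}^{1}P(u)\,\md u}\ge 0.
\]
This already yields the third claim: $H(t)\le \int_{\R}P\big/\!\int_{-1}^{1}P=1+\int_{|u|>1}P\big/\!\int_{-1}^{1}P$, and the ratio on the right is below $0.01$ because the polynomial tails of $P$ beyond $[-1,1]$ (bounded using $|\sinc(x)|\le\tfrac1{\pi|x|}$ on the first factor) are negligible next to the bulk mass $\int_{-1}^{1}P\ge\tfrac1{\poly(R,S)}$.

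For the first two claims I would isolate two estimates on $H$, both proved directly from the kernel. \emph{Plateau:} $H(t)\ge 0.99$ whenever $|t|\le 1-\Theta(\tfrac1{CR})$; equivalently, for such $t$ the window $[t-1,t+1]$ contains all but a $0.01$ fraction of the mass of $P$, which follows from $0\le P\le\sinc(CRt)^{C\log R}$ and the concentration of $\sinc(v)^{C\log R}$ (Gaussian-type core $e^{-\Theta(C\log R\,v^2)}$ near $0$, tail $\le(\pi|v|)^{-C\log R}$). \emph{Tail:} $H(\alpha t)\le \tfrac1{\poly(R)\,|t|^{2S}}$ for all $|t|\ge 1$. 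Here, since $\alpha t-1\ge\alpha-1=\Theta(\tfrac1{CR})>0$, the whole integration window lies in $u>0$, where applying $|\sinc(x)|\le\tfrac1{\pi|x|}$ to the relevant factors gives
\[
0\le P(u)\le\min\Big(1,\;(\pi CRu)^{-C\log R},\;(\pi Cu)^{-CS}\Big),
\]
with the dyadic factors $\sinc(\tfrac{CS}{2^j}t)^{2^jC}$ providing the analogous bound $(\pi\tfrac{CS}{2^j}u)^{-2^jC}$ near $u\approx 1+\tfrac{2^j}{S}$. For $u<1$ the target $\tfrac1{\poly(R)u^{2S}}$ is large and the first-factor bound is more than enough; for $u\ge 1$ one uses $(\pi CRu)^{-C\log R}$ when $S\lesssim C\log R$ and $(\pi Cu)^{-CS}$ otherwise (the dyadic factors interpolating), each of which dominates $u^{2S}$ with a spare $\poly(R)$ factor once $C$ is chosen large relative to the exponent hidden in the hypothesis's $\poly(R)$.

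Granting these, the first two claims follow. \emph{Claim 1:} $|H(\alpha x)|\ge 0.99$ for all $x\in[-1,1]$ except on two boundary strips of total length $O(\tfrac1{CR})$ — this is where the scale $\alpha$ enters, positioning the smoothed box $H(\alpha\cdot)$ so that its plateau already covers $[-1+\tfrac{O(1)}{CR},\,1-\tfrac{O(1)}{CR}]$ while the tail decay has set in by $|x|=1$ — and on those strips $|g|^2\le R\|g\|_2^2$ by hypothesis~(1), so $\int_{\text{strips}}|g|^2\le O(\tfrac1C)\int_{-1}^1|g|^2$, whence $\int_{-1}^1|g\cdot H(\alpha\cdot)|^2\ge 0.99^2\big(1-O(\tfrac1C)\big)\int_{-1}^1|g|^2\ge 0.9\int_{-1}^1|g|^2$ for $C$ a large constant. \emph{Claim 2:} by hypothesis~(2) and the tail estimate, $\int_{|x|\ge1}|g\cdot H(\alpha\cdot)|^2\le\poly(R)\|g\|_2^2\int_{|x|\ge1}\tfrac{|x|^S}{\poly(R)\,|x|^{4S}}\,\md x=O(\tfrac1{\poly(R)})\|g\|_2^2\le 0.01\|g\|_2^2$, while Claim 1 gives $\int_{-1}^1|g\cdot H(\alpha\cdot)|^2\ge 0.9\int_{-1}^1|g|^2=1.8\|g\|_2^2$; combining the two yields $\int_{-1}^1|g\cdot H(\alpha\cdot)|^2\ge 0.95\int_{\R}|g\cdot H(\alpha\cdot)|^2$.

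The hard part is the tail estimate. One must verify the per-scale decay uniformly over all regimes of $u$ and all relationships among $R$, $S$, $\log R$, and $C$ — most delicately on the first lobe $u\approx\tfrac1{CR}$, where the first factor is only polynomially (not super-polynomially) small, forcing $C$ to be taken large compared with the exponent in the hypothesis's $\poly(R)$ — and then check that summing the resulting bounds over the $O(\log S)$ dyadic scales still converges comfortably. One also has to confirm that convolving with $\rect_2$, which averages $P$ over a window of width $2$, does not spoil the per-scale bounds; this is easy, since that window is vastly wider than the region where $P$'s mass lives.
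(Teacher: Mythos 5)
The overall strategy here is the right one and is essentially what the paper does: bound the sinc-product kernel pointwise, deduce a plateau and a tail for the convolved $H$, then combine the plateau with hypothesis~(1) for Claim~1 and the tail with hypothesis~(2) for Claim~2. Claim~3 via nonnegativity of $P$ (all exponents even) is also fine. The problem is an arithmetic/scale error in the tail estimate, and it is fatal as written.

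You take $\rect_2$ literally, so your $H(t)=\int_{t-1}^{t+1}P\big/\!\int_{-1}^{1}P$, i.e.\ the averaging window has half-width $1$. With this window you correctly deduce the plateau $H(t)\ge 0.99$ for $|t|\le 1-\Theta(1/(CR))$ (the window $[t-1,t+1]$ then still contains $0$ and hence almost all of $P$'s mass). But your tail estimate begins ``since $\alpha t-1\ge\alpha-1=\Theta(1/(CR))>0$\dots''. That inequality is false: by the theorem's definition $\alpha=\tfrac12+\tfrac{1.2}{\pi CR}$, so $\alpha-1\approx-\tfrac12<0$. Consequently for $|t|\ge 1$ the window $[\alpha t-1,\alpha t+1]$ still reaches below $0$ (indeed $\alpha t-1\ge -\tfrac12$ but is negative for $1\le t\lesssim 2$), so it still contains the bulk of $P$'s mass and $H(\alpha t)\approx 1$, not $1/(\poly(R)\,|t|^{2S})$. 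In fact your two displayed ingredients directly contradict each other: the plateau says $H(u)\ge 0.99$ for $|u|\le 1-\Theta(1/(CR))$, while the tail says $H(\alpha)\le 1/\poly(R)$ — but $\alpha\approx\tfrac12<1-\Theta(1/(CR))$. Under your interpretation, $H(\alpha x)$ is close to $1$ on roughly $|x|\lesssim 2$, and then $\int_{1<|x|<2}|g\cdot H(\alpha\cdot)|^2$ can be as large as $\poly(R)2^{S}\|g\|_2^2$ by hypothesis~(2), which destroys Claim~2.

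The fix, which the paper uses (despite writing $\rect_2$ in Definition~\ref{def:filter_func_general}), is that the smoothing window must have half-width $\tfrac12$: $H(t)\propto\int_{t-1/2}^{t+1/2}P$, as stated explicitly in the paper's Lemma~\ref{lem:bounds_H} (``$H(t)=s_0\int_{t-1/2}^{t+1/2}g$''). Then the plateau of $H$ is $|t|\le\tfrac12-\tfrac{1.2}{\pi CR}$, and for $|t|\ge 1$ the point $\alpha t$ satisfies $\alpha t-\tfrac12\ge\alpha-\tfrac12=\tfrac{1.2}{\pi CR}>0$, so the window $[\alpha t-\tfrac12,\alpha t+\tfrac12]$ lies entirely in $u>\tfrac{1.2}{\pi CR}$, past the bulk, and the per-scale sinc bounds then give the needed decay. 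Once you replace the window half-width by $\tfrac12$ (equivalently, read the convolution as $\rect_1$, or rescale the $\rect$ so its support has length $1$), the rest of your computation goes through and matches the paper. As written, though, the tail bound — and with it Claim~2 — is unsupported.
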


\begin{version:full}
For completeness, we show a few properties of $H$ and finish the proof of Theorem~\ref{thm:HwithScaling} in Appendix~\ref{appen:filter}.
\end{version:full}

\begin{version:short}
Due to the space constraint, we defer the proof of Theorem~\ref{thm:HwithScaling} to the full version.
\end{version:short}


\section{Frequency Estimation}\label{sec:freq_estimation}
We show the algorithm for frequency estimation and prove Theorem~\ref{thm:general_bound} in this section. We fix $T=1$ and use the definition $\|h\|_2^2 = \underset{x \sim [-1,1]}{\E}[|h(x)|^2]$ to restate the theorem.

\begin{theorem}\label{thm:rescale_general}
Given any $F>0, \Delta>0,R$, and $S>0$, let $g(t)$ be a signal with the following properties:
\begin{enumerate}
\item $\supp(\wh{g}) \subseteq [f_0-\Delta,f_0 + \Delta]$ where $f_0 \in [-F,F]$.
\item $\underset{t \in [-1,1]}{\sup} \big[ |g(t)|^2 \big] \le R \cdot \|g\|_2^2$. 
\item $|g(t)|^2$ grows as at most $\poly(R) \cdot \|g\|^2_2 \cdot |t|^{S}$ for $t \notin [-1,1]$. 
\end{enumerate}   
Let $y(t)=g(t)+\eta(t)$ be the observable signal on $[-1,1]$, where $\|\eta\|_2^2 \le \epsilon \cdot \|g\|_2^2$ for a sufficiently small constant $\epsilon$. For $\Delta'=\Delta + \wt{O}(R + S)$ and any $\delta$, there exists an efficient algorithm that takes $O(R \log \frac{F}{\Delta' \cdot \delta})$ samples from $y(t)$ and outputs $\wt{f}$ satisfying $|f_0-\wt{f}| \le O(\Delta')$ with probability at least $1-\delta$.
\end{theorem}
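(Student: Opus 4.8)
The plan is to turn the averaged-shift heuristic of Section~\ref{sec:overview} into an algorithm: use the filter $H$ of Theorem~\ref{thm:HwithScaling} to make the signal essentially band-limited and tail-free, then for a geometric sequence of small shifts $\beta$ read off the phase $e^{2\pi\bi f_0\beta}$ via importance sampling, each scale refining the estimate of $f_0$ by a factor two. After rescaling to $T=1$, fix the explicit filter $H$ of Theorem~\ref{thm:HwithScaling} with scale $\alpha=\tfrac{1}{2}+\tfrac{1.2}{\pi C R}$, so $|\supp(\wh H)|=O(R\log R+S\log S)=\wt{O}(R+S)$; write $g_H(t)=g(t)H(\alpha t)$, $\eta_H(t)=\eta(t)H(\alpha t)$ (extending $\eta$ by $0$ off $[-1,1]$), and $y_H=g_H+\eta_H$. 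Then $\supp(\wh{g_H})=\supp(\wh g)+\supp(\wh H)\subseteq[f_0-\Delta',f_0+\Delta']$ for $\Delta'=\Delta+\wt{O}(R+S)$, and conclusions 1--3 of Theorem~\ref{thm:HwithScaling} give $|H|\le 1.01$, $\|y_H\|_2^2=\Theta(\|g\|_2^2)$, and that only a small fraction of $\int_{\R}|y_H|^2$ lies outside $[-1,1]$. The first quantitative ingredient is the averaged shift bound: for every $\beta$ at most a small enough constant times $1/\Delta'$,
\[
\int_{-1}^{1}\big|y_H(t+\beta)-e^{2\pi\bi f_0\beta}\,y_H(t)\big|^2\,\mathrm{d} t\ \le\ O\big(\epsilon+(\beta\Delta')^2\big)\,\|y_H\|_2^2\ \le\ \text{(small const)}\cdot\|y_H\|_2^2 .
\]
This follows by extending the integral to $\R$ (losing only a constant, by conclusion 2), applying Parseval, and splitting $\wh{y_H}=\wh{g_H}+\wh{\eta_H}$: on $[f_0-\Delta',f_0+\Delta']$ the factor $e^{2\pi\bi f\beta}-e^{2\pi\bi f_0\beta}$ has magnitude $O(\beta\Delta')$, while off this interval $\wh{y_H}=\wh{\eta_H}$ with $\|\eta_H\|_2^2\le\epsilon\|g\|_2^2=O(\epsilon)\|y_H\|_2^2$.

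The core step, proved as the sampling lemma in Section~\ref{sec:freq_estimation}, is to output an $\alpha$ with $\big|y_H(\alpha+\beta)-e^{2\pi\bi f_0\beta}y_H(\alpha)\big|\le 0.3\,|y_H(\alpha)|$ with probability exceeding $1/2$. Let $z(t)=y_H(t+\beta)-e^{2\pi\bi f_0\beta}y_H(t)$ and $B=\{\alpha:|z(\alpha)|>0.3\,|y_H(\alpha)|\}$. On $B$ we have $|y_H(\alpha)|^2<|z(\alpha)|^2/0.09$, so $\int_B|y_H|^2\le O\big(\int|z|^2\big)$ is a tiny fraction of $\|y_H\|_2^2$: a draw from the density proportional to $|y_H|^2$ avoids $B$ with probability $>1/2$. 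Since that density cannot be integrated in closed form, the algorithm instead draws $m=O(R)$ uniform points $\alpha_1,\dots,\alpha_m$ (restricted so $\alpha_i+\beta\in[-1,1]$) and returns $\alpha_i$ with probability proportional to $|y_H(\alpha_i)|^2$; this is executable because $H$ is explicit and efficiently evaluable. The choice $m=\Theta(R)$ is precisely what makes the empirical picture faithful: applying the Chernoff bound (Lemma~\ref{lem:chernoff_bound}) to the variables $|g_H(\alpha_i)|^2/\|g\|_2^2\in[0,O(R)]$ with mean $\Theta(1)$ shows $\sum_i|y_H(\alpha_i)|^2=\Theta(m\|g\|_2^2)$ with high probability, while an $\ell_2$-bounded $\eta$ cannot place a heavy atom on the uniform sample (Markov on $\sum_i|\eta_H(\alpha_i)|^2$ and $\sum_i|z(\alpha_i)|^2$), so the $B$-weight among the $\alpha_i$ stays a tiny fraction of $\sum_i|y_H(\alpha_i)|^2$; hence the returned $\alpha\notin B$ with probability $>1/2$. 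For such $\alpha$, the observable ratio $w=y_H(\alpha+\beta)/y_H(\alpha)$ satisfies $|w-e^{2\pi\bi f_0\beta}|\le 0.3$, so $\arg w$ pins down $\beta f_0\bmod 1$ to within a fixed small constant $\kappa<1/4$.

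Finally I bootstrap across scales. Run the subroutine at $\beta_0>\beta_1>\dots>\beta_L$ with $\beta_0=\Theta(1/F)$ small enough that $\beta_0 f_0\in(-\tfrac{1}{2},\tfrac{1}{2})$ (no wraparound, so stage $0$ directly gives $f_0$ to error $e_0=\kappa/\beta_0=\Theta(F)$), $\beta_{j+1}=\beta_j/2$, down to $\beta_L=\Theta(1/\Delta')$, so $L=O(\log(F/\Delta'))$ (and $L=O(1)$ in the degenerate regime $\Delta'\gtrsim F$, where outputting $0$ already meets the bound). Inductively, if stage $j$ gives $f_0$ to within $e_j=\kappa/\beta_j$, then the prior knowledge localizes $\beta_{j+1}f_0$ to an arc of length $2\beta_{j+1}e_j=\kappa<\tfrac{1}{2}$, which is consistent with a unique lift of the stage-$(j{+}1)$ measurement of $\beta_{j+1}f_0\bmod 1$; this yields $f_0$ to within $\kappa/\beta_{j+1}=e_j/2$, so after stage $L$ the error is $O(\Delta')$. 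To control the failure probability, repeat each stage $O(\log\tfrac{L}{\delta})$ times and take the value around which a strict majority of the phase estimates cluster to within $\kappa$ (which, when a majority are good, is within $2\kappa$ of the truth); a union bound over the $L$ stages gives overall success $1-\delta$, and amortizing the repetitions across stages keeps the total sample count at $O\big(R\log\tfrac{F}{\Delta'\delta}\big)$. Every operation only evaluates the explicit $H$ and does a bounded amount of arithmetic per sample, so the algorithm is efficient.

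The main obstacle is the sampling subroutine: showing that importance sampling provably beats a worst-case $\ell_2$-bounded $\eta$. Two points require care. First, the averaged shift bound must be pushed through Parseval even though $y_H$ is not band-limited; this is exactly the role of the tail control in Theorem~\ref{thm:HwithScaling}. Second, one must verify that $m=\Theta(R)$ uniform samples represent the $|y_H|^2$-distribution faithfully enough --- a spiky $\eta$ cannot help the adversary because its total energy is bounded, but one still has to rule out the signal's own mass concentrating on a set of Lebesgue measure $\ll 1/R$ that a smaller sample would miss, which is where the factor $\Theta(R)$ and Lemma~\ref{lem:chernoff_bound} enter. The remaining pieces --- the Parseval estimate, the $\beta$-doubling induction, and the probability amplification --- are routine.
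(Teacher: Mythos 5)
Your proposal follows the same route as the paper: filter with $H$ from Theorem~\ref{thm:HwithScaling}, prove an averaged shift bound via Parseval using the filter's tail control, output a good $\alpha$ by importance sampling on $m=O(R)$ uniform draws weighted by $|y_H(\alpha_i)|^2$ (the paper's Lemma~\ref{lem:good_sampling} and Algorithm~\ref{alg:freq_est}), and then bootstrap over geometrically spaced $\beta$. The only structural difference is that the paper black-boxes the last step by citing Lemma~7.3 of~\cite{CKPS17}, while you re-derive the binary-search-over-phases argument.

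That re-derivation, however, has the geometric sequence of $\beta$'s running the wrong way and is internally inconsistent. You write $\beta_0>\beta_1>\cdots>\beta_L$ with $\beta_{j+1}=\beta_j/2$, $\beta_0=\Theta(1/F)$, and ``down to'' $\beta_L=\Theta(1/\Delta')$ --- but since $\Delta'<F$ in the nondegenerate case, $1/\Delta'>1/F$, so $\beta_L$ would have to be \emph{larger} than $\beta_0$, contradicting the monotonicity you declared. Moreover your inductive claim ``$\kappa/\beta_{j+1}=e_j/2$'' is false under $\beta_{j+1}=\beta_j/2$: it gives $\kappa/\beta_{j+1}=2\kappa/\beta_j=2e_j$, i.e., the error \emph{doubles}. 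The fix is to reverse the direction: start at the small shift $\beta_0=\Theta(1/F)$ where no wraparound occurs, set $\beta_{j+1}=2\beta_j$ (larger shift, finer phase), end at $\beta_L=\Theta(1/\Delta')$; then $e_{j+1}=\kappa/\beta_{j+1}=e_j/2$ as intended, and the prior arc is $2\beta_{j+1}e_j=4\kappa$, which is resolvable for $\kappa$ a sufficiently small constant. Finally, your amplification (repeat each of $L=\Theta(\log\frac{F}{\Delta'})$ stages $O(\log\frac{L}{\delta})$ times and union-bound) yields a total of $O(L\log\frac{L}{\delta})$ subroutine calls, not the claimed $O(\log\frac{F}{\Delta'\delta})$; closing that $\log\log$ gap is exactly what the noisy-search bookkeeping of Lemma~7.3 in~\cite{CKPS17} handles, so either invoke it or replace the per-stage majority with the standard Chernoff argument over a single global batch of repeated stage executions. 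With the direction fixed, the remainder of the proposal matches the paper's argument.
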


For convenience, we set $h_H(t)=h(t) \cdot H(\alpha t)$ for any signal $h(t)$ with the filter function $H$ defined in Theorem~\ref{thm:HwithScaling} such that $y_H(t)=y(t) \cdot H(\alpha t)$. 

Given the observation $y(t)$ with most Fourier mass concentrated around $f_0$, the main technical result in this section is an estimation of $e^{2 \pi \i \beta f_0}$ through $y_H(\alpha) e^{2 \pi i f_0 \beta} \approx y_H(\alpha+\beta)$.

\begin{lemma}\label{lem:good_sampling}
Given parameters $F,R,S$,  and $\Delta$, let $g$ be a signal satisfying the three conditions in Theorem~\ref{thm:general_bound} for some $f_0 \in [-F,F]$ and $\Delta'=\Delta + O(R \log R + S \log S)$.
    
Let $y(t)=g(t)+\eta(t)$ be the observable signal on $[-1,1]$ where the noise $\|\eta\|^2_2 \le \epsilon \|g\|^2_2$ for a sufficiently small constant $\epsilon$. There exist a constant $\gamma$ and an algorithm such that for any $\beta \le \frac{\gamma}{\Delta'}$, it takes $O(R)$ samples to output $\alpha$ satisfying $|y_H(\alpha) e^{2 \pi i f_0 \beta}-y_H(\alpha+\beta)| \le 0.3 |y_H(\alpha)|$ with probability at least 0.6.
\end{lemma}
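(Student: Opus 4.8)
## Proof Proposal for Lemma~\ref{lem:good_sampling}

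\textbf{Overall strategy.} The plan is to first establish the ``on-average'' shift inequality and then convert it into a ``pointwise-with-good-probability'' statement via importance sampling. Concretely, define $z(t) = y_H(t+\beta) - e^{2\pi\i f_0\beta}\, y_H(t)$, extending $\eta$ by zero outside $[-1,1]$. I would first show that $\|z\|_2^2 \lesssim \eps\,\|g\|_2^2$ for $\beta \le \gamma/\Delta'$, using Parseval together with the two conclusions of Theorem~\ref{thm:HwithScaling}; then I would use Lemma~\ref{lem:chernoff_bound}-style concentration to argue that the importance-sampling procedure (draw $m = O(R)$ uniform points $\alpha_1,\dots,\alpha_m$, output $\alpha_i$ with probability $\propto |y_H(\alpha_i)|^2$) outputs an $\alpha$ with $|z(\alpha)| \le 0.3\,|y_H(\alpha)|$ with probability $\ge 0.6$.

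\textbf{Step 1: the shift inequality.} I would bound $\int_{-1}^1 |z(t)|^2\,\md t \le \int_{-\infty}^\infty |z(t)|^2\,\md t = \int_{-\infty}^\infty |\wh{z}(f)|^2\,\md f$ by Parseval, where $z = y_H(\cdot+\beta) - e^{2\pi\i f_0\beta} y_H$ has Fourier transform $\wh{z}(f) = (e^{2\pi\i f\beta} - e^{2\pi\i f_0\beta})\,\wh{y_H}(f)$. Writing $\wh{y_H} = \wh{g_H} + \wh{\eta_H}$ with $\supp(\wh{g_H}) \subseteq [f_0 - \Delta', f_0+\Delta']$ (since $\supp(\wh H(\alpha\,\cdot)) \subseteq [-(\Delta'-\Delta),\Delta'-\Delta]$ by Definition~\ref{def:filter_func_general}), I split the frequency integral. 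On $\supp(\wh{g_H})$, the multiplier satisfies $|e^{2\pi\i f\beta}-e^{2\pi\i f_0\beta}| = O(\beta\Delta') = O(\gamma)$, so that contribution is $O(\gamma^2)\int |\wh{y_H}|^2 \le O(\gamma^2)\int_{-\infty}^\infty |y_H|^2 \le O(\gamma^2)\cdot 2\int_{-1}^1|g|^2$, using conclusion~2 of Theorem~\ref{thm:HwithScaling}. Off $\supp(\wh{g_H})$ we have $\wh{y_H} = \wh{\eta_H}$, so that piece is $\le 4\int|\eta_H|^2 \le 4\cdot 1.01^2\int_{-1}^1|\eta|^2 \le O(\eps)\|g\|_2^2$ using conclusion~3 ($|H|\le 1.01$). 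Combining, $\|z\|_2^2 \le (O(\gamma^2) + O(\eps))\,\|g\|_2^2 \le c\,\|y_H\|_2^2$ for a small constant $c$, where I also use conclusion~1 to say $\|y_H\|_2^2 \gtrsim \|g_H\|_2^2 \gtrsim \|g\|_2^2$ (after subtracting the small $\|\eta_H\|_2$).

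\textbf{Step 2: importance sampling.} Let $W = \int_{-1}^1 |y_H(t)|^2\,\md t$. The importance-sampling output $\alpha$ has density proportional to $|y_H(t)|^2$, so $\E_\alpha\!\big[|z(\alpha)|^2/|y_H(\alpha)|^2\big] = \frac{1}{W}\int_{-1}^1 |z(t)|^2\,\md t \le c$. By Markov's inequality, $\Pr_\alpha[\,|z(\alpha)| > 0.3\,|y_H(\alpha)|\,] \le c/0.09 < 0.1$ for $c$ small enough, so the \emph{idealized} continuous importance sampler succeeds with probability $> 0.9$. To implement it with $m = O(R)$ discrete samples $\alpha_1,\dots,\alpha_m \sim [-1,1]$, I use condition~2 of Theorem~\ref{thm:general_bound}: $|y_H(t)|^2 \le |H|^2 |y(t)|^2 \lesssim R\,\|g\|_2^2 \lesssim R\cdot W$ pointwise on $[-1,1]$ (absorbing the noise and the $1.01$ factor), so the random variables $|y_H(\alpha_i)|^2 / W$ lie in $[0, O(R)]$ with mean $1$; Lemma~\ref{lem:chernoff_bound} with $m \gtrsim R$ gives $\sum_i |y_H(\alpha_i)|^2 \in [\frac12, \frac32]\,mW$ except with small probability. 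Likewise $\sum_i |z(\alpha_i)|^2$ concentrates around $m\int|z|^2 \le cmW$. On this good event, the empirical weighted failure probability $\sum_{i:\,|z(\alpha_i)|>0.3|y_H(\alpha_i)|} |y_H(\alpha_i)|^2 \big/ \sum_i |y_H(\alpha_i)|^2$ is still $O(c) < 0.3$, so a weighted draw succeeds with probability $\ge 0.7$; a union bound with the concentration failure probability leaves $\ge 0.6$ overall.

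\textbf{Main obstacle.} The delicate point is Step 1: pinning down that the $\ell_2$ mass of $\wh{y_H}$ \emph{outside} $\supp(\wh{g_H})$ is genuinely $O(\eps)\|g\|_2^2$ rather than merely $O(\eps)\|y_H\|_2^2$ with the wrong orientation — this needs conclusion~2 of Theorem~\ref{thm:HwithScaling} (the tail of $y_H$ is small) fed in the right direction, and simultaneously conclusion~1 to lower-bound $\|y_H\|_2$ by $\|g\|_2$ so that the final ratio $\|z\|_2^2 / \|y_H\|_2^2$ is a small constant. The interplay of the filter's two guarantees with the adversarial (not stochastic) noise is where the constants must be tracked carefully; everything after that is routine Markov plus Chernoff.
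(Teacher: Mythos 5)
Your Step 1 (the shift inequality $\|z\|_2^2 \lesssim (\gamma^2+\eps)\|g\|_2^2$ via Parseval, splitting the frequency integral on and off $\supp(\wh{g_H})$, and using the filter's tail/box/boundedness properties) matches the paper's argument almost verbatim, and your high-level Step 2 plan (importance sampling, then Markov on the weighted ratio) is also the paper's. However, there is a genuine gap in the concentration part of Step 2.

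You assert that $|y_H(t)|^2 \lesssim R\|g\|_2^2$ \emph{pointwise} on $[-1,1]$, citing condition~2 of Theorem~\ref{thm:general_bound} and ``absorbing the noise.'' This is false: condition~2 bounds $|g(t)|^2$, not $|y(t)|^2$, and the noise $\eta$ is only $\ell_2$-bounded on $[-1,1]$ --- it can spike arbitrarily at isolated points. Consequently the random variables $|y_H(\alpha_i)|^2/W$ do \emph{not} live in $[0,O(R)]$, and you cannot invoke the Chernoff bound of Lemma~\ref{lem:chernoff_bound} on them directly. The same objection applies to the claim that $\sum_i |z(\alpha_i)|^2$ ``concentrates'' --- $z$ contains $\eta_H$ and hence has no pointwise bound either. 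The paper avoids this by decomposing $y_H = g_H + \eta_H$: it applies Chernoff only to $\sum_i|g_H(\alpha_i)|^2$ (which is legitimately $O(R)$-bounded pointwise), uses \emph{Markov} --- not Chernoff --- to control $\sum_i|\eta_H(\alpha_i)|^2$ in expectation, combines the two via Cauchy--Schwarz to get the lower bound $\frac{1}{m}\sum_i|y_H(\alpha_i)|^2 \ge 0.8\|g\|_2^2$, and likewise uses only Markov on $\sum_i|z(\alpha_i)|^2$. Your proof would go through if you replaced the single Chernoff application on $|y_H|^2$ with this $g_H$/$\eta_H$ split; as written, the claimed pointwise bound is the weak link.

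Two smaller notes: the lower bound you need is on the \emph{denominator} $\sum_i|y_H(\alpha_i)|^2$, and Markov alone cannot give a lower bound, which is exactly why the $g_H$ part must be handled with Chernoff while only the $\eta_H$ part can afford to use Markov. Also, the final probability accounting (union over the two concentration events plus a conditional Markov step) should be spelled out to land at $0.6$; the paper's constants ($0.9$, $0.9$, conditional $0.8$) give exactly this.
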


We show our algorithm in Algorithm~\ref{alg:freq_est}. We finish the proof of Theorem~\ref{thm:sparse_freq_estimation} here and defer the proof of Lemma~\ref{lem:good_sampling} to Section~\ref{sec:proof_good_sampling}.

\begin{algorithm}
\caption{Obtain one good $\alpha$}\label{alg:freq_est}
\begin{algorithmic}[1]
\Procedure{\textsc{ObtainOneGoodSample}}{$R,y(t)$}
\State Let $m=C \cdot R$ for a large constant $C$.
\State Take $m$ random samples $x_1,\cdots,x_m$ uniform in $[-1,1]$.
\State Query $y(x_i)$ and compute $y_H(x_i)=y(x_i) \cdot H(x_i)$ for each $i$.
\State Set a distribution $D_m$ proportional to $|y_H(x_i)|^2$, i.e., $D_m(x_i)=\frac{|y_H(x_i)|^2}{\sum_{j=1}^m |y_H(x_j)|^2}$.
\State Output $\alpha \sim D_m$.
\EndProcedure
\end{algorithmic}
\end{algorithm}

\begin{proofof}{Theorem~\ref{thm:rescale_general}}
From Lemma~\ref{lem:good_sampling}, $\frac{y_H(\alpha+\beta)}{y_H(\alpha)}$ gives a good estimation of $e^{2 \pi i f_0 \beta}$ with probability 0.6 for any $\beta \le \frac{\gamma}{\Delta'}$. We use the frequency search algorithm of Lemma 7.3 in \cite{CKPS17} with the sampling procedure in Lemma~\ref{lem:good_sampling}. Because the algorithm in \cite{CKPS17} uses the sampling procedure $O(\log \frac{F}{\Delta' \cdot \delta})$ times to return a frequency $\wt{f}$ satisfying $|\wt{f} - f_0| \le \Delta'$ with prob. at least $1-\delta$, the sample complexity is $O(R \cdot \log \frac{F}{\Delta' \cdot \delta})$.
\end{proofof}

\subsection{Proof of Lemma~\ref{lem:good_sampling}}\label{sec:proof_good_sampling}
For $y_H(x)=g_H(x) + \eta_H(x)$, we have the following concentration lemma for estimation $g_H(x)$.

\begin{claim}\label{clm:generate_sampling_after_filter}
Given any $g$ satisfying the three conditions in Theorem~\ref{thm:general_bound} and any $\eps$ and $\delta$, there exists $m=O(R \log \frac{1}{\delta}/\eps^2)$ such that for $m$ random samples $x_1,\ldots,x_m \sim [-1,1]$, with probability $1-\delta$,
$$
\frac{\sum_{i=1}^m |g_H(x_i)|^2 }{m} \in [1-\eps,1+\eps] \cdot \E_{x \sim [-1,1]}[|g_H(x)|^2].
$$
\end{claim}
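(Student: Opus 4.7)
The plan is to reduce this to the Chernoff bound of Lemma~\ref{lem:chernoff_bound} applied to the random variables $X_i := |g_H(x_i)|^2 / \E_{x \sim [-1,1]}[|g_H(x)|^2]$ for $x_i$ drawn uniformly from $[-1,1]$. By construction $\E[X_i]=1$, so the only content is to show that $X_i$ is bounded in $[0,O(R)]$; once that is in hand, Lemma~\ref{lem:chernoff_bound} immediately gives $\Pr\!\big[|\tfrac{1}{m}\sum_i X_i - 1| > \eps\big] \le 2\exp(-\Omega(\eps^2 m / R))$, which is at most $\delta$ for $m = \Theta(R \log(1/\delta)/\eps^2)$.

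First I would produce a pointwise upper bound on $|g_H(x)|^2$ for $x \in [-1,1]$. Since $g_H(x) = g(x)\cdot H(\alpha x)$ and Theorem~\ref{thm:HwithScaling} guarantees $|H(\alpha x)| \le 1.01$ everywhere, while condition 2 of Theorem~\ref{thm:general_bound} gives $|g(x)|^2 \le R \cdot \|g\|_2^2$ on $[-1,1]$, I get
\[
|g_H(x)|^2 \le 1.01^2 \cdot R \cdot \|g\|_2^2 \quad \text{for all } x \in [-1,1].
\]

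Next I would lower bound the denominator $\E_{x \sim [-1,1]}[|g_H(x)|^2]$. Property 1 of Theorem~\ref{thm:HwithScaling} says $\int_{-1}^1 |g_H(x)|^2 \md x \ge 0.9 \int_{-1}^1 |g(x)|^2 \md x$, so after dividing by $2$ (the length of the interval) we have $\E_{x \sim [-1,1]}[|g_H(x)|^2] \ge 0.9 \|g\|_2^2$. Combining this with the previous display yields
\[
X_i = \frac{|g_H(x_i)|^2}{\E_{x \sim [-1,1]}[|g_H(x)|^2]} \le \frac{1.01^2 R \|g\|_2^2}{0.9 \|g\|_2^2} = O(R),
\]
and $X_i \ge 0$ trivially. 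So the $X_i$ are i.i.d.\ in $[0, O(R)]$ with mean $1$, and Lemma~\ref{lem:chernoff_bound} applied with the stated $m$ closes the argument.

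There is no real obstacle here beyond assembling the two properties of $H$ from Theorem~\ref{thm:HwithScaling} with the hypothesis on $g$: everything hinges on the fact that the filter $H$ is uniformly bounded (so that the sup of $|g_H|^2$ inherits the $R$ bound from $|g|^2$) while simultaneously preserving a constant fraction of the $\ell_2$ mass on $[-1,1]$ (so that the mean of $|g_H|^2$ is comparable to $\|g\|_2^2$). Once both are in place, the ratio of sup to mean on $[-1,1]$ is $O(R)$, and standard sub-Gaussian concentration takes over; the third condition of Theorem~\ref{thm:general_bound} on growth outside $[-1,1]$ is not needed for this claim since the samples lie in $[-1,1]$.
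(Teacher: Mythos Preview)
Your proposal is correct and matches the paper's approach: the paper's proof is a one-line assertion that $\sup_{x\in[-1,1]}|g_H(x)|^2 / \E_{x\in[-1,1]}[|g_H(x)|^2] \le 2R$ followed by an appeal to Lemma~\ref{lem:chernoff_bound}, and you have simply spelled out the two-step justification for that ratio bound (sup via $|H|\le 1.01$ and condition~2, mean via property~1 of Theorem~\ref{thm:HwithScaling}).
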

\begin{proof}
Notice that $\frac{\underset{x \sim [-1,1]}{\sup} [|g_H(x)|^2]}{\underset{x \sim [-1,1]}{\E}[|g_H(x)|^2]} \le 2 R$. From the Chernoff bound in Lemma~\ref{lem:chernoff_bound}, $m=O(R \log \frac{1}{\delta}/\eps^2)$ suffices to estimate $\|g_H\|_2^2$.
\end{proof}

Next we consider the effect of noise $\eta_H(x_i)$ and $y_H(x_i)$. 
\begin{claim}\label{clm:samples_y_H}
With probability $0.9$ over $m$ random samples in $[-1,1]$, $\sum_{i=1}^m |y_H(x_i)|^2/m \ge 0.8 \|g\|_2^2$.
\end{claim}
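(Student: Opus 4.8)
The goal is to show that the empirical average $\frac{1}{m}\sum_{i=1}^m |y_H(x_i)|^2$ is at least $0.8\|g\|_2^2$ with probability $0.9$ over the $m = CR$ uniform samples. The plan is to write $y_H = g_H + \eta_H$ and use the triangle inequality in $\ell_2$ over the sample points: $\left(\sum_i |y_H(x_i)|^2\right)^{1/2} \ge \left(\sum_i |g_H(x_i)|^2\right)^{1/2} - \left(\sum_i |\eta_H(x_i)|^2\right)^{1/2}$. So it suffices to lower bound the empirical mass of $g_H$ and upper bound the empirical mass of $\eta_H$.

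For the $g_H$ term, I would invoke Claim~\ref{clm:generate_sampling_after_filter} with a small constant $\eps$ (say $\eps = 0.01$) and $\delta = 0.05$, giving $m = O(R)$ samples with $\frac{1}{m}\sum_i |g_H(x_i)|^2 \ge 0.99\,\|g_H\|_2^2$ with probability $0.95$. Then I need $\|g_H\|_2^2$ bounded below in terms of $\|g\|_2^2$: since $g$ satisfies the conditions of Theorem~\ref{thm:general_bound}, the first claim of Theorem~\ref{thm:HwithScaling} (equivalently Lemma~\ref{lem:effect_H_k_sparse}) gives $\int_{-1}^1 |g_H(x)|^2\,dx \ge 0.9\int_{-1}^1 |g(x)|^2\,dx$, i.e. $\|g_H\|_2^2 \ge 0.9\|g\|_2^2$. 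Combining, the empirical $g_H$ mass is at least $0.9 \cdot 0.99\,\|g\|_2^2 \ge 0.89\,\|g\|_2^2$ with probability $0.95$.

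For the $\eta_H$ term, the subtlety is that $\eta$ is adversarial and only controlled in $\ell_2$-average, not pointwise, so I cannot apply a Chernoff bound to $|\eta_H(x_i)|^2$. Instead I would use Markov's inequality on the nonnegative random variable $\frac{1}{m}\sum_i |\eta_H(x_i)|^2$, whose expectation over the uniform samples is exactly $\E_{x\sim[-1,1]}[|\eta_H(x)|^2] = \|\eta_H\|_2^2 \le \|\eta\|_2^2 \le \eps\|g\|_2^2$ (using $|H(\alpha x)| \le 1.01$ from the third claim of Theorem~\ref{thm:HwithScaling}, so really $\le 1.03\eps\|g\|_2^2$). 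Hence with probability at least $0.95$, $\frac{1}{m}\sum_i |\eta_H(x_i)|^2 \le 20 \cdot 1.03\,\eps\,\|g\|_2^2$, which is a tiny constant times $\|g\|_2^2$ once $\eps$ is chosen small enough. Putting the two events together by a union bound (total failure probability $\le 0.1$), the triangle inequality gives $\frac{1}{m}\sum_i |y_H(x_i)|^2 \ge \left(\sqrt{0.89} - \sqrt{20\cdot1.03\,\eps}\right)^2 \|g\|_2^2 \ge 0.8\,\|g\|_2^2$ for $\eps$ a sufficiently small constant.

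**Main obstacle.** The only real point requiring care is the $\eta_H$ term: because the noise is adversarial, the argument must go through Markov on the expectation rather than concentration, and one has to track that the filter $H(\alpha x)$ does not amplify the noise (handled by $|H|\le 1.01$). Everything else is a routine combination of Claim~\ref{clm:generate_sampling_after_filter}, the box-approximation property of $H$, and the $\ell_2$ triangle inequality, with constants chosen generously so that the final bound $0.8$ is comfortably met.
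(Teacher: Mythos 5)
Your proof is correct and follows essentially the same route as the paper: lower-bound the empirical $g_H$ mass via Claim~\ref{clm:generate_sampling_after_filter} and Theorem~\ref{thm:HwithScaling}, upper-bound the empirical $\eta_H$ mass by Markov (since the noise is adversarial), and combine with a union bound. The only cosmetic difference is that you phrase the final combination as the $\ell_2$ reverse triangle inequality over the sample points, whereas the paper expands $|y_H|^2 \ge |g_H|^2 - 2|g_H||\eta_H| + |\eta_H|^2$ and applies Cauchy--Schwarz to the cross term --- these are the same calculation.
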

\begin{proof}
From Theorem~\ref{thm:HwithScaling}, $\|g_H\|_2^2 \ge 0.95 \|g\|_2^2$. Thus Claim~\ref{clm:generate_sampling_after_filter} implies $\sum_{i=1}^m |g_H(x_i)|^2/m \ge 0.98 \cdot 0.95 \|g\|_2^2$ for $m=O(R)$ with probability 0.99. 

At the same time, because $\E[\sum_{i=1}^m |\eta_H(x_i)|^2/m]=\|\eta_H\|_2^2$, $\sum_{i=1}^m |\eta_H(x_i)|^2/m \le 14 \|\eta_H\|_2^2$ with probability at least $1-\frac{1}{14}$ from the Markov inequality. This is also less than $14 \cdot 1.02^2 \|\eta\|_2^2 \le 15 \epsilon \|g\|_2^2$ from the upper bound on $H(t)$. 

We have 
$$
\frac{1}{m} \sum_{i=1}^m |y_H(x_i)|^2 \ge \frac{1}{m} \sum_{i=1}^m \bigg(|g_H(x_i)|^2 - 2 |g_H(x_i)| \cdot |\eta_H(x_i)| + |\eta_H(x_i)|^2\bigg).
$$ By the Cauchy-Schwartz inequality, the cross term $\sum_{i=1}^m |g_H(x_i)| \cdot |\eta_H(x_i)| \le (\sum_{i=1}^m |g_H(x_i)|^2)^{1/2} \cdot (\sum_{i=1}^m |\eta_H(x_i)|^2)^{1/2}$. From all discussion above, 
$$
\frac{1}{m} \sum_{i=1}^m |y_H(x_i)|^2 \ge \big(0.93 - 2 \sqrt{0.93 \cdot 15 \epsilon}\big) \|g\|_2^2.
$$
When $\eps$ is a small constant, it is at least $0.8 \cdot \|g\|_2^2$.
\end{proof}

We set $z(t)=y_H(t) \cdot e^{2 \pi \bi f_0 \beta} - y_H(t+\beta)$ for convenience and bound it as follows.

\begin{claim}
Given any small constant $\gamma$, $\Delta' = \Delta + \supp(H)$, and $z(t)=y_H(t) \cdot e^{2 \pi \bi f_0 \beta} - y_H(t+\beta)$ for $\beta \le \frac{\gamma}{\Delta'}$,  $\|z\|_2^2 \lesssim (\gamma^2 + \epsilon) \|g\|_2^2$.
\end{claim}
\begin{proof}
Notice that $y_H=g_H+\eta_H$ where $\supp(\wh{g_H}) \in [f_0 - \Delta, f_0 + \Delta]$ such that $$\int_{f \notin [f_0 - \Delta',f_0 + \Delta']} |\wh{y}(f)|^2 \mathrm{d} f \le \int_{-\infty}^{\infty} |\wh{\eta_H}(f)|^2 \mathrm{d} f = \int_{-\infty}^{\infty} |\eta_H(t)|^2 \mathrm{d} t \le 1.02^2 \epsilon \int_{-1}^{1} |g(t)|^2 \mathrm{d} t.$$

We bound $\|z\|_2^2$ through 
$$
\int_{-1}^1 |z(t)|^2 \mathrm{d} t \le \int_{-\infty}^{\infty} |z(t)|^2 \mathrm{d} t = \int_{-\infty}^{\infty} |\wh{z}(f)|^2 \mathrm{d} f = \int_{f_0 - \Delta'}^{f_0 + \Delta'} |\wh{z}(f)|^2 \mathrm{d} f + \int_{f \notin [f_0 - \Delta',f_0 + \Delta']} |\wh{z}(f)|^2 \mathrm{d} f.
$$

Therefore we write
$$
\int_{f_0 - \Delta'}^{f_0 + \Delta'} |\wh{z}(f)|^2 \mathrm{d} f = \int_{f_0 - \Delta'}^{f_0 + \Delta'} |\wh{y_H}(f) \cdot e^{2 \pi \bi f_0 \beta} - \wh{y_H}(f) \cdot e^{2 \pi \bi f \beta}|^2 \mathrm{d} f \le \int_{f_0 - \Delta'}^{f_0 + \Delta'} |\wh{y_H}(f) |^2 \cdot |e^{2 \pi \bi f_0 \beta} - e^{2 \pi \bi f \beta}|^2 \mathrm{d} f.
$$
Because $f \in [f_0 - \Delta', f_0 + \Delta']$ and $\beta \le \frac{\gamma}{\Delta'}$, $|e^{2 \pi \bi f_0 \beta} - e^{2 \pi \bi f \beta}| \le 4\pi \gamma$. So $$\int_{f_0 - \Delta'}^{f_0 + \Delta'} |\wh{z}(f)|^2 \mathrm{d} f \lesssim \gamma^2 \int_{-\infty}^{+\infty} |\wh{y_H}(f) |^2 \mathrm{d} f = \gamma^2 \int_{-\infty}^{+\infty} |y_H(t) |^2 \mathrm{d} t \lesssim \gamma^2 (1+2\epsilon) \int_{-1}^{1} |g(t)|^2 \mathrm{d} t.$$

On the other hand,
\begin{align*}
\int_{f \notin [f_0 - \Delta',f_0 + \Delta']} |\wh{z}(f)|^2 \mathrm{d} f & = \int_{f \notin [f_0 - \Delta',f_0 + \Delta']} |\wh{y_H}(f) \cdot e^{2 \pi \bi f_0 \beta} - \wh{y_H}(f) \cdot e^{2 \pi \bi f \beta}|^2 \mathrm{d} f \\
& \le 4 \int_{f \notin [f_0 - \Delta',f_0 + \Delta']} |\wh{y_H}(f)|^2 \mathrm{d} f \\
& \le 4 \int_{-\infty}^{+\infty} |\wh{\eta_H}(f)|^2  \mathrm{d} f 
 = 4 \int_{-\infty}^{+\infty} |\wh{\eta_H}(t)|^2  \mathrm{d} t
\end{align*}
which is less than $5 \epsilon \int_{-1}^{1} |g(t)|^2 \mathrm{d} t.$

From all discussion above, $\int_{-1}^1 |z(t)|^2 \mathrm{d} t \lesssim (\gamma^2 + \epsilon) \int_{-1}^{1} |g(t)|^2 \mathrm{d} t$.
\end{proof}
For sufficiently small $\gamma$ and $\eps$, by Markov inequality, we have the following corollary.
\begin{corollary}\label{cor:samples_z}
For sufficiently small constants $\gamma$ and $\epsilon$, with probability $0.9$ over $m$ random samples in $[-1,1]$, $\sum_{i=1}^m |z(x_i)|^2 \le 0.01 \|g\|_2^2$.
\end{corollary}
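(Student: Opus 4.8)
The plan is to derive this corollary directly from the preceding claim by a one-line Markov argument, so the proposal is short. First I would record the output of the claim: there is an absolute constant $C_0$ with
$\|z\|_2^2 = \E_{x \sim [-1,1]}[|z(x)|^2] \le C_0(\gamma^2 + \epsilon)\|g\|_2^2$.
The key structural point to isolate is that, once $g$, the noise $\eta$ (extended by zero outside $[-1,1]$), the shift $\beta \le \gamma/\Delta'$, and the filter $H$ are all fixed, the function $z(t) = y_H(t)e^{2\pi\bi f_0\beta} - y_H(t+\beta)$ is a \emph{fixed, deterministic} function; the only randomness entering the corollary is in the sample locations $x_1,\dots,x_m$.

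Next I would consider the nonnegative random variable $X := \tfrac1m\sum_{i=1}^m |z(x_i)|^2$ (I read the statement as this empirical average, matching the normalization in Claim~\ref{clm:samples_y_H}). Since the $x_i$ are i.i.d.\ uniform on $[-1,1]$, linearity of expectation gives $\E[X] = \E_{x \sim [-1,1]}[|z(x)|^2] = \|z\|_2^2 \le C_0(\gamma^2+\epsilon)\|g\|_2^2$, with no dependence on $m$. Markov's inequality then yields $\Pr\!\big[X \ge 10\,C_0(\gamma^2+\epsilon)\|g\|_2^2\big] \le 1/10$. Finally I would choose $\gamma$ and $\epsilon$ small enough that $10\,C_0(\gamma^2+\epsilon) \le 0.01$ — this is precisely the ``sufficiently small constants'' hypothesis — so that with probability at least $0.9$ we have $X \le 0.01\|g\|_2^2$, which is the claimed bound.

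There is no genuine obstacle here: the substantive estimate, namely that $\|z\|_2^2$ is small, is exactly the content of the preceding claim (via Parseval, the tail bound of Theorem~\ref{thm:HwithScaling}, and the fact that a shift by $\beta \le \gamma/\Delta'$ multiplies each frequency in $\supp(\wh{g_H}) \subseteq [f_0-\Delta', f_0+\Delta']$ by a phase of magnitude $O(\gamma)$). The only thing worth noting is that Markov suffices here precisely because we only ask for success probability $0.9$; were one to want probability $1-\delta$ one would instead invoke a Chernoff/Bernstein bound as in Lemma~\ref{lem:chernoff_bound} at the cost of $m = \Ot(R)\log(1/\delta)$ samples, but that is unnecessary — and a $0.9$ bound is exactly what will union-bound cleanly with Claim~\ref{clm:samples_y_H} in the proof of Lemma~\ref{lem:good_sampling}.
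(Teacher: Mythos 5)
Your proof is correct and follows the same route as the paper: the paper derives this corollary from the preceding claim via Markov's inequality after choosing $\gamma$ and $\epsilon$ small enough, which is precisely your argument. You also correctly read the (implicitly normalized) quantity $\tfrac1m\sum_i |z(x_i)|^2$, matching how the bound is actually invoked in the proof of Lemma~\ref{lem:good_sampling}.
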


Finally we finish the proof of Lemma~\ref{lem:good_sampling}.

\begin{proofof}{Lemma~\ref{lem:good_sampling}}
We assume Claim~\ref{clm:samples_y_H} and Corollary~\ref{cor:samples_z} hold in this proof, i.e.,
$$
\sum_{i=1}^m |y_H(x_i)|^2/m \ge 0.8 \|g\|_2^2 \text{ and } \sum_{i=1}^m |z(x_i)|^2/m \le 0.01 \|g\|_2^2.
$$
For a random sample $\alpha \sim D_m$, we bound
\[
\E_{\alpha \sim D_m}\left[ \frac{|y_H(\alpha) e^{2 \pi i f_0 \beta} - y_H(\alpha+\beta)|^2}{|y_H(\alpha)|^2} \right]= \E_{\alpha \sim D_m}\left[ \frac{|z(\alpha)|^2}{|y_H(\alpha)|^2} \right] = \sum_{i=1}^m \frac{|z(x_i)|^2}{|y_H(x_i)|^2} \cdot \frac{|y_H(x_i)|^2}{\sum_{j=1}^m |y_H(x_j)|^2}.
\]
This is $\frac{\sum_{i=1}^m |z(x_i)|^2}{\sum_{j=1}^m |y_H(x_j)|^2} \le \frac{0.01}{0.8}$. Thus with probability $0.8$, $\frac{|y_H(\alpha) e^{2 \pi i f_0 \beta} - y_H(\alpha+\beta)|^2}{|y_H(\alpha)|^2}$ is less than $0.05/0.8 \le 0.09$. From all discussion above, $\frac{|y_H(\alpha) e^{2 \pi i f_0 \beta} - y_H(\alpha+\beta)|}{|y_H(\alpha)|} \le 0.3$ with probability 0.6.
\end{proofof}

\section{Bounds on Fourier-sparse Signals}\label{sec:condition_num_growth}
We consider $g(t)=\sum_{j=1}^k v_j e^{2\pi \bi f_j t}$ where each $f_j \in [f_0-\Delta,f_0+\Delta]$ in this section. The main result is to prove $R=\tilde{O}(k^3)$ and $S=\tilde{O}(k^2)$ for $k$ arbitrary real frequencies. We restate Theorem~\ref{thm:sparse_freq_estimation} after fixing $T=1$.

\begin{theorem}\label{thm:restate_sparse}
  Given $F,\Delta,$ and $k$, let $g(t)$ be a $k$-Fourier-sparse signal centered around $f_0 \in [-F,F]$: $g(t)=\sum_{i
    \in [k]} v_i \cdot e^{2 \pi \i f_i t}$ where $f_i \in
  [f_0-\Delta,f_0 + \Delta]$ and
  $y(t)=g(t)+\eta(t)$ be the observable signal on $[-1,1]$, where $\|\eta\|_2^2 \le \epsilon \cdot \|g\|_2^2$ for a sufficiently small constant $\epsilon$.

For any $\delta>0$, there exist $\Delta'=\Delta + \tilde{O}(R)$ and an
efficient algorithm that takes $O(k \log^2 k \log \frac{F}{\Delta' \cdot
  \delta})$ samples from $y(t)$ and outputs $\wt{f}$ satisfying
$|f_0-\wt{f}| \le O(\Delta')$ with probability at least $1-\delta$.
\end{theorem}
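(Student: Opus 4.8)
The plan is to bootstrap off the generic algorithm of Theorem~\ref{thm:general_bound} using the quantitative bounds for $k$-Fourier-sparse signals, and then shave the sample complexity from $\tilde{O}(R)=\tilde{O}(k^3)$ down to $\tilde{O}(k)$ per call by replacing the uniform sampling in Algorithm~\ref{alg:freq_est} with a suitably biased one.

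\textbf{Step 1: reduce to Theorem~\ref{thm:general_bound}.} By Theorem~\ref{thm:worst_case_sFFT} the signal $g$ satisfies condition (2) of Theorem~\ref{thm:general_bound} with $R=O(k^3\log^2 k)$, and by Theorem~\ref{thm:bound_growth_around1} it satisfies condition (3) with $S=O(k^2\log k)$; condition (1) is the hypothesis. Since $S=O(R)$, the accuracy parameter is $\Delta'=\Delta+\tilde{O}(R+S)=\Delta+\tilde{O}(R)$, exactly as claimed. Applying Theorem~\ref{thm:general_bound} verbatim already produces an efficient algorithm outputting $\wt f$ with $|f_0-\wt f|\le O(\Delta')$ with probability $1-\delta$; the only thing left to improve is that its sample bound, out of the box, is $O(R\log\frac{F}{\Delta'\delta})=O(k^3\log^2 k\cdot\log\frac{F}{\Delta'\delta})$ rather than $O(k\log^2 k\cdot\log\frac{F}{\Delta'\delta})$.

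\textbf{Step 2: biased sampling.} The factor $R$ enters Section~\ref{sec:freq_estimation} only through the Chernoff step of Claim~\ref{clm:generate_sampling_after_filter} (which needs $m=\Omega(R)$ uniform samples because $\sup|g_H|^2/\|g_H\|_2^2\le 2R$) and the importance-sampling step inside the proof of Lemma~\ref{lem:good_sampling}. I would instead draw $x_1,\dots,x_m$ from a fixed biased distribution $D$ on $[-1,1]$ and attach importance weights $w_i=1/(2D(x_i))$ to every empirical sum. The property I would import from \cite{CP18} is that there is such a $D$ with
\[
\sup_{x\in[-1,1]}\frac{|g(x)|^2}{D(x)}\le \tilde{O}(k)\cdot\|g\|_2^2\qquad\text{for every }g\in\mathcal{F}.
\]
Because $|H(\alpha x)|\le 1.01$ (Theorem~\ref{thm:HwithScaling}) and $\|g_H\|_2^2\ge 0.95\|g\|_2^2$, the same bound holds with $g_H$ in place of $g$ and $\tilde{O}(k)$ unchanged, even though $g_H$ is not sparse. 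Hence $w_i|g_H(x_i)|^2$ has mean $\|g_H\|_2^2$ and is bounded by $\tilde{O}(k)\|g_H\|_2^2$, so Lemma~\ref{lem:chernoff_bound} gives the required concentration with $m=\tilde{O}(k)$ — this replaces Claim~\ref{clm:generate_sampling_after_filter}. The weighted noise sums $w_i|\eta_H(x_i)|^2$ and error sums $w_i|z(x_i)|^2$ still have expectations $\|\eta_H\|_2^2$ and $\|z\|_2^2$ (the weight exactly cancels the sampling bias), so Claim~\ref{clm:samples_y_H}, the bound on $\|z\|_2^2$, and Corollary~\ref{cor:samples_z} go through with Markov as before. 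Finally I would output $\alpha$ from $D_m(x_i)\propto w_i|y_H(x_i)|^2$, for which
\[
\E_{\alpha\sim D_m}\Big[\tfrac{|z(\alpha)|^2}{|y_H(\alpha)|^2}\Big]=\frac{\sum_i w_i|z(x_i)|^2}{\sum_i w_i|y_H(x_i)|^2}\le\frac{0.01\|g\|_2^2}{0.8\|g\|_2^2},
\]
reproducing Lemma~\ref{lem:good_sampling} with $\tilde{O}(k)$ samples instead of $O(R)$. Feeding this subroutine into the frequency-search procedure of Lemma 7.3 of \cite{CKPS17}, which invokes it $O(\log\frac{F}{\Delta'\delta})$ times, yields total sample complexity $O(k\log^2 k\cdot\log\frac{F}{\Delta'\delta})$.

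\textbf{Main obstacle.} The delicate part is Step 2: one must state the biased-sampling guarantee of \cite{CP18} in precisely the form above — a single distribution $D$ that simultaneously controls every $g\in\mathcal{F}$ — verify it transfers from the sparse $g$ to the non-sparse product $g_H$ (handled by the pointwise bound $|g_H|\le 1.01|g|$), and re-check that each concentration claim of Section~\ref{sec:proof_good_sampling} survives the reweighting, in particular the lower bound $\sum_i w_i|y_H(x_i)|^2/m\gtrsim\|g\|_2^2$, which uses a weighted Cauchy--Schwarz on the cross term exactly as in the proof of Claim~\ref{clm:samples_y_H}. Everything else is a direct appeal to Theorems~\ref{thm:general_bound}, \ref{thm:worst_case_sFFT}, \ref{thm:bound_growth_around1} and to \cite{CKPS17}.
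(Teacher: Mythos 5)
Your proposal matches the paper's proof essentially line for line: the paper also (i) invokes Theorems~\ref{thm:worst_case_sFFT} and~\ref{thm:bound_growth_around1} to instantiate $R$ and $S$ in Theorem~\ref{thm:general_bound}, and (ii) replaces the uniform sampling inside Algorithm~\ref{alg:freq_est} by the biased distribution of \cite{CP18} with importance weights $w_i\propto 1/D(x_i)$, reproving Lemma~\ref{lem:good_sampling} as Lemma~\ref{lem:good_sampling_sparse} with $m=\tilde O(k)$ samples and then feeding it to the frequency-search routine of Lemma~7.3 of \cite{CKPS17}. Your extra care in checking that the leverage-type bound transfers from the sparse $g$ to the filtered $g_H$ via $|H|\le 1.01$ and $\|g_H\|_2^2\gtrsim\|g\|_2^2$ is a point the paper leaves implicit, but the overall argument is the same.
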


The main improvement is a biased distribution that saves the sample complexity from $O(R) \cdot \log \frac{F}{\Delta' \cdot \delta}$ to $\wt{O}(k) \cdot \log \frac{F}{\Delta' \cdot \delta}$.

\begin{version:full}
We provide the main technical lemma here and defer the proofs of Theorem~\ref{thm:worst_case_sFFT}, \ref{thm:bound_growth_around1}, and \ref{thm:restate_sparse} to Appendix~\ref{append:sparse_signal}.
\end{version:full}

\begin{version:short}
We provide the main technical lemma here and defer the proofs of Theorem~\ref{thm:worst_case_sFFT}, \ref{thm:bound_growth_around1}, and \ref{thm:restate_sparse} to the full version.
\end{version:short}

\begin{theorem}\label{thm:sup_sup_coef}
Given $z_1,\ldots,z_k$ with 
$|z_1|=|z_2|=\cdots=|z_k|=1$, there exists a degree $d=O(k^2 \log k)$ polynomial $P(z)=\sum_{j=0}^d c(j) \cdot z^j$ satisfying
\begin{enumerate}
\item $P(z_i)=0$ for each $i \in [k]$.
\item Coefficients $c(0)=\Omega(1)$, $c(j)=O(1)$ and $\sum_{j=1}^d |c(j)|^2 =O(k) \cdot |c(0)|^2$.
\end{enumerate}
\end{theorem}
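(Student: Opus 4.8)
The plan is to exhibit a single such polynomial explicitly, built as a product of $k$ low-degree ``building blocks'', one per point $z_i$. Since only existence is asked for, it suffices to produce $P(z)=\prod_{i=1}^k E_i(z)$ where each $E_i$ is a polynomial of degree $O(k\log k)$ that (a) vanishes at $z_i$, (b) has constant term $1$, and (c) satisfies $\|E_i\|_{L^\infty(|z|=1)}\le 1+O(1/k)$. Then $P(z_i)=0$ for all $i$, $\deg P=O(k^2\log k)$, the constant term of $P$ is $\prod_i E_i(0)=1$, and $\|P\|_{L^\infty(|z|=1)}\le\prod_i\|E_i\|_\infty\le\bigl(1+O(1/k)\bigr)^k=O(1)$; by Parseval this last bound gives $\sum_{j\ge 1}|c(j)|^2\le\|P\|_{L^\infty}^2=O(1)\le O(k)\cdot|c(0)|^2$ and $|c(j)|=|\widehat{P}(j)|\le\|P\|_{L^\infty}=O(1)$, which is everything the theorem asks for.

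For the building block, fix $\eta=c_0/k$ for a small constant $c_0$ and $N=\Theta(k\log k)$, and set
\[
E_i(z)=\Big(1-\tfrac{z}{z_i}\Big)\sum_{n=0}^{N}(1-\eta)^n\Big(\tfrac{z}{z_i}\Big)^n,
\]
the degree-$(N+1)$ truncation of the rational function $\frac{1-z/z_i}{1-(1-\eta)z/z_i}$. The factor $1-z/z_i$ makes $E_i(z_i)=0$, and the constant term is visibly $1$. The geometric damping $(1-\eta)^n$ is the key device (the ``geometric sequence'' of the overview). I would show that on $|w|=1$ the untruncated function has $\bigl|\frac{1-w}{1-(1-\eta)w}\bigr|^2=\frac{2u}{\eta^2+2(1-\eta)u}$ with $u=1-\operatorname{Re}w\in[0,2]$, an increasing function of $u$ whose maximum (at $u=2$) is $\frac{4}{(2-\eta)^2}$, so its modulus is at most $\frac{2}{2-\eta}=1+O(\eta)$; and the truncation tail $(1-z/z_i)\sum_{n>N}(1-\eta)^n(z/z_i)^n$ has sup-norm at most $2(1-\eta)^{N+1}/\eta\le 2\eta^2$ once $N\gtrsim\eta^{-1}\log\eta^{-1}$. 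Combining these, $\|E_i\|_{L^\infty(|z|=1)}\le 1+O(\eta)=1+O(1/k)$, and $\deg E_i=O(\eta^{-1}\log\eta^{-1})=O(k\log k)$.

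Assembling $P=\prod_{i=1}^k E_i$ then closes the argument as in the first paragraph: $P(z_i)=0$ for every $i$, $d:=\deg P\le k(N+1)=O(k^2\log k)$, $c(0)=P(0)=1=\Omega(1)$, and choosing $c_0$ small (say $c_0=1$) gives $\|P\|_{L^\infty(|z|=1)}\le(1+\eta)^k\le e^{c_0}=O(1)$, hence $\sum_{j=1}^d|c(j)|^2\le e^{2c_0}=O(1)$ and $|c(j)|\le e^{c_0}=O(1)$.

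The step I expect to be the crux is property (c): pushing the sup-norm of the building block all the way down to $1+O(1/k)$ rather than merely a constant. A single factor $1-z/z_i$ already has sup-norm $2$, and a $k$-fold product of such factors blows up like $2^k$ — exactly the defect of the naive construction $\prod_i(1-z/z_i)$; the whole purpose of inserting the geometric sum with ratio $1-\eta$ for $\eta\asymp 1/k$ is that multiplying it by $1-z/z_i$ yields a curve on $|z|=1$ whose modulus never exceeds $1+O(1/k)$, so the $k$-fold product stays bounded. Getting this constant right (the estimate $\sup_{u\in[0,2]}\frac{2u}{\eta^2+2(1-\eta)u}=\frac{4}{(2-\eta)^2}$) and balancing $\eta\asymp 1/k$ against the truncation length $N\asymp k\log k$ needed to make the tail negligible is precisely where the degree bound $d=O(k^2\log k)$ comes from.
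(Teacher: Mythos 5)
Your proof is correct, and it takes a genuinely different route from the paper. The paper establishes Theorem~\ref{thm:sup_sup_coef} via a pigeonhole argument: it enumerates the $(2C_0+1)^d$ polynomials of the form $\sum_j \alpha_j 2^{-j/k} z^j$ with small integer $\alpha_j$, reduces each modulo $\prod_j(z-z_j)$, finds two whose residuals nearly coincide, and takes their difference minus a correction term. Your construction is fully explicit: $P=\prod_{i=1}^k E_i$ where each factor
\[
E_i(z)=\Big(1-\tfrac{z}{z_i}\Big)\sum_{n=0}^{N}(1-\eta)^n\Big(\tfrac{z}{z_i}\Big)^n,\qquad \eta\asymp 1/k,\ N\asymp \eta^{-1}\log\eta^{-1},
\]
is the truncation of $\frac{1-w}{1-(1-\eta)w}$ at $w=z/z_i$. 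The key computation $\bigl|\frac{1-w}{1-(1-\eta)w}\bigr|^2=\frac{2u}{\eta^2+2(1-\eta)u}\le \frac{4}{(2-\eta)^2}$ on $|w|=1$ (with $u=1-\mathrm{Re}\,w$) is correct, as is the truncation tail estimate $2(1-\eta)^{N+1}/\eta\le 2\eta^2$. So $\|E_i\|_{L^\infty(|z|=1)}\le 1+O(\eta)$, $\|P\|_{L^\infty(|z|=1)}\le(1+O(1/k))^k=O(1)$, $P(0)=1$, and Parseval plus the trivial bound $|c(j)|\le\|P\|_\infty$ deliver both coefficient conditions. Both arguments rest on the same geometric-decay device (the paper's $2^{-j/k}$ weights vs.\ your ratio $1-\eta$ with $\eta\asymp 1/k$), but yours realizes it directly on the unit circle rather than through counting.

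Worth noting: your argument actually proves something strictly stronger than stated, namely $\sum_{j\ge 1}|c(j)|^2=O(1)\cdot|c(0)|^2$ rather than $O(k)\cdot|c(0)|^2$. Fed through Theorem~\ref{thm:sFFT_interpolation} and the integration-over-$\theta$ step in the proof of Theorem~\ref{thm:worst_case_sFFT}, this would sharpen the paper's main max-to-average bound from $R=O(k^3\log^2 k)$ to $R=O(k^2\log^2 k)$, essentially matching the $\Omega(k^2)$ lower bound mentioned in the introduction. If you are confident in this, it is worth pursuing as a genuine improvement; either way you should double-check against the known Nikolskii-type examples (e.g.\ $p(x)=\sum_{j<k}(2j+1)P_j(x)$ with Legendre polynomials $P_j$, which gives $R=\Theta(k^2)$ in the degenerate $\Delta\to 0$ regime) since the difference between $O(k)$ and $O(1)$ in the coefficient bound is exactly where the paper's proof leaves something on the table.
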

\begin{corollary}\label{cor:linear_comb}
Given any $g(t)=\sum_{j=1}^k v_j e^{2 \pi \bi f_j t}$ and $\theta>0$, there exist $d=O(k^2 \log k)$ and a sequence of coefficients $(\alpha_1,\ldots,\alpha_d)$ such that
\begin{enumerate}
\item $\alpha_j=O(1)$ for any $j=1,\ldots,d$.
\item For any $x$ (not necessarily in $[-1,1]$), $g(x)=\sum_{j=1}^d \alpha_j \cdot g(x-j \theta)$.
\end{enumerate}
\end{corollary}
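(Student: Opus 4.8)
The plan is to obtain Corollary~\ref{cor:linear_comb} as an immediate consequence of Theorem~\ref{thm:sup_sup_coef}, by translating the polynomial statement into one about the shift operator $g(x)\mapsto g(x-\theta)$ acting on a $k$-Fourier-sparse signal.

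First I would record the elementary identity that, for any integer $j\ge 0$,
\[
g(x-j\theta) = \sum_{i=1}^k v_i e^{2\pi\bi f_i (x-j\theta)} = \sum_{i=1}^k \big(v_i e^{2\pi\bi f_i x}\big)\big(e^{-2\pi\bi f_i\theta}\big)^{j}.
\]
Setting $z_i := e^{-2\pi\bi f_i\theta}$ (so that $|z_i|=1$ since $f_i,\theta$ are real) and $w_i := v_i e^{2\pi\bi f_i x}$, this reads $g(x-j\theta)=\sum_{i=1}^k w_i z_i^{\,j}$. Then I would apply Theorem~\ref{thm:sup_sup_coef} to the points $z_1,\dots,z_k$ to obtain a polynomial $P(z)=\sum_{j=0}^d c(j)z^j$ of degree $d=O(k^2\log k)$ with $P(z_i)=0$ for all $i$, $c(0)=\Omega(1)$, and $c(j)=O(1)$. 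Substituting,
\[
\sum_{j=0}^d c(j)\,g(x-j\theta) = \sum_{i=1}^k w_i \sum_{j=0}^d c(j) z_i^{\,j} = \sum_{i=1}^k w_i\,P(z_i) = 0 .
\]
Isolating the $j=0$ term and dividing by $c(0)\neq 0$ gives $g(x)=\sum_{j=1}^d \alpha_j\,g(x-j\theta)$ with $\alpha_j := -c(j)/c(0)$; since $|c(j)|=O(1)$ and $|c(0)|=\Omega(1)$, each $\alpha_j=O(1)$. Nothing in this derivation uses $x\in[-1,1]$, so the identity holds for every real $x$, giving both claimed properties.

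There is essentially no obstacle in the corollary itself — all the content lives in Theorem~\ref{thm:sup_sup_coef}. The only point that deserves care is that the bound $\alpha_j=O(1)$ relies on the lower bound $c(0)=\Omega(1)$ (not merely $c(0)\ne 0$), which is precisely what the theorem supplies; the stronger estimate $\sum_j|c(j)|^2=O(k)\,|c(0)|^2$ is not needed for this statement but is what later yields the improved bound on $R$ through inequalities of the form~\eqref{eq:comb_g}.
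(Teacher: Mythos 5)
Your proposal is correct and follows the same route as the paper: set $z_i=e^{-2\pi\bi f_i\theta}$, invoke Theorem~\ref{thm:sup_sup_coef} to get the vanishing polynomial $P$, and define $\alpha_j=-c(j)/c(0)$, using $c(0)=\Omega(1)$ and $c(j)=O(1)$ to conclude $\alpha_j=O(1)$. Your write-up is in fact a bit more explicit than the paper's, which merely displays the per-frequency identity $e^{2\pi\bi f_j x}-\sum_j\alpha_j e^{2\pi\bi f_j(x-j\theta)}=0$ and leaves the linearity step implicit.
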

\begin{proof}
Given $\theta$, we set $z_i=e^{-2 \pi \bi f_j \theta}$ and apply Theorem~\ref{thm:sup_sup_coef} to obtain coefficients $c(0),\ldots,c(d)$. Then we set $\alpha_j=-c(j)/c(0)$. It is straightforward to verify the second property because of 
$$e^{2 \pi \bi f_j x} - \sum_{j} \alpha_j \cdot e^{2 \pi \bi f_j (x - j \theta)}=0.$$
\end{proof}

The proof of Theorem~\ref{thm:sup_sup_coef} requires the following bound on the coefficients of residual polynomials, which is stated as Lemma 5.3 in \cite{CKPS17}.
\begin{lemma}\label{lem:bound_z_n}
Given $z_1,\ldots,z_k$, for any integer $n$, let $r_{n,k}(z)=\sum_{i=0}^{k-1} r_{n,k}^{(i)} \cdot z^i$ denote the residual polynomial of $r_{n,k} \equiv z^n \mod \prod_{j=1}^k (z-z_j)$. Then each coefficient in $r_{n,k}$ is bounded: $|r_{n,k}^{(i)}| \le {k-1 \choose i}\cdot {n \choose k-1}$ for $n \ge k$ and $|r_{n,k}^{(i)}| \le {k-1 \choose i}\cdot {|n|+k-1 \choose k-1}$ for $n < 0$. 
\end{lemma}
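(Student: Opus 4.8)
The plan is to identify $r_{n,k}$ with an explicit interpolation polynomial and then read off its power-basis coefficients as symmetric functions of the $z_j$. When $z_1,\dots,z_k$ are distinct, $r_{n,k}$ is exactly the degree-$(k-1)$ interpolant of $w\mapsto w^n$ at the nodes $z_1,\dots,z_k$, and the general case follows by a limiting argument since both sides are polynomials in $z_1,\dots,z_k$. Writing this interpolant in Newton form,
\[
r_{n,k}(z)=\sum_{s=0}^{k-1}\big[z_1,\dots,z_{s+1}\big]w^n\cdot\prod_{j=1}^{s}(z-z_j),
\]
where $[\,\cdot\,]$ is the divided difference. First I would establish the classical identity $\big[z_1,\dots,z_{m}\big]w^n=h_{n-m+1}(z_1,\dots,z_m)$ for $n\ge m-1$, with $h$ the complete homogeneous symmetric polynomial; this follows by induction on $m$ from the divided-difference recursion together with $h_a(z_1,\dots,z_m)-h_a(z_1,\dots,z_{m-1})=z_m h_{a-1}(z_1,\dots,z_m)$, or in one line from $\sum_{n\ge0}h_n(z_1,\dots,z_m)x^n=\prod_{j=1}^m(1-z_jx)^{-1}$.

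Expanding $\prod_{j=1}^{s}(z-z_j)=\sum_{i}(-1)^{s-i}e_{s-i}(z_1,\dots,z_s)\,z^i$ then yields the coefficient formula
\[
r_{n,k}^{(i)}=\sum_{s=i}^{k-1}(-1)^{s-i}\,e_{s-i}(z_1,\dots,z_s)\,h_{n-s}(z_1,\dots,z_{s+1}).
\]
Since $|z_j|=1$, every monomial appearing has modulus $1$, so $|e_{s-i}(z_1,\dots,z_s)|\le\binom{s}{i}$ and $|h_{n-s}(z_1,\dots,z_{s+1})|\le\binom{n}{s}$ simply by counting monomials. Using the subset-of-a-subset identity $\binom{s}{i}\binom{n}{s}=\binom{n}{i}\binom{n-i}{s-i}$ and summing over $s$ gives a bound of the form $\binom{n}{i}\sum_{u=0}^{k-1-i}\binom{n-i}{u}$, which already has the correct degree $k-1$ in $n$.

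The main obstacle is that this triangle-inequality bound is slightly too large: the target $\binom{k-1}{i}\binom{n}{k-1}=\binom{n}{i}\binom{n-i}{k-1-i}$ is only the \emph{last} term of that sum, so getting the clean constant forces one to exploit the alternating signs $(-1)^{s-i}$. (In the extreme case $z_1=\dots=z_k=1$ one gets $r_{n,k}^{(i)}=\binom{n}{i}(-1)^{k-1-i}\binom{n-i-1}{k-1-i}$ from $\sum_{u=0}^{J}(-1)^u\binom{N}{u}=(-1)^J\binom{N-1}{J}$, which does meet the bound.) I would resolve this by a sign-aware induction on $n$ that propagates the exact expression for $r_{n,k}^{(i)}$ through the shift recursion $r_{n+1,k}(z)=z\,r_{n,k}(z)-r_{n,k}^{(k-1)}\prod_{j=1}^k(z-z_j)$, using $r_{n,k}^{(k-1)}=h_{n-k+1}(z_1,\dots,z_k)$, rather than passing to absolute values prematurely; for every application in this paper it would in any case suffice to keep the weaker but immediate bound $|r_{n,k}^{(i)}|\le\binom{n}{i}\sum_{u=0}^{k-1-i}\binom{n-i}{u}$, still polynomial of degree $k-1$ in $n$. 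Finally, the case $n<0$ reduces to the case $n\ge0$ via the substitution $z_j\mapsto z_j^{-1}$ (so that $z^{-1}$ becomes a polynomial modulo $\prod_j(z-z_j)$), which on the unit circle merely conjugates the nodes and replaces $\binom{n}{k-1}$ by $\binom{|n|+k-1}{k-1}$.
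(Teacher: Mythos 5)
First, a point of comparison: the paper does not prove this lemma at all --- it is imported verbatim as Lemma 5.3 of \cite{CKPS17} --- so your attempt is being measured against the original source rather than anything in this text. Your setup is sound and in fact matches the standard route: identifying $r_{n,k}$ with the Newton-form interpolant, the divided-difference identity $[z_1,\dots,z_m]w^n=h_{n-m+1}(z_1,\dots,z_m)$, the coefficient formula $r_{n,k}^{(i)}=\sum_{s=i}^{k-1}(-1)^{s-i}e_{s-i}(z_1,\dots,z_s)\,h_{n-s}(z_1,\dots,z_{s+1})$, and the observation that the target $\binom{k-1}{i}\binom{n}{k-1}=\binom{n}{i}\binom{n-i}{k-1-i}$ is exactly the $s=k-1$ term of your triangle-inequality estimate are all correct.

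However, as you yourself flag, the argument does not close, and the gap is not cosmetic. The bound you actually establish, $\binom{n}{i}\sum_{u=0}^{k-1-i}\binom{n-i}{u}$, exceeds the stated one by a factor that is exponential in $k$ when $n=O(k)$ (for $n=k$, $i=0$ the sum is $2^{k}-1$ against a target factor of $k$). The proposed repair --- a ``sign-aware induction on $n$'' through the shift recursion --- is a plan rather than a proof, and its obvious implementation fails: taking absolute values in $r_{n+1,k}^{(i)}=r_{n,k}^{(i-1)}+r_{n,k}^{(k-1)}\cdot(-1)^{k-1-i}e_{k-i}(z_1,\dots,z_k)$ yields $\bigl(\binom{k-1}{i-1}+\binom{k}{i}\bigr)\binom{n}{k-1}$, which already exceeds $\binom{k-1}{i}\binom{n+1}{k-1}$ for large $n$, so the cancellation must be captured structurally rather than step by step. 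One clean way to finish from exactly where you stand: $(-1)^{k-1-i}r_{n,k}^{(i)}$ is the hook Schur polynomial $s_{(n-k+1,\,1^{k-1-i})}(z_1,\dots,z_k)$ (e.g.\ via the Lagrange form and the bialternant formula), hence a sum of $s_{(n-k+1,1^{k-1-i})}(1^k)=\binom{n}{i}\binom{n-i-1}{k-1-i}\le\binom{k-1}{i}\binom{n}{k-1}$ monomials each with coefficient $+1$; since $|z_j|=1$ the triangle inequality is then lossless, matching the exact value you computed at $z_1=\dots=z_k=1$. Your remark that the weaker bound suffices for every use in this paper is correct --- the proofs of Theorem~\ref{thm:sup_sup_coef} and Lemma~\ref{lem:bound_growth} only need bounds of the form $2^{O(k)}n^{k-1}$ --- but as a proof of the lemma as stated the submission is incomplete. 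The $n<0$ case also needs more than one line: substituting $z\mapsto 1/z$ turns the remainder into a Laurent polynomial that must itself be re-reduced modulo $\prod_j(z-z_j)$, so the reduction to positive exponents is not a pure change of variables.
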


We finish the proof of Theorem~\ref{thm:sup_sup_coef} here.

\begin{proof}
Let $C_0$ be a large constant and $d=5 \cdot k^2 \log k$. We use $\mathcal{P}$ to denote the following subset of polynomials with bounded coefficients: 
$$
\left\{ \sum_{j=0}^d \alpha_j \cdot 2^{-j/k} \cdot z^j \bigg| \alpha_0,\ldots,\alpha_d \in [-C_0,C_0] \cap \mathbb{Z} \right\}.
$$
For each polynomial $P(z) \in \mathcal{P}$, we rewrite $P(z) \mod \prod_{j=1}^k (z-z_j)$ as
$$
\sum_{j=0}^d \alpha_j \cdot 2^{-j/k} \cdot \left(z^j \mod \prod_{j=1}^k (z-z_j) \right)=\sum_{i=0}^{k-1} \left(\sum_{j=0}^d \alpha_j \cdot 2^{-j/k} \cdot r_{n,k}^{(i)} \right) z^i.
$$
The coefficient $\sum_{j=0}^d \alpha_j\cdot 2^{-j/k} \cdot r_{n,k}^{(i)}$ is bounded by
\[
\sum_{j=0}^d C_0 \cdot 2^{-j/k} \cdot 2^k j^{k-1} \le d \cdot C_0 \cdot 2^k \cdot d^{k} \le d^{2k}. 
\]
Then we apply the pigeonhole principle on the $(2C_0+1)^d$ polynomials in $\mathcal{P}$ after module $\prod_{j=1}^d (z-z_j)$: there exist $m > (2C_0+1)^{0.9d}$ polynomials $P_1,\ldots,P_m$ such that each coefficient of $(P_i-P_j) \mod \prod_{j=1}^k (z-z_j)$ is $d^{-2k}$ small from the counting 
\[
\frac{(2C_0+1)^d}{(d^{2k}/4d^{-2k})^{k}} > (2C_0+1)^{0.9d}.
\]
Because $m > (2C_0+1)^{0.9d}$, there exists $j_1 \in [m]$ and $j_2 \in [m]\setminus\{j_1\}$ such that the lowest monomial $z^l$ with different coefficients in $P_{j_1}$ and $P_{j_2}$ satisfies $l \le 0.1d$. Eventually we set 
$$
P(z)=z^{-l} \cdot \big(P_{j_1}(z)-P_{j_2}(z) \big) - \bigg(z^{-l} \mod \prod_{j=1}^k (z-z_j) \bigg) \cdot \bigg(P_{j_1}(z)-P_{j_2}(z) \mod \prod_{j=1}^k (z-z_j) \bigg)
$$ 
to satisfy the first property $P(z_1)=P(z_2)=\cdots=P(z_k)=0$. We prove the second property in the rest of this proof.

We bound every coefficient in $\big(z^{-l} \mod \prod_{j=1}^k (z-z_j) \big) \cdot \big(P_{j_1}(z)-P_{j_2}(z) \mod \prod_{j=1}^k (z-z_j) \big)$ by 
$$
k \cdot \text{max-coefficient}\bigg(z^{-l} \mod \prod_{j=1}^k (z-z_j)\bigg) \cdot \text{max-coefficient}\bigg(P_{j_1}(z)-P_{j_2}(z) \mod \prod_{j=1}^k (z-z_j) \bigg),
$$
which is less than $k \cdot 2^k (l+k)^{k-1} \cdot d^{-2k} \le d \cdot 2^k d^{k-1} \cdot d^{-2k} \le d^{-0.5k}$ from Lemma~\ref{lem:bound_z_n} and the above discussion. 

On the other hand, the constant coefficient in $z^{-l} \cdot \big(P_{j_1}(z)-P_{j_2}(z) \big)$ is at least $2^{-l/k} \ge 2^{-0.1d/k}=k^{-0.5k}$ because $z^l$ is the smallest monomial with different coefficients in $P_{j_1}$ and $P_{j_2}$ from $\mathcal{P}$. Thus the constant coefficient $|C(0)|^2$ of $P(z)$ is at least $0.5 \cdot 2^{-2l/k}$.

Next we upper bound the sum of the rest of the coefficients $\sum_{j=1}^d |C(j)|^2$ by 
$$
\sum_{j=1}^d (2C_0 \cdot 2^{-(l+j)/k} + d^{-0.5k})^2 \le 2 \cdot 4C_0^2 \sum_{j=1}^d 2^{-2(l+j)/k} + 2 \cdot \sum_{j=1}^d d^{-0.5k \cdot 2} \lesssim k \cdot 2^{-2l/k}, 
$$
which demonstrates the second property after normalizing $C(0)$ to 1.
\end{proof}


\section*{Acknowledgement}
We thank Daniel Kane and Zhao Song for many helpful discussions. We also thank the anonymous referee for the detailed feedback and comments.

\bibliographystyle{alpha}
\bibliography{polynomial_interpolation}

\newcommand{\etalchar}[1]{$^{#1}$}
\begin{thebibliography}{AKM{\etalchar{+}}19}

\bibitem[AGS03]{AGS}
A.~Akavia, S.~Goldwasser, and S.~Safra.
\newblock Proving hard-core predicates using list decoding.
\newblock {\em FOCS}, 44:146--159, 2003.

\bibitem[AKM{\etalchar{+}}19]{AKMMVZ18}
Haim Avron, Michael Kapralov, Cameron Musco, Christopher Musco, Ameya
  Velingker, and Amir Zandieh.
\newblock A universal sampling method for reconstructing signals with simple
  fourier transforms.
\newblock In {\em Proceedings of the 51st annual ACM symposium on Theory of
  computing (STOC 2019)}, 2019.

\bibitem[BM86]{BM86}
Y.~Bresler and A.~Macovski.
\newblock Exact maximum likelihood parameter estimation of superimposed
  exponential signals in noise.
\newblock {\em IEEE Transactions on Acoustics, Speech, and Signal Processing},
  34(5):1081--1089, Oct 1986.

\bibitem[Che52]{chernoff1952}
Herman Chernoff.
\newblock A measure of asymptotic efficiency for tests of a hypothesis based on
  the sum of observations.
\newblock {\em The Annals of Mathematical Statistics}, 23:493--507, 1952.

\bibitem[CKPS16]{CKPS17}
Xue Chen, Daniel~M. Kane, Eric Price, and Zhao Song.
\newblock Fourier-sparse interpolation without a frequency gap.
\newblock In {\em Foundations of Computer Science(FOCS), 2016 IEEE 57th Annual
  Symposium on}, 2016.

\bibitem[CP18]{CP18}
Xue Chen and Eric Price.
\newblock Active regression via linear-sample sparsification.
\newblock {\em arXiv preprint arXiv:1711.10051}, 2018.

\bibitem[GGI{\etalchar{+}}02]{GGIMS}
Anna~C Gilbert, Sudipto Guha, Piotr Indyk, S~Muthukrishnan, and Martin Strauss.
\newblock Near-optimal sparse {F}ourier representations via sampling.
\newblock In {\em Proceedings of the thirty-fourth annual ACM symposium on
  Theory of computing}, pages 152--161. ACM, 2002.

\bibitem[GMS05]{GMS}
Anna~C Gilbert, S~Muthukrishnan, and Martin Strauss.
\newblock Improved time bounds for near-optimal sparse {F}ourier
  representations.
\newblock In {\em Optics \& Photonics 2005}, pages 59141A--59141A.
  International Society for Optics and Photonics, 2005.

\bibitem[HIKP12]{HIKP12}
Haitham Hassanieh, Piotr Indyk, Dina Katabi, and Eric Price.
\newblock Simple and practical algorithm for sparse {F}ourier transform.
\newblock In {\em Proceedings of the twenty-third annual ACM-SIAM symposium on
  Discrete Algorithms}, pages 1183--1194. SIAM, 2012.

\bibitem[IK14]{IK}
Piotr Indyk and Michael Kapralov.
\newblock Sample-optimal {F}ourier sampling in any constant dimension.
\newblock In {\em Foundations of Computer Science (FOCS), 2014 IEEE 55th Annual
  Symposium on}, pages 514--523. IEEE, 2014.

\bibitem[Man92]{Man92}
Y.~Mansour.
\newblock Randomized interpolation and approximation of sparse polynomials.
\newblock {\em ICALP}, 1992.

\bibitem[Moi15]{Moitra}
Ankur Moitra.
\newblock The threshold for super-resolution via extremal functions.
\newblock In {\em STOC}, 2015.

\bibitem[Pro95]{Prony}
R~Prony.
\newblock Essai experimental et analytique.
\newblock {\em J. de l’Ecole Polytechnique}, 1795.

\bibitem[PS15]{PS15}
Eric Price and Zhao Song.
\newblock A robust sparse {F}ourier transform in the continuous setting.
\newblock In {\em Foundations of Computer Science (FOCS), 2015 IEEE 56th Annual
  Symposium on}, pages 583--600. IEEE, 2015.

\end{thebibliography}

\section{Properties of the Filter function}\label{appen:filter}
We show basic properties of our filter function in Appendix~\ref{sec:properties_filter} and prove Theorem~\ref{thm:HwithScaling} in Appendix~\ref{sec:proof_filter}.

\subsection[]{Properties of $H$}\label{sec:properties_filter}
We use two bounds on the $\sinc$ function:
\begin{enumerate}
\item For any $|x| \ge \frac{1.2}{\pi}$, $\sinc(x) \le \frac{1}{\pi |x|}$.
\item For any $|x| \le \frac{1.2}{\pi}$, $\sinc(x) \in [1 - \frac{\pi^2 |x|^2}{6}, 1 - \frac{\pi^2 |x|^2}{10}]$.
\end{enumerate}

Without loss of generality, we assume both $R$ and $C$ are powers of 2 and $R \ge S$ (otherwise set $R=S$). Recall that $C$ is even in this section. 

We use $g(t)$ to denote the product of sinc functions in $H(t)$ for convenience:
$$
g(t)=\left( \sinc(C R \cdot t)^{C \log R} \cdot \sinc\big(C \cdot S \cdot t\big)^{C} \cdot \sinc\big(\frac{C \cdot S}{2} \cdot t\big)^{2C} \cdots \sinc\big(C \cdot t\big)^{C \cdot S} \right)
$$

We fix $l=\log_2 (S)$ in this section and rewrite $g(t)$ as 
$$
\sinc(C R \cdot t)^{C \log R} \cdot \prod_{j=0}^{l} \sinc\big(2^{-j} \cdot C \cdot S \cdot t\big)^{2^j \cdot C}.
$$ Before we show the properties of $H$, we consider the tail of $g(t)$.

\begin{claim}\label{clm:bounds_g}
\begin{enumerate}
\item $g(t)=\Theta(1)$ for $|t| \le \frac{1.2}{\pi CR \cdot \sqrt{C \log R}}$.

\item $g(t)=e^{-\Theta(|CR \cdot t|^2 \log R)}$ for $|t| \in [\frac{1.2}{\pi CR \cdot \sqrt{C \log R}}, \frac{1.2}{\pi CR}]$.

\item $g(t) \le (\frac{1}{\pi \cdot CR \cdot |t|})^{C \log R}$ for $|t| \in [\frac{1.2}{\pi CR}, \frac{1.2}{\pi C \cdot S}]$.

\item For any $i \in [l]$, $g(t) \le (\frac{1}{\pi \cdot CR \cdot |t|})^{C \log R} \cdot 1.2^{-(2^{i+1}-1)C}$ for any $|t| \in [\frac{1.2 \cdot 2^{i-1}}{\pi C \cdot S},\frac{1.2 \cdot 2^{i}}{\pi C \cdot S}]$.

\item $g(t) \le (\frac{1}{\pi CR \cdot t})^{C \log R} \cdot \prod_{j=0}^{l} (\frac{1}{\pi 2^{-j} \cdot C \cdot S \cdot t})^{2^j \cdot C}$ for $|t| \ge \frac{1.2 \cdot 2^{l}}{\pi C \cdot S} = \frac{1.2}{C \pi}$.
\end{enumerate}
\end{claim}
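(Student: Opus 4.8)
The plan is to prove all five parts by a direct case analysis on $|t|$, using nothing beyond the two pointwise bounds on $\sinc$ recalled just above the claim. The organizing observation is that each factor $\sinc(\omega t)^p$ appearing in $g$ has $|\omega t|\ge 1.2/\pi$ precisely when $|t|\ge\tfrac{1.2}{\pi\omega}$, so as $|t|$ grows the factors leave the ``flat'' regime (where we use bound~2, or simply $\sinc\le 1$) and enter the ``decaying'' regime (where we use bound~1) one block at a time: first $\sinc(CR\,t)^{C\log R}$ at $|t|=\tfrac{1.2}{\pi CR}$, then $\sinc(2^{-j}CS\,t)^{2^jC}$ at $|t|=\tfrac{1.2\cdot 2^{j}}{\pi CS}$ for $j=0,1,\dots,l$ (with $l=\log_2 S$). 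The finer threshold $\tfrac{1.2}{\pi CR\sqrt{C\log R}}$ separating parts~1 and~2 is where, still inside the first block, we switch from needing a two-sided estimate of $\sinc(CR\,t)^{C\log R}$ to needing only its exponential upper bound.

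For parts~1 and~2 every argument is $\le 1.2/\pi$. The upper bounds are immediate: in part~1 each $\sinc$-power is $\le 1$, so $g(t)\le 1$; in part~2, $g(t)\le\sinc(CR\,t)^{C\log R}\le\bigl(1-\tfrac{\pi^2(CR\,t)^2}{10}\bigr)^{C\log R}\le e^{-\Omega(|CR\,t|^2\log R)}$ by $1-u\le e^{-u}$ (the constant $C$ being absorbed into $\Omega$). For the matching lower bounds I would use $\sinc(x)\ge 1-\tfrac{\pi^2x^2}{6}\ge e^{-\pi^2 x^2/3}$, valid since $\tfrac{\pi^2 x^2}{6}\le\tfrac{1.2^2}{6}<\tfrac14$ and $1-u\ge e^{-2u}$ on $[0,\tfrac12]$. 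Multiplying over all factors gives $g(t)\ge\exp\!\bigl(-\tfrac{\pi^2}{3}\bigl[C\log R\,(CR\,t)^2+C\sum_{j=0}^{l}2^{-j}(CS\,t)^2\bigr]\bigr)$. Using $S\le R$ (the WLOG normalization) and the interval bound on $|t|$, the second bracketed sum is $O((CR\,t)^2)=O(1/\log R)$ after the prefactor $C$, while the first bracketed term is $O(1)$; hence $g(t)=\Theta(1)$ in part~1, and in part~2 the same estimate bounds the product of all factors other than $\sinc(CR\,t)^{C\log R}$ below by a positive constant, so $g(t)=e^{-\Theta(|CR\,t|^2\log R)}$.

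For parts~3,~4,~5 I would peel the blocks off one at a time. In part~3 only $\sinc(CR\,t)$ has crossed its threshold, so bound~1 gives its contribution $(\tfrac{1}{\pi CR|t|})^{C\log R}$, while every $\sinc(2^{-j}CS\,t)$ still has argument $\le\tfrac{1.2}{\pi}$ and is $\le 1$; multiplying yields the claim. In part~4, for $|t|\in[\tfrac{1.2\cdot 2^{i-1}}{\pi CS},\tfrac{1.2\cdot 2^{i}}{\pi CS}]$ the blocks $j=0,\dots,i-1$ have also crossed, with $2^{-j}CS|t|\ge\tfrac{1.2\cdot 2^{i-1-j}}{\pi}$, so bound~1 gives $\sinc(2^{-j}CS\,t)\le\tfrac{1}{1.2\cdot 2^{i-1-j}}\le\tfrac{1}{1.2}$; raising to the power $2^jC$, multiplying over $j=0,\dots,i-1$ (the blocks $j\ge i$ contribute $\le 1$, and $\sinc(CR\,t)^{C\log R}$ is handled as in part~3), and summing the geometric series of exponents $\{2^jC\}$ — together with the extra dyadic factors $2^{-(i-1-j)2^jC}$ — collapses to the stated $1.2^{-(2^{i+1}-1)C}$. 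In part~5, $|t|\ge\tfrac{1.2}{C\pi}=\tfrac{1.2\cdot 2^{l}}{\pi CS}$ forces every block into the decaying regime ($CR|t|\ge\tfrac{1.2R}{\pi}\ge\tfrac{1.2}{\pi}$ since $R\ge 1$, and $2^{-j}CS|t|\ge\tfrac{1.2\cdot 2^{l-j}}{\pi}\ge\tfrac{1.2}{\pi}$ for $j\le l$), so applying bound~1 to all factors and multiplying gives exactly the displayed product.

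The routine core of the argument is just substituting the two $\sinc$ estimates; the only delicate points are the exponent bookkeeping in the lower bound of part~1 — keeping the accumulated exponent $O(1)$, which is precisely why the interval there is shrunk by the $\sqrt{C\log R}$ factor and why one may assume $S\le R$ — and the geometric-series collapse in part~4, where one must verify that the combined powers (boosted by the dyadic slack) add up to at least $(2^{i+1}-1)C$ in the exponent. These are the steps I would write out in full; everything else is mechanical.
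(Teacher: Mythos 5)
Your proof follows exactly the paper's strategy: bound $\sinc(CR\,t)^{C\log R}$ and the product $\prod_{j=0}^{l}\sinc(2^{-j}CS\,t)^{2^jC}$ separately, switching each factor from the flat regime (sinc bound~2) to the decay regime (sinc bound~1) as $|t|$ crosses its threshold, and multiplying. Parts~1,~2,~3,~5 are fine and match the paper's treatment (your exponential form of the two-sided estimate is an equivalent repackaging of the paper's $(1-\Theta(2^{-j}))^{C}$ computation).

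The one place you explicitly deferred the arithmetic — part~4 — is where the check actually fails as you have set it up. You (correctly) observe that for $|t|\ge\frac{1.2\cdot 2^{i-1}}{\pi CS}$ the decay bound $\sinc(x)\le\frac{1}{\pi|x|}$ is only applicable to the blocks $j\le i-1$, since for $j=i$ the argument is only guaranteed to be $\ge\frac{0.6}{\pi}$, below the $\frac{1.2}{\pi}$ threshold. Summing your bounds over $j=0,\dots,i-1$ gives
\[
\prod_{j=0}^{i-1}\Bigl(\tfrac{1}{1.2\cdot 2^{\,i-1-j}}\Bigr)^{2^jC}
=1.2^{-(2^i-1)C}\cdot 2^{-(2^i-i-1)C},
\]
and the inequality $1.2^{-(2^i-1)C}\cdot 2^{-(2^i-i-1)C}\le 1.2^{-(2^{i+1}-1)C}$ reduces to $(2^i-i-1)\log 2\ge 2^i\log 1.2$, which is \emph{false} for $i=1$ and $i=2$ (for $i=1$ the left side is $0$). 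So the ``dyadic slack'' does not collapse to the stated exponent for small $i$. To be fair, the paper's own proof applies sinc bound~1 to $j=i$ (which is outside its range of validity on the lower half of the dyadic interval) and also substitutes $|t|$ with the upper endpoint of the interval rather than the lower; so the paper has a matching inaccuracy. The fix is harmless — either prove the slightly weaker bound $g(t)\le(\frac{1}{\pi CR|t|})^{C\log R}\cdot 1.2^{-(2^{i}-1)C}$, which still suffices for Lemma~\ref{lem:bounds_H} after re-indexing, or relabel the dyadic intervals as $[\frac{1.2\cdot 2^{i}}{\pi CS},\frac{1.2\cdot 2^{i+1}}{\pi CS}]$ so that all blocks $j\le i$ are genuinely past threshold — but it is not the ``mechanical'' verification you describe, and if written out as proposed it would not go through for $i\in\{1,2\}$.
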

\begin{proof}
We first bound $\sinc(C R \cdot t)^{C \log R}$ then bound $\prod_{j=0}^{l} \sinc\big(2^{-j} \cdot C \cdot S \cdot t\big)^{2^j \cdot C}$.

\begin{enumerate}
\item For $|t| \le \frac{1.2}{\pi CR}$, from the second property of $\sinc$ functions, 
$$
\sinc(C R \cdot t) \in \left[ 1 - \frac{\pi^2 |C R t|^2}{6}, 1 - \frac{\pi^2 |C R t|^2}{10} \right] \Rightarrow 
\sinc(C R \cdot t)^{C \log R}=\Theta(1) \text{ for } |t| \le \frac{1.2}{\pi CR \cdot \sqrt{C \log R}}
$$
and 
$$\sinc(C R \cdot t)^{C \log R}=e^{-\Theta(|CR \cdot t|^2 \log R)} \text{ for } t \in [\frac{1.2}{\pi CR \cdot \sqrt{C \log R}}, \frac{1.2}{\pi CR}].
$$

\item For $|t| \ge \frac{1.2}{\pi CR}$, from the first property of $\sinc$ functions, 

$$\sinc(C R \cdot t)^{C \log R} \le (\frac{1}{\pi \cdot CR \cdot |t|})^{C \log R}.$$
\end{enumerate}
Then we bound the tail of the product of sinc functions.

\begin{enumerate}
\item For $|t| \le \frac{1.2}{\pi C \cdot S}$, 
$$
\sinc\big(2^{-j} \cdot C \cdot S \cdot t\big)^{2^j \cdot C} \in \left[ \big(1 - \frac{\pi^2 \cdot |2^{-j} \cdot C \cdot S \cdot t|^2}{6} \big)^{2^j \cdot C}, \big( 1 - \frac{\pi^2 \cdot |2^{-j} \cdot C \cdot S \cdot t|^2}{10} \big)^{2^j \cdot C} \right].
$$ 
Notice that  $\pi^2 \cdot |2^{-j} \cdot C \cdot S \cdot t|^2$ is less than $1.2^2 \cdot 2^{-2j}$. Thus $\sinc\big(2^{-j} \cdot C \cdot S \cdot t\big)^{2^j \cdot C} = \big( 1 - \Theta(2^{-j}) \big)^C$ and their products over $j$ is 
$$\prod_{j=0}^{l} \sinc\big(2^{-j} \cdot C \cdot S \cdot t\big)^{2^j \cdot C} = \left( 1- \Theta\left( \sum_{j=0}^l 2^{-j} \right) \right)^C=\Theta(1)^C = \Theta(1).$$

\item Let us fix $i\le l$ and consider $\sinc\big(2^{-j} \cdot C \cdot S \cdot t\big)^{2^j \cdot C}$ for $|t| \in [\frac{1.2 \cdot 2^{i-1}}{\pi C \cdot S},\frac{1.2 \cdot 2^{i}}{\pi C \cdot S}]$. By the first property of sinc function, for $j \le i$,
$$
\sinc\big(2^{-j} \cdot C \cdot S \cdot t\big)^{2^j \cdot C} \le (\frac{1}{\pi \cdot 2^{-j} \cdot C \cdot S \cdot |t|})^{2^j \cdot C} \le (\frac{1}{1.2 \cdot 2^{-j+i}})^{2^j \cdot C} \le 1.2^{- 2^j \cdot C}.
$$
For $j>i$, we use the same analysis with the second property of the sinc function:
$$
\sinc\big(2^{-j} \cdot C \cdot S \cdot t\big)^{2^j \cdot C} \in \left[ \big(1 - \frac{\pi^2 \cdot |2^{-j} \cdot C \cdot S \cdot t|^2}{6} \big)^{2^j \cdot C}, \big( 1 - \frac{\pi^2 \cdot |2^{-j} \cdot C \cdot S \cdot t|^2}{10} \big)^{2^j \cdot C} \right]
$$ where $\pi^2 \cdot |2^{-j} \cdot C \cdot S \cdot t|^2$ is at least $1.2^2 \cdot 2^{-2(j-i)}$. Hence the product is
$$
\prod_{j=0}^{l} \sinc\big(2^{-j} \cdot C \cdot S \cdot t\big)^{2^j \cdot C} \le 1.2^{ - \sum_{j=0}^i 2^j \cdot C} \cdot \prod_{j=i+1}^l \left(1 - \frac{1.2^2 \cdot 2^{-2(j-i)}}{6} \right)^{2^j \cdot C} \le 1.2^{-(2^{i+1}-1)C}.
$$
\end{enumerate}
We get the tail bounds by combining the above discussion of $\sinc(C R \cdot t)^{C \log R}$ and $\prod_{j=0}^{l} \sinc\big(2^{-j} \cdot C \cdot S \cdot t\big)^{2^j \cdot C}$ together.
\end{proof}

Since $H(t)=s_0 \cdot g(t) * \rect_2(t) = s_0 \cdot \int_{t-1/2}^{t+1/2} g(x) \mathrm{d} x$, we have the following bounds on $H(t)$ based on Claim~\ref{clm:bounds_g}. 

\begin{lemma}\label{lem:bounds_H}
For any constant $C \ge 4$, 
\begin{enumerate}
\item $s_0=\Theta(\pi CR \cdot \sqrt{C \log R})$.
\item $H(t) = 1 \pm 0.01$ for $|t| \le \frac{1}{2} - \frac{1.2}{\pi C R}$.
\item $H(t) \lesssim \frac{s_0}{S} \cdot R^{-C/4}$ for $|t| \in [\frac{1}{2} + \frac{1.2}{\pi C R}, \frac{1}{2} + \frac{1.2}{\pi C \cdot S}]$.
\item $H(t) \lesssim s_0 \cdot R^{-C/4} \cdot 1.2^{-2^i C}$ for $|t| \in [\frac{1}{2} + \frac{1.2 \cdot 2^{i-1}}{\pi C \cdot S}, \frac{1}{2} + \frac{1.2 \cdot 2^{i}}{\pi C \cdot S}]$ of any $i \le [l]$.
\item $H(t) \le s_0 \cdot (\frac{1}{1.2\pi CR \cdot (|t|-\frac{1}{2})})^{C \log R} \cdot \big(\frac{1}{C \pi \cdot (|t|-\frac{1}{2})}\big)^{C S}$ for $t \ge \frac{1}{2} + \frac{1.2}{C \pi}$.
\end{enumerate}
\end{lemma}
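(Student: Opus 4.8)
The plan is to deduce every item from the single representation $H(t) = s_0\,(g*\rect_2)(t) = s_0\int_{t-1/2}^{t+1/2} g(x)\,\mathrm{d}x$ together with the pointwise envelope of $g$ already established in Claim~\ref{clm:bounds_g}; there is no new analytic content here, only the bookkeeping of converting those five envelope estimates into estimates of a length-$1$ running average. Observe first that every $\sinc$-power occurring in $g$ has an even exponent ($C\log R$ and each $2^jC$, with $C$ even), so $g\ge 0$ and $g$ is even; hence $H\ge 0$ and $H$ is even, and in items (2)--(5) it suffices to treat $t\ge 0$. The quantity we need before anything else is item (1): since $H(0)=1$ we have $s_0^{-1}=\int_{-1/2}^{1/2}g$. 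By parts (1)--(2) of Claim~\ref{clm:bounds_g} the integrand is $\Theta(1)$ on $|x|\le\tfrac{1.2}{\pi CR\sqrt{C\log R}}$ and then decays like $e^{-\Theta(|CRx|^2\log R)}$ up to $|x|=\tfrac{1.2}{\pi CR}$, so these two regions already contribute $\Theta\!\big(\tfrac{1}{CR\sqrt{C\log R}}\big)$, while parts (3)--(5) show the rest of $[-1/2,1/2]$ contributes only an $R^{-\Omega(C)}$ fraction of this. Hence $\int_{-1/2}^{1/2}g=\Theta\!\big(\tfrac{1}{CR\sqrt{C\log R}}\big)$ and $s_0=\Theta(\pi CR\sqrt{C\log R})$.

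For item (2), when $|t|\le\tfrac12-\tfrac{1.2}{\pi CR}$ the averaging window $[t-1/2,t+1/2]$ contains $[-\tfrac{1.2}{\pi CR},\tfrac{1.2}{\pi CR}]$, so its complement lies in $\{|x|\ge\tfrac{1.2}{\pi CR}\}$ and I would write $H(t)=s_0\int_{\R}g-s_0\int_{x\notin[t-1/2,t+1/2]}g$. Bounding $g$ on $\{|x|\ge\tfrac{1.2}{\pi CR}\}$ by the decaying polynomial of parts (3)--(5) of Claim~\ref{clm:bounds_g}, the first term is $1+s_0\cdot O\!\big(\int_{1/2}^{\infty}g\big)=1+R^{-\Omega(C)}$ and the subtracted term is at most $s_0\int_{|x|\ge\frac{1.2}{\pi CR}}g=R^{-\Omega(C)}$; for the large constant $C$ fixed in Theorem~\ref{thm:HwithScaling} both errors fall below $0.01$, which gives $H(t)=1\pm0.01$. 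Running the same computation with $t$ unrestricted gives $H(t)\le s_0\int_{\R}g\le 1.01$, which is precisely the global bound $|H|\le1.01$ needed in Theorem~\ref{thm:HwithScaling}.

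Items (3)--(5) would all follow the template $H(t)=s_0\int_{t-1/2}^{t+1/2}g\le s_0\cdot\sup_{x\in[t-1/2,t+1/2]}g(x)$, the key point being that the piecewise envelope supplied by parts (3)--(5) of Claim~\ref{clm:bounds_g} is monotonically decreasing for $x\ge\tfrac{1.2}{\pi CR}$ (each $\sinc$-power decreases once its argument passes $\tfrac{1.2}{\pi}$, and the prefactors $1.2^{-(2^{i+1}-1)C}$ only shrink as the scale index $i$ grows), so the supremum over the window is attained at the left endpoint $x=t-1/2$. Substituting: for $t-1/2\in[\tfrac{1.2}{\pi CR},\tfrac{1.2}{\pi CS}]$ part (3) gives $g(t-1/2)\le1.2^{-C\log R}\le R^{-C/4}$, and combining the slightly sharper $\int_{t-1/2}^{\infty}g\lesssim\tfrac{1}{CR\log R}R^{-C/4}$ with $s_0=\Theta(CR\sqrt{C\log R})$ and $S\le R$ yields $H(t)\lesssim\tfrac{s_0}{S}R^{-C/4}$; for $t-1/2\in[\tfrac{1.2\cdot 2^{i-1}}{\pi CS},\tfrac{1.2\cdot 2^{i}}{\pi CS}]$ part (4) contributes the extra factor $1.2^{-(2^{i+1}-1)C}\le1.2^{-2^iC}$, giving item (4); and for $t-1/2\ge\tfrac{1.2}{C\pi}$ part (5) gives $g(t-1/2)\le\big(\tfrac{1}{\pi CR(t-1/2)}\big)^{C\log R}\prod_{j=0}^{l}\big(\tfrac{1}{\pi 2^{-j}CS(t-1/2)}\big)^{2^jC}$, and collapsing the geometric exponent sum $\sum_{j=0}^{l}2^jC=C(2S-1)$ and absorbing the residual powers of $2$ and of $1.2$ into the two displayed factors produces item (5).

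The genuinely routine part is the constant and exponent accounting — checking $\log_2 1.2>\tfrac14$ so that $1.2^{-C\log R}\le R^{-C/4}$, verifying that each $R^{-\Omega(C)}$ error beats the constant $0.01$ for the $C$ of Theorem~\ref{thm:HwithScaling}, and tracking the $\Theta$-constants coming out of item (1). The one step that needs genuine care is confirming that the whole window $[t-1/2,t+1/2]$ stays inside the decreasing portion of the envelope, so that replacing the integral by "length times left-endpoint value" is legitimate; this uses $C\ge 4$, which keeps $\tfrac{1.2}{C\pi}$ and hence the right endpoint of every relevant window bounded by $O(1)$, while the left endpoint is always $\ge\tfrac{1.2}{\pi CR}$. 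The most error-prone single computation is the telescoping in part (5) that rewrites $\prod_{j=0}^{l}\big(\tfrac{1}{\pi 2^{-j}CS(t-1/2)}\big)^{2^jC}$ in the closed form $\big(\tfrac{1}{C\pi(|t|-1/2)}\big)^{CS}$, the leftover factors of $2$ and $1.2$ being exactly what turns $\big(\tfrac{1}{\pi CR(|t|-1/2)}\big)^{C\log R}$ into the stated $\big(\tfrac{1}{1.2\pi CR(|t|-1/2)}\big)^{C\log R}$.
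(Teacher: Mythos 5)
Your proof is correct and follows essentially the same approach as the paper: both derive item (1) by computing $s_0^{-1}=\int_{-1/2}^{1/2}g$ from the peak contribution of Claim~\ref{clm:bounds_g}, and both obtain items (2)--(5) by plugging the piecewise envelope of $g$ into $H(t)=s_0\int_{t-1/2}^{t+1/2}g$. The only cosmetic difference is that you bound the window integral by its decreasing envelope at the left endpoint (together with an explicit tail integral to recover the $1/S$ factor in item (3)), whereas the paper bounds it as a sum over the dyadic segments from Claim~\ref{clm:bounds_g}; the two bookkeepings are equivalent.
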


\begin{proof}
We bound the integration of different intervals of $g(t)$ as follows:
\begin{enumerate}
\item $\int_{\frac{-1.2}{\pi CR}}^{\frac{1.2}{\pi CR}} g(x) \mathrm{d} x = \int_{\frac{-1.2}{\pi CR \cdot \sqrt{C \log R}}}^{\frac{1.2}{\pi CR \cdot \sqrt{C \log R}}} g(x) \mathrm{d} x + 2\int_{\frac{1.2}{\pi CR \cdot \sqrt{C \log R}}}^{\frac{1.2}{\pi CR}} e^{-\Theta(|CR \cdot x|^2 \log R)} \mathrm{d} x= \Theta(\frac{1}{\pi CR \cdot \sqrt{C \log R}})$.

\item $\int_{\frac{1.2}{\pi CR}}^{\frac{1.2}{\pi C \cdot S}} g(x) \mathrm{d} x \le \int_{\frac{1.2}{\pi CR}}^{\frac{1.2}{\pi C \cdot S}} (\frac{1}{\pi \cdot CR \cdot x})^{C \log R} \mathrm{d} x \le \frac{1.2}{\pi C \cdot S} \cdot 1.2^{-C \log R}$.

\item For any $i \in [l]$ of $l=\log_2 S$, 
\begin{align*}
\int_{\frac{1.2 \cdot 2^{i-1}}{\pi C \cdot S}}^{\frac{1.2 \cdot 2^{i}}{\pi C \cdot S}} g(x) \mathrm{d} x & \le \int_{\frac{1.2 \cdot 2^{i-1}}{\pi C \cdot S}}^{\frac{1.2 \cdot 2^{i}}{\pi C \cdot S}} (\frac{1}{\pi \cdot CR \cdot x})^{C \log R} \cdot 1.2^{-(2^{i+1}-1)C} \mathrm{d} x \\
& \le \frac{1.2 \cdot 2^{i}}{\pi C \cdot S} \cdot (\frac{S}{1.2 \cdot 2^{i-1} R})^{C \log R} \cdot 1.2^{-(2^{i+1}-1)C} \\
& \le \frac{1.2 \cdot 2^{i}}{\pi C \cdot S} \cdot R^{-C/4} \cdot 1.2^{-2^{i} C}.
\end{align*}

\item For $|t| \ge \frac{1.2}{C \pi}$, 
\begin{align*}
\int_{t}^{t+1} g(x) \mathrm{d} x & \le \int_{t}^{t+1} (\frac{1}{\pi CR \cdot x})^{C \log R} \cdot \prod_{j=0}^{l} (\frac{1}{\pi 2^{-j} \cdot C \cdot S \cdot x})^{2^j \cdot C} \mathrm{d} x \\
& \le (\frac{1}{\pi CR \cdot t})^{C \log R} \cdot (\frac{1}{\pi C \cdot t})^{2^l \cdot C} \\
& \le (\frac{1}{\pi CR \cdot t})^{C \log R} \cdot (\frac{1}{\pi C \cdot t})^{C S}.
\end{align*}
\end{enumerate}

Next we prove all claims in this lemma.
\begin{enumerate}
\item For $s_0$, notice that 
$$
\int_{-1/2}^{1/2} g(x) \mathrm{d} x \le \int_{\frac{-1.2}{\pi CR}}^{\frac{1.2}{\pi CR}} g(x) \mathrm{d} x + \int_{|x| \in (\frac{1.2}{\pi CR},1/2]} g(x) \mathrm{d} x=\Theta(\frac{1}{\pi CR \cdot \sqrt{C \log R}}) + O(\frac{1.2}{\pi C \cdot S} \cdot 1.2^{-C \log R}),
$$ which also indicates $s_0 \in [1,1+10^{-3}] \cdot 1/\left(\int_{\frac{-1.2}{\pi CR}}^{\frac{1.2}{\pi CR}} g(x) \mathrm{d} x \right)$.

\item When $|t|<\frac{1}{2} - \frac{1.2}{\pi C R}$, 
$H(t) = s_0 \cdot \left( \int_{\frac{-1.2}{\pi CR}}^{\frac{1.2}{\pi CR}} g(x) \mathrm{d} x + \int_{[t-1/2,t+1/2] \setminus [\frac{-1.2}{\pi CR},\frac{1.2}{\pi CR}]} g(x) \mathrm{d} x \right)$, which is in 
$s_0 \cdot [1,1+10^{-3}] \cdot \int_{\frac{-1.2}{\pi CR}}^{\frac{1.2}{\pi CR}} g(x) \mathrm{d} x \subseteq [1 - 0.01, 1+ 0.01].$

\item When $|t| \in [\frac{1}{2} - \frac{1.2}{\pi C R}, \frac{1}{2} + \frac{1.2}{\pi C R}]$, $H(t) \in [0,1]$.

\item When $|t| \in [\frac{1}{2} + \frac{1.2}{\pi C R}, \frac{1}{2} + \frac{1.2}{\pi C \cdot S}]$, 
$$
H(t) \le s_0 \cdot \left( \int_{\frac{1.2}{\pi CR}}^{\frac{1.2}{\pi C \cdot S}} g(x) \mathrm{d} x + \sum_{j=1}^l \int_{\frac{1.2 \cdot 2^{j-1}}{\pi C \cdot S}}^{\frac{1.2 \cdot 2^{j}}{\pi C \cdot S}} g(x) \mathrm{d} x + \int_{\frac{1.2}{\pi}}^{\frac{1.2}{\pi} + 1} g(x) \mathrm{d} x\right) \le 2 s_0 \cdot  \int_{\frac{1.2}{\pi CR}}^{\frac{1.2}{\pi C \cdot S}} g(x) \mathrm{d} x.
$$

\item When $|t| \in [\frac{1}{2}+\frac{1.2 \cdot 2^{i-1}}{\pi C \cdot S}, \frac{1}{2}+\frac{1.2 \cdot 2^{i}}{\pi C \cdot S}]$ of a positive integer $i<l$, 
$$
H(t) \le s_0 \cdot \left( \sum_{j=i}^l \int_{\frac{1.2 \cdot 2^{j-1}}{\pi C \cdot S}}^{\frac{1.2 \cdot 2^{j}}{\pi C \cdot S}} g(x) \mathrm{d} x + \int_{\frac{1.2}{\pi}}^{\frac{1.2}{\pi} + 1} g(x) \mathrm{d} x\right) \le 2s_0 \cdot \frac{1.2 }{C \pi} \cdot R^{-C/4} \cdot 1.2^{-2^i C}.
$$

\item When $t>\frac{1}{2}+\frac{1.2}{C \pi}$, we use the bound in the last item of the above discussion.
\end{enumerate}
\end{proof}

\subsection{Proof of Theorem~\ref{thm:HwithScaling}}\label{sec:proof_filter}
We finish the proof of Theorem~\ref{thm:HwithScaling} using Lemma~\ref{lem:bounds_H} for $\alpha=\frac{1}{2} + \frac{1.2}{\pi C R}$. Without loss of generality, we assume $R \ge S$ in this proof (otherwise set $R=S$).

We first show
$$
\int_{-1}^1 |g(x) \cdot H\big( \alpha x\big)|^2 \mathrm{d} x \ge 0.9 \int_{-1}^1 |g(x)|^2 \mathrm{d} x.
$$
From the second property of $H$ in Lemma~\ref{lem:bounds_H}, $H\big( \alpha x\big) \ge 1 - 0.01$ for any $|x| \le \frac{\frac{1}{2} - \frac{1.2}{\pi C R}}{\alpha} = 1 - \frac{2.4}{\pi C R + 2.4}$ such that
$$\int_{-1 + \frac{2.4}{\pi C R/2 + 1.2}}^{1 - \frac{2.4}{\pi C R/2 + 1.2}} |g(x) \cdot H(\alpha x)|^2 \mathrm{d} x \ge 0.99^2 \int_{-1 + \frac{2.4}{\pi C R/2 + 1.2}}^{1 - \frac{2.4}{\pi C R/2 + 1.2}} |g(x)|^2 \mathrm{d} x.$$ 

At the same time, $|g(t)|^2 \le R \cdot \underset{x \sim [-1,1]}{\E}[|g(x)|^2] = R/2 \cdot \int_{-1}^1 |g(x)|^2 \mathrm{d} x$ for any $t \in [-1,1]$. This indicates
$$\int_{-1 + \frac{2.4}{\pi C R + 2.4}}^{1 - \frac{2.4}{\pi C R + 2.4}} |g(x)|^2 \mathrm{d} x \ge (1 -  \frac{R/2 \cdot 2.4}{\pi C R + 2.4}) \int_{-1}^1 |g(x)|^2 \mathrm{d} x.
$$
The first property follows from these two inequalities.

In the rest of this proof, we apply Lemma~\ref{lem:bounds_H} to prove:
$$
\int_{-\infty}^{-1} |g(x) \cdot H\big( \alpha x\big)|^2 \mathrm{d} x + \int_{1}^{\infty} |g(x) \cdot H\big( \alpha x\big)|^2 \mathrm{d} x\le 0.04 \int_{-1}^1 |g(x)|^2 \mathrm{d} x.
$$
We split $\int_{1}^{\infty} |g(x) \cdot H\big( \alpha x\big)|^2 \mathrm{d} x$ into several intervals:
$$
\int_{1}^{(\frac{1}{2}+\frac{1.2}{\pi C \cdot S})/\alpha} |g(x) \cdot H\big( \alpha x\big)|^2 \mathrm{d} x + \sum_{i=1}^{\log_2 S} \int_{(\frac{1}{2} + \frac{1.2 \cdot 2^{i-1}}{\pi C \cdot S})/\alpha}^{(\frac{1}{2} + \frac{1.2 \cdot 2^{i}}{\pi C \cdot S})/\alpha} |g(x) \cdot H\big( \alpha x\big)|^2 \mathrm{d} x + \int_{(\frac{1}{2} + \frac{1.2}{\pi C})/\alpha}^{\infty} |g(x) \cdot H\big( \alpha x\big)|^2 \mathrm{d} x.
$$
In the first two terms, we rewrite $|g(t)| \le \poly(R) \cdot \|g\|_2 \cdot t^S$ as $\poly(R) \cdot \|g\|_2 \cdot e^{(t-1)S}$. By the third and fourth properties of $H(t)$ in Lemma~\ref{lem:bounds_H}, their summations is less than $0.01 \|g\|_2^2$. For the last term, given the last property of $H(t)$ in Lemma~\ref{lem:bounds_H} and a large constant $C$, we have $$H(\alpha t) \le s_0 \cdot (\frac{1}{1.2 R})^{C \log R} \cdot (\frac{1}{2t})^S \text{ when } t \ge (\frac{1}{2} + \frac{1.2}{\pi C})/\alpha.$$ It is straightforward to verify that $\int_{1}^{\infty} |g(x) \cdot H\big( \alpha x\big)|^2 \mathrm{d} x \le 0.02 \cdot \|g\|^2_2$.

The last property follows from the upper bounds in Lemma~\ref{lem:bounds_H}.

\section{Omitted Proofs in Section~\ref{sec:condition_num_growth}}\label{append:sparse_signal}
We first prove Theorem~\ref{thm:sparse_freq_estimation} then finish the proof of Theorem~\ref{thm:worst_case_sFFT} and \ref{thm:bound_growth_around1} in Appendix~\ref{sec:proof_R} and \ref{sec:growth} separately.

\subsection{Proof of Theorem~\ref{thm:sparse_freq_estimation}}\label{sec:proof_sparse_freq}
We finish the proof of Theorem~\ref{thm:sparse_freq_estimation} in this section. The only difference compared to Theorem~\ref{thm:general_bound} is to use a biased distribution $D$ such that we could improve the sample complexity to $\wt{O}(k \log \frac{F}{\Delta' \epsilon})$.

\paragraph{How to Generate Samples.} We will use a distribution $D$ not uniform on $[-1,1]$ to generate the random samples. For $m$ samples $x_1,\cdots,x_m \sim D$, we assign a weight $w_i=\frac{1}{2m \cdot D(x_i)}$ for each sample $x_i$ such that for any function $h$,
$$
\E_{x_1,\cdots,x_m \sim D}\bigg[\sum_{i=1}^m w_i |h(x_i)|^2 \bigg]=\sum_{i=1}^m \E_{x_i 
\sim D}\bigg[ \frac{1}{2m \cdot D(x_i)} |h(x_i)|^2 \bigg] = \sum_{i=1}^m \E_{x 
\sim [-1,1]}\bigg[ \frac{1}{m} |h(x_i)|^2 \bigg]=\|h\|_2^2.
$$

\cite{CP18} presented an explicit distribution $D$ such that $\tilde{O}(k)$ samples could guarantee $\sum_{i=1}^m w_i |g(x_i)|^2 $ is close to $\|g\|_2^2$ with high probability. For completeness, we show it with our improved bound $R$.
\begin{lemma}\label{lem:generate_sampling}
Given the sparsity $k$, there exists a constant $c$ such that the distribution 
\begin{flalign*}
& D_{\FF}(x)=
\begin{cases}
\frac{c}{(1-|x|) \log k }, & \text{ for } |x| \le 1-\frac{1}{k^2 \log^2 k}\\
c \cdot k^2 \log k, & \text{ for } |x| > 1-\frac{1}{k^2 \log^2 k}
\end{cases} 
\end{flalign*}
guarantees for any $k$-Fourier-sparse signal $g$, $\underset{x \in [-1,1]}{\sup} \frac{1}{2 D(x)} \cdot \frac{|g(x)|^2}{\|g\|_2^2} = O(k \log^2 k)$.

Moreover, $m=O(\frac{k \log^2 k \log \frac{1}{\delta}}{\epsilon^2})$ samples $x_1,\cdots,x_m$ from $D$ with weights $w_i=\frac{1}{2m \cdot D(x_i)}$ for $i \in [m]$ guarantee that, with probability at least $1-\delta$,
$$
\sum_{i=1}^m w_i \cdot |g(x_i)|^2 \in [1\pm \epsilon] \cdot \|g\|_2^2.
$$
\end{lemma}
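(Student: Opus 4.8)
The plan is to prove the supremum ("leverage") bound $\sup_{x\in[-1,1]}\frac{|g(x)|^2}{2D_\FF(x)\,\|g\|_2^2}=O(k\log^2 k)=:R'$ first, and then read off the concentration claim from it. For the latter step I would set $X_i:=\frac{|g(x_i)|^2}{2D_\FF(x_i)\,\|g\|_2^2}$ for $x_i\sim D_\FF$; these are i.i.d., lie in $[0,R']$, and by the identity $\E\big[\sum_i w_i|g(x_i)|^2\big]=\|g\|_2^2$ established just above they have mean $1$, while $\sum_i w_i|g(x_i)|^2=\frac{\|g\|_2^2}{m}\sum_i X_i$. So Lemma~\ref{lem:chernoff_bound} with $m=\Theta(R'\log(1/\delta)/\eps^2)=\Theta(k\log^2 k\log(1/\delta)/\eps^2)$ gives $\sum_i w_i|g(x_i)|^2\in[1\pm\eps]\,\|g\|_2^2$ with probability $1-\delta$. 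One should also record that $c=\Theta(1)$: the mass $\int_{-1}^1 D_\FF$ splits into a bulk piece $\frac{c}{\log k}\int_{1/(k^2\log^2 k)}^1\frac{d\rho}{\rho}=O(c)$ and an edge piece $ck^2\log k\cdot\frac{2}{k^2\log^2 k}=O(c)$. Thus everything reduces to the sup bound.

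For the sup bound I would split on how close $x$ is to $\pm1$. In the \emph{edge} region $|x|>1-\frac1{k^2\log^2 k}$, where $2D_\FF(x)=2ck^2\log k$, Theorem~\ref{thm:worst_case_sFFT} gives $|g(x)|^2\le R\,\|g\|_2^2=O(k^3\log^2 k)\,\|g\|_2^2$, so the ratio is $O(k^3\log^2 k/(k^2\log k))=O(k\log k)$. This is the single place where the improved $R=O(k^3\log^2 k)$ (rather than the earlier $\wt O(k^4)$) matters: it is exactly what lets the edge region have width $\frac1{k^2\log^2 k}$ and height $k^2\log k$ while keeping the ratio $O(k\log^2 k)$, hence $D_\FF$ of the stated form — and total mass $O(1)$ — throughout the bulk.

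The substance is the \emph{bulk} region $|x|\le 1-\frac1{k^2\log^2 k}$, where $2D_\FF(x)=\frac{2c}{(1-|x|)\log k}$, so what is needed is the pointwise bound $|g(x)|^2\le\frac{O(k\log k)}{1-|x|}\,\|g\|_2^2$ for every $k$-Fourier-sparse $g$. The $\frac1{1-|x|}$ shape comes for free by rescaling: assuming WLOG $x\ge0$ and putting $\rho:=1-x$, the sub-interval $[1-2\rho,1]\subseteq[-1,1]$ has $x$ as its midpoint, and $h(s):=g(x+\rho s)$ is again $k$-Fourier-sparse with $h(0)=g(x)$ and $\|h\|_2^2=\frac1{2\rho}\int_{1-2\rho}^1|g|^2\le\|g\|_2^2/\rho$; so the bound is equivalent to a \emph{centre} estimate $|h(0)|^2\le O(k\log k)\,\|h\|_2^2$ valid for all $k$-Fourier-sparse $h$. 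My plan is to obtain this centre estimate by running the argument of~\cite{CP18} with the sharper $R$ of Theorem~\ref{thm:worst_case_sFFT} plugged in. For degree-$(k-1)$ polynomials it is the classical Christoffel-function bound $\sup_p|p(0)|^2/\|p\|_2^2=\Theta(k)$, and since those polynomials are limits of $k$-Fourier-sparse signals the linear order in $k$ is unavoidable; \cite{CP18} supplies the matching upper bound, up to the $\log k$.

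I expect the centre (equivalently, bulk) estimate to be the main obstacle, and a genuine one: it does not follow from the elementary linear-combination identity of Corollary~\ref{cor:linear_comb}. Even exploiting the geometric decay $|\alpha_j|=O(2^{-j/k})$ of the coefficients — readable off the proof of Theorem~\ref{thm:sup_sup_coef} — and averaging over the shift parameter $\theta$, that route only recovers the coarse uniform bound $\sup_x|g(x)|^2/\|g\|_2^2\le R=\wt O(k^3)$, because the $d=\wt O(k^2)$ evaluation nodes $x-j\theta$ must all remain inside $[-1,1]$ (outside, $|g|^2$ can be as large as $\poly(k)\,|t|^{\wt O(k^2)}\|g\|_2^2$ by Theorem~\ref{thm:bound_growth_around1}), which pins the spacing at $\theta=\wt O(1/k^2)$ and costs a factor $1/\theta=\wt O(k^2)$. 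Getting the sharp $\wt O(k)$ — independent of $\rho$ after the rescaling — requires the finer interpolation/Christoffel analysis of~\cite{CP18}; reproducing it with the new $R$ is the only real work, and is what produces the $O(k\log^2 k)$ sup bound and hence the $\wt O(k)\log\frac{F}{\Delta'\delta}$ sample complexity of Theorem~\ref{thm:restate_sparse}.
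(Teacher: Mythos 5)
Your proposal is correct and follows the same route as the paper: apply the Chernoff bound (Lemma~\ref{lem:chernoff_bound}) to the leverage random variables $z(x_i)=|g(x_i)|^2/(2D_\FF(x_i))$, which have mean $\|g\|_2^2$ and are bounded by the leverage supremum. The paper's own proof is three lines and simply \emph{asserts} $\sup_{x}z(x)/\E z = O(k\log^2 k)$ (deferring it, implicitly, to \cite{CP18}); your edge/bulk split, the rescaling that turns the bulk bound into a ``centre estimate'' $|h(0)|^2\lesssim \wt O(k)\|h\|_2^2$, and your observation that $R=O(k^3\log^2 k)$ is what makes the edge region match up, are all correct and strictly more detailed than what appears in the paper. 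One small imprecision: you write that the centre estimate is obtained ``by running the argument of \cite{CP18} with the sharper $R$ plugged in,'' but the centre estimate is a fact about the interior of the interval that does not use $R$ at all; the improved $R$ enters only in the edge region $|x|>1-\frac{1}{k^2\log^2 k}$, exactly as you identified two paragraphs earlier. Your diagnosis that the naive route via Corollary~\ref{cor:linear_comb} cannot beat $\wt O(k^2)$ spacing (hence only reproduces $R$, not the $\rho$-free $\wt O(k)$ centre bound) is also correct and is precisely why \cite{CP18} is needed here rather than Theorem~\ref{thm:worst_case_sFFT} alone.
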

\begin{proof}
Given $D$ and the $k$-Fourier-sparse signal $g$, let $z(x)$ denote $\frac{|g(x)|^2}{2 D(x)}$ for $x \in [-1,1]$. We have $\E_{x \sim D}\big[ z(x) \big]=\E_{x \sim [-1,1]}\big[ |g(x)|^2 \big]=\|g\|_2^2$ and $\underset{x \in \supp(D)}{\sup} \frac{z(x)}{\E_{x' \sim D}[z(x')]} = O( k \log^2 k)$. We apply the Chernoff bound in Lemma~\ref{lem:chernoff_bound} on the random variables $z(x_1),\cdots,z(x_m)$ to obtain the statement.
\end{proof}

Similar to Lemma~\ref{lem:good_sampling}, we state the following version for Fourier-sparse signals.

\begin{lemma}\label{lem:good_sampling_sparse}
Given the sparsity $k, f_0$  and $\Delta$, let $g$ be a $k$-Fourier-sparse signal $g(t)=\sum_{i \in [k]} v_i \cdot e^{2 \pi \i f_i t}$ with $f_i \subseteq [f_0-\Delta,f_0+\Delta]$ and $\Delta'=\Delta + O(\frac{R \log k + k^2 \log^2 k}{T})$.
    
Let $y(t)=g(t)+\eta(t)$ be the observable signal on $[-1,1]$ where the noise $\|\eta\|^2_2 \le \epsilon \|g\|^2_2$ for a sufficiently small constant $\epsilon$. There exist a constant $\gamma$ and an algorithm such that for any $\beta \le \frac{\gamma}{\Delta'}$, it takes $O(k \log^2 k)$ samples to output $\alpha$ satisfying $|y_H(\alpha) e^{2 \pi i f_0 \beta}-y_H(\alpha+\beta)| \le 0.3 |y_H(\alpha)|$ with probability at least 0.6.
\end{lemma}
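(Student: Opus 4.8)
The plan is to follow the proof of Lemma~\ref{lem:good_sampling} essentially verbatim, changing only one thing: the uniform sampling of Algorithm~\ref{alg:freq_est} is replaced by the biased distribution $D$ of Lemma~\ref{lem:generate_sampling}. Concretely, I would draw $m=O(k\log^2 k)$ points $x_1,\ldots,x_m\sim D$, attach weights $w_i=\frac{1}{2m\,D(x_i)}$, evaluate $y_H(x_i)=y(x_i)\,H(\alpha x_i)$, and output $\alpha$ from the distribution $D_m(x_i)=\frac{w_i|y_H(x_i)|^2}{\sum_j w_j|y_H(x_j)|^2}$. A $k$-Fourier-sparse signal satisfies the three hypotheses of Theorem~\ref{thm:general_bound} with $R=O(k^3\log^2 k)$ and $S=O(k^2\log k)$ by Theorems~\ref{thm:worst_case_sFFT} and~\ref{thm:bound_growth_around1}, so the filter $H$ from Theorem~\ref{thm:HwithScaling} and the value $\Delta'=\Delta+O(R\log R+S\log S)=\Delta+O(R\log k+k^2\log^2 k)$ are as claimed, and the Fourier-analytic bound $\|z\|_2^2\lesssim(\gamma^2+\epsilon)\|g\|_2^2$ for $z(t)=y_H(t)e^{2\pi\i f_0\beta}-y_H(t+\beta)$ and $\beta\le\frac{\gamma}{\Delta'}$ carries over unchanged from Section~\ref{sec:proof_good_sampling}, since that argument used only the three hypotheses.

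Next I would re-establish, against $D$, the three concentration facts behind Lemma~\ref{lem:good_sampling}, using that the weighted estimator is unbiased, $\E_{x_1,\ldots,x_m\sim D}\big[\sum_i w_i|h(x_i)|^2\big]=\|h\|_2^2$ for every $h$ (as in the paragraph preceding Lemma~\ref{lem:generate_sampling}). \textbf{(i)} The filtered signal $g_H=g\cdot H(\alpha\,\cdot)$ is \emph{not} $k$-Fourier-sparse, but it is pointwise dominated by $g$: $|H(\alpha x)|\le 1.01$ by Theorem~\ref{thm:HwithScaling} gives $|g_H(x)|^2\le 1.03\,|g(x)|^2$, while $\|g_H\|_2^2\ge 0.9\,\|g\|_2^2$ by the same theorem, so
\[
\sup_{x\in[-1,1]}\frac{1}{2D(x)}\cdot\frac{|g_H(x)|^2}{\|g_H\|_2^2}\le\frac{1.03}{0.9}\cdot\sup_{x\in[-1,1]}\frac{1}{2D(x)}\cdot\frac{|g(x)|^2}{\|g\|_2^2}=O(k\log^2 k)
\]
by Lemma~\ref{lem:generate_sampling} applied to the $k$-sparse $g$; the Chernoff bound of Lemma~\ref{lem:chernoff_bound} then makes $\sum_i w_i|g_H(x_i)|^2$ concentrate within $1\pm 0.02$ of $\|g_H\|_2^2$ with probability $0.99$ once $m=O(k\log^2 k)$. \textbf{(ii)} $\E\big[\sum_i w_i|\eta_H(x_i)|^2\big]=\|\eta_H\|_2^2\le 1.03\,\epsilon\,\|g\|_2^2$, so Markov bounds $\sum_i w_i|\eta_H(x_i)|^2=O(\epsilon)\|g\|_2^2$ with constant probability. \textbf{(iii)} $\E\big[\sum_i w_i|z(x_i)|^2\big]=\|z\|_2^2\lesssim(\gamma^2+\epsilon)\|g\|_2^2$, so Markov bounds $\sum_i w_i|z(x_i)|^2\le 0.01\,\|g\|_2^2$ for $\gamma,\epsilon$ small. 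Combining (i) and (ii) via Cauchy--Schwarz exactly as in Claim~\ref{clm:samples_y_H} gives $\sum_i w_i|y_H(x_i)|^2\ge 0.8\,\|g\|_2^2$.

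Finally I would conclude as in the proof of Lemma~\ref{lem:good_sampling}: conditioning on the events above and computing
\[
\E_{\alpha\sim D_m}\left[\frac{|z(\alpha)|^2}{|y_H(\alpha)|^2}\right]=\sum_{i=1}^m\frac{|z(x_i)|^2}{|y_H(x_i)|^2}\cdot\frac{w_i|y_H(x_i)|^2}{\sum_j w_j|y_H(x_j)|^2}=\frac{\sum_i w_i|z(x_i)|^2}{\sum_j w_j|y_H(x_j)|^2}\le\frac{0.01}{0.8},
\]
so by Markov $|y_H(\alpha)e^{2\pi\i f_0\beta}-y_H(\alpha+\beta)|=|z(\alpha)|\le 0.3\,|y_H(\alpha)|$ with probability at least $0.6$. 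The only genuinely new ingredient relative to Lemma~\ref{lem:good_sampling}---and the step I expect to need the most care---is item (i): the distribution $D$ is tailored so that $\frac{1}{2D(x)}|g(x)|^2/\|g\|_2^2$ is $O(k\log^2 k)$ for $k$-\emph{sparse} $g$, and one must verify it still controls the filtered signal $g_H$, which lies in a strictly larger class. The pointwise-domination bound $|g_H|\le 1.01|g|$ together with $\|g_H\|_2=\Theta(\|g\|_2)$ is exactly what bridges this gap; everything else is a weighted rerun of Section~\ref{sec:proof_good_sampling}.
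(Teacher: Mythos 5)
Your proposal is correct and matches the approach the paper intends: replace uniform sampling with the biased distribution $D$ of Lemma~\ref{lem:generate_sampling}, and rerun the proof of Lemma~\ref{lem:good_sampling} with the unbiased weighted estimator $\sum_i w_i|\cdot|^2$. The paper does not actually write out a proof of Lemma~\ref{lem:good_sampling_sparse} (it only gives Algorithm~\ref{alg:freq_est_sparse} and points back to the uniform-sampling argument), and you have correctly identified and resolved the one subtlety it glosses over---that $D$ is calibrated for $k$-Fourier-sparse signals whereas the filtered $g_H$ is not $k$-sparse---via the pointwise domination $|g_H(x)|^2\le 1.03\,|g(x)|^2$ together with $\|g_H\|_2^2\ge 0.9\,\|g\|_2^2$ from Theorem~\ref{thm:HwithScaling}.
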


We show our algorithm in Algorithm~\ref{alg:freq_est_sparse}. We finish the proof of Theorem~\ref{thm:sparse_freq_estimation}.

\begin{algorithm}
\caption{Obtain one good $\alpha$}\label{alg:freq_est_sparse}
\begin{algorithmic}[1]
\Procedure{\textsc{ObtainOneGoodSample}}{$k,y(t)$}
\State Let $m=C \cdot k \log^2 k$ for a large constant $C$.
\State Take $m$ samples $x_1,\cdots,x_m$ from the distribution $D$ in Lemma~\ref{lem:generate_sampling}.
\State Assign a weight $w_i=\frac{1}{2 m \cdot D(x_i)}$ for each sample $x_i$.
\State Query $y(x_i)$ and compute $y_H(x_i)=y(x_i) \cdot H(x_i)$ for each $i$.
\State Set a distribution $D_m$ proportional to $w_i \cdot |y_H(x_i)|^2$, i.e., $D_m(x_i)=\frac{w_i \cdot |y_H(x_i)|^2}{\sum_{j=1}^m w_j \cdot |y_H(x_j)|^2}$.
\State Output $\alpha \sim D_m$.
\EndProcedure
\end{algorithmic}
\end{algorithm}

\begin{proofof}{Theorem~\ref{thm:restate_sparse}}
From Lemma~\ref{lem:good_sampling_sparse}, $\frac{y_H(\alpha+\beta)}{y_H(\alpha)}$ gives a good estimation of $e^{2 \pi i f_0 \beta}$ with probability 0.6 for any $\beta \le \frac{\gamma}{\Delta'}$. We use the frequency search algorithm of Lemma 7.3 in \cite{CKPS17} with the sampling procedure in Lemma~\ref{lem:good_sampling_sparse}. Because the algorithm in \cite{CKPS17} uses the sampling procedure $O(\log \frac{F}{\Delta' \cdot \delta})$ times to return a frequency $\wt{f}$ satisfying $|\wt{f} - f_0| \le \Delta'$ with prob. at least $1-\delta$, the sample complexity is $O(k \log^2 k \cdot \log \frac{F}{\Delta' \cdot \delta})$.
\end{proofof}

\subsection{Proof of Theorem~\ref{thm:worst_case_sFFT}}\label{sec:proof_R}
We bound $R$ of $k$-sparse-Fourier signals in this section. We first state the technical result to prove the upper bound $R$.
\begin{theorem}\label{thm:sFFT_interpolation}
Given any $k>0$, there exists $d=O(k^2 \log k)$ such that for any $g(x)=\sum_{j=1}^k v_j \cdot e^{2 \pi \bi f_j \cdot x}$,  any $t \in \mathbb{R}$, and any $\theta>0$,
$$
|g(t)|^2 \le O(k) \cdot \bigg(\sum_{j=1}^{d} |g(t+j \cdot \theta|^2 \bigg).
$$
\end{theorem}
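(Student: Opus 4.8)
The plan is to obtain Theorem~\ref{thm:sFFT_interpolation} as an almost immediate corollary of Theorem~\ref{thm:sup_sup_coef}: a $k$-Fourier-sparse signal obeys an exact, bounded-coefficient linear recurrence along any arithmetic progression of sample points, and feeding this recurrence into the Cauchy--Schwarz inequality turns the value $|g(t)|^2$ into a constant times $k$ times the $\ell_2$ energy of the shifted samples.

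In detail, fix $t\in\mathbb{R}$ and $\theta>0$, and set $z_i:=e^{2\pi\bi f_i\theta}$ for $i\in[k]$, so that $|z_1|=\cdots=|z_k|=1$. Applying Theorem~\ref{thm:sup_sup_coef} (equivalently, Corollary~\ref{cor:linear_comb} with the shift $-\theta$; its proof never uses the sign of $\theta$) produces a polynomial $P(z)=\sum_{j=0}^d c(j)z^j$ of degree $d=O(k^2\log k)$ with $P(z_i)=0$ for every $i\in[k]$, with $c(0)=\Omega(1)$, and with $\sum_{j=1}^d|c(j)|^2=O(k)\cdot|c(0)|^2$. Dividing through by $c(0)$ and writing $\alpha_j:=-c(j)/c(0)$, the relation $P(z_i)=0$ becomes $1=\sum_{j=1}^d\alpha_j z_i^{\,j}$ for each $i$, and the coefficient bound becomes $\sum_{j=1}^d|\alpha_j|^2=O(k)$. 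Multiplying this identity by $v_i e^{2\pi\bi f_i t}$, summing over $i$, and using $z_i^{\,j}=e^{2\pi\bi f_i j\theta}$ gives the exact reconstruction
$$
g(t)=\sum_{i=1}^k v_i e^{2\pi\bi f_i t}=\sum_{j=1}^d\alpha_j\sum_{i=1}^k v_i e^{2\pi\bi f_i(t+j\theta)}=\sum_{j=1}^d\alpha_j\, g(t+j\theta),
$$
which holds for every real $t$ since it is an identity among exponentials (no assumption on $t$ is used). Cauchy--Schwarz on the $d$-term sum then yields
$$
|g(t)|^2\le\Big(\sum_{j=1}^d|\alpha_j|^2\Big)\Big(\sum_{j=1}^d|g(t+j\theta)|^2\Big)=O(k)\cdot\sum_{j=1}^d|g(t+j\theta)|^2,
$$
which is the claim.

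There is essentially no remaining obstacle once Theorem~\ref{thm:sup_sup_coef} is in hand; the real content of this statement lives in that theorem. The one point worth emphasizing is that the factor $O(k)$ --- rather than the weaker $O(d)=\wt{O}(k^2)$ that would follow from the crude bound $|\alpha_j|=O(1)$ alone --- comes precisely from property~2 of Theorem~\ref{thm:sup_sup_coef}, i.e.\ the $\ell_2$ control on the coefficients obtained there through the geometric weights $2^{-j/k}$, and it is exactly this saving that later upgrades the bound on $R$ from $\wt{O}(k^4)$ to $\wt{O}(k^3)$. Passing from the pointwise inequality above to Theorem~\ref{thm:worst_case_sFFT} is then a short averaging step: take $t$ near the point maximizing $|g|$ on $[-1,1]$, pick $\theta$ small enough that all $d$ shifted points $t+j\theta$ remain in $[-1,1]$, and average over $\theta$ in a suitable range so that each term $|g(t+j\theta)|^2$ averages to $\wt{O}(1)\cdot\|g\|_2^2$; since there are $d=O(k^2\log k)$ terms and the prefactor is $O(k)$, this yields $R=O(k^3\log^2 k)$.
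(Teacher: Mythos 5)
Your proof is correct and follows essentially the same route as the paper: instantiate Theorem~\ref{thm:sup_sup_coef} at $z_i=e^{2\pi\bi f_i\theta}$ to obtain the bounded-coefficient annihilating polynomial, derive the exact recurrence $g(t)=\sum_{j=1}^d \alpha_j g(t+j\theta)$, and apply Cauchy--Schwarz together with $\sum_j|\alpha_j|^2=O(k)$. The only cosmetic difference is that you normalize by $c(0)$ before applying Cauchy--Schwarz, whereas the paper keeps $C(0)$ on the left-hand side; the content is identical.
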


\begin{proofof}{Theorem~\ref{thm:sFFT_interpolation}}
Given $k$ frequencies $f_1,\cdots,f_k$ and $\theta$, we set $z_1=e^{2 \pi \bi f_1 \cdot \theta},\cdots,z_k=e^{2 \pi \bi f_k \cdot \theta}$. Let $C(0), \cdots, C(d)$ be the coefficients of the degree $d$ polynomial $P(z)$ in Theorem~\ref{thm:sup_sup_coef}. We have
\begin{align*}
\sum_{j=0}^d C(j) \cdot g(t+j \cdot \theta) & =\sum_{j=0}^d C(j) \sum_{j' \in [k]} v_{j'} \cdot e^{2 \pi \bi f_{j'} (t+j \theta)}\\ 
& =\sum_{j=0}^d C(j) \sum_{j' \in [k]} v_{j'} \cdot e^{2 \pi \bi f_{j'} t} \cdot z_{j'}^j = \sum_{j' \in [k]} v_{j'} \cdot e^{2 \pi \bi f_{j'} t} \sum_{j=0}^d C(j) \cdot z_{j'}^j=0.
\end{align*}
Hence for every $i \in [k]$,
\begin{equation}\label{eq:linear_relation_0}
-C(0) \cdot g(t)=\sum_{j=1}^{d} C(j) \cdot g(t+j \cdot \theta).
\end{equation}
By Cauchy-Schwartz inequality, we have
\begin{equation}\label{eq:linear_relation_head}
|C(0)|^2 \cdot |g(t)|^2 \le \left(\sum_{j=1}^{d} |C(j)|^2\right) \cdot \left(\sum_{j=1}^{d} |g(t+j \cdot \theta)|^2 \right).
\end{equation}
From the second property of $C(0),\cdots,C(d)$ in Theorem~\ref{thm:sup_sup_coef}, $|g(t)|^2 \le O(k) \cdot \bigg(\sum_{j=1}^{d} |g(t+j \cdot \theta|^2 \bigg)$.
\end{proofof}

We finish the proof of Theorem~\ref{thm:worst_case_sFFT} bounding $R$ by the above relation. For convenience, we restate it for $T=1$.

\begin{theorem}
For any $g(t)=\sum_{j=1}^k v_j e^{2 \pi \bi f_j t}$, 
\[
\frac{\underset{x \in [-1,1]}{\sup} |g(x)|^2}{\underset{x \in [-1,1]}{\E}[|g(x)|^2]} = O(k^3 \log^2 k).
\]
\end{theorem}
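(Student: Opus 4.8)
The plan is to deduce the bound directly from Theorem~\ref{thm:sFFT_interpolation}, which already encapsulates the hard combinatorial work (the bounded-coefficient annihilating polynomial of Theorem~\ref{thm:sup_sup_coef}). Let $t^{*}\in[-1,1]$ be a point at which $|g|^2$ attains its supremum on $[-1,1]$ (a routine compactness remark handles attainment, or one works with a point within a $1+o(1)$ factor of the sup). The goal is to show $|g(t^{*})|^2\le O(k^3\log^2 k)\cdot\E_{x\in[-1,1]}[|g(x)|^2]$. Replacing $g$ by $g(-\,\cdot)$ if necessary---this is again $k$-Fourier-sparse and has the same supremum and the same average on $[-1,1]$---I may assume $t^{*}\ge 0$. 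Applying Theorem~\ref{thm:sFFT_interpolation} to the reflected signal $\widetilde g(t)=g(-t)$ at the point $-t^{*}$ gives, for $d=O(k^2\log k)$ and every $\theta>0$,
\[
|g(t^{*})|^2 \;\le\; O(k)\sum_{j=1}^{d}\bigl|g(t^{*}-j\theta)\bigr|^2 .
\]

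The next step is to average this over $\theta\in(0,1/d]$. Since the left-hand side does not depend on $\theta$,
\[
|g(t^{*})|^2 \;=\; d\int_{0}^{1/d}|g(t^{*})|^2\,\mathrm{d}\theta \;\le\; O(kd)\sum_{j=1}^{d}\int_{0}^{1/d}\bigl|g(t^{*}-j\theta)\bigr|^2\,\mathrm{d}\theta .
\]
The crucial manipulation is the change of variables $u=j\theta$ in each summand: $\int_{0}^{1/d}|g(t^{*}-j\theta)|^2\,\mathrm{d}\theta=\frac1j\int_{0}^{j/d}|g(t^{*}-u)|^2\,\mathrm{d}u$. Because $0\le t^{*}\le 1$ and $0\le j/d\le 1$, the interval $[t^{*}-j/d,\,t^{*}]$ lies inside $[-1,1]$, so this term is at most $\frac1j\int_{-1}^{1}|g(x)|^2\,\mathrm{d}x$. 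Summing over $j$ produces the harmonic sum $\sum_{j=1}^{d}\tfrac1j=O(\log d)$, which is precisely where the improvement over the $O(k^4\log^3 k)$ bound of \cite{CKPS17} comes from: weighting the linear combination so that the tail contributes $1/j$ rather than a constant.

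Combining, $|g(t^{*})|^2\le O(kd\log d)\int_{-1}^{1}|g(x)|^2\,\mathrm{d}x = O(kd\log d)\cdot 2\,\E_{x\in[-1,1]}[|g(x)|^2]$, and substituting $d=O(k^2\log k)$ (hence $\log d=O(\log k)$) yields $|g(t^{*})|^2\le O(k^3\log^2 k)\,\E_{x\in[-1,1]}[|g(x)|^2]$, as claimed.

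I do not expect a genuine obstacle inside this argument; essentially all the difficulty is upstream, in Theorem~\ref{thm:sFFT_interpolation} (equivalently in constructing the degree-$O(k^2\log k)$ annihilating polynomial with $\sum_j|c(j)|^2=O(k)\,|c(0)|^2$). The only points needing care are (i) ensuring every shifted evaluation point $t^{*}-j\theta$ stays in $[-1,1]$, which is why one first moves $t^{*}$ to the nonnegative side and then shifts toward the origin, and (ii) keeping track of the factor-$2$ discrepancy between $\|g\|_2^2=\E_{x\in[-1,1]}|g(x)|^2$ and $\int_{-1}^{1}|g(x)|^2\,\mathrm{d}x$.
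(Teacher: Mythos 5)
Your proof is correct and is essentially the same as the paper's: both deduce the bound from Theorem~\ref{thm:sFFT_interpolation} by averaging the linear-combination inequality over a range of step sizes $\theta$, changing variables $u = j\theta$ in each term, and summing the resulting harmonic series $\sum_j 1/j = O(\log d)$ to obtain the extra logarithmic factor. The only cosmetic difference is that you fix the supremum point $t^*\ge 0$ and shift leftward via the reflected signal $g(-\cdot)$, whereas the paper proves the inequality for every $t\le 0$ by shifting rightward over $\theta\in[0,(1-t)/d]$ and then appeals to symmetry; these are mirror images of the same argument.
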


\begin{proof}
We prove
\[
|g(t)|^2 = O(k^3 \log^2 k) \int_{t}^1 |g(x)|^2 \mathrm{d} x \text{ for any } t \le 0,
\] 
which indicates $|g(t)|^2 = O(k^3 \log^2 k) \cdot \underset{x \sim [-1,1]}{\E}\big[ |g(x)|^2 \big]$. By symmetry, it also implies that $|g(t)|^2 = O(k^3 \log^2 k) \cdot \underset{x \sim [-1,1]}{\E}\big[ |g(x)|^2 \big]$ for any $t \ge 0$.

We use Theorem~\ref{thm:sFFT_interpolation} on $g(t)$:
\begin{align*}
\frac{1-t}{d} \cdot |g(t)|^2 & \le O(k) \cdot \int_{\theta=0}^{\frac{1-t}{d}} \sum_{j \in [d]}  |g(t+j \theta)|^2 \mathrm{d} \theta\\
& \lesssim k \sum_{j \in [d]} \int_{\theta=0}^{\frac{1-t}{d}} |g(t+j \theta)|^2 \mathrm{d} \theta \\
& \lesssim k \sum_{j \in [d]} \frac{1}{j} \cdot \int_{\theta'=0}^{\frac{(1-t) j}{d}} |g(t+\theta')|^2 \mathrm{d} \theta'\\
& \lesssim k \sum_{j \in [d]} \frac{1}{j} \cdot \int_{x=-1}^{1} |g(x)|^2 \mathrm{d} x\\
& \lesssim k \log k \cdot \int_{x=-1}^{1} |g(x)|^2 \mathrm{d} x.
\end{align*}
From all discussion above, we have $|g(t)|^2 \lesssim dk \log k \cdot \underset{x \in [-1,1]}{\E}[|g(x)|^2]$.
\end{proof}


\subsection{Growth outside of the observation}\label{sec:growth}
We prove Theorem~\ref{thm:bound_growth_around1} which bounds $S=\tilde{O}(k^2)$ in this section. We divide the proof into two parts for $|x| \le 1+1/k$ and $|x| > 1+ 1/k$ separately after fixing $T=1$.

\begin{lemma}\label{lem:bound_close1}
For any $g(t)=\sum_{j=1}^k v_j \cdot e^{2 \pi \bi f_j t}$, there exists a constant $C_1$ such that for any $x \ge 1$, $|g(x)| \lesssim \poly(k) \cdot \|g\|_2 \cdot C_1^{(x-1) \cdot k^2 \log k}$.
\end{lemma}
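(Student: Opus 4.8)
The plan is to bootstrap the claim outward from the base interval $[-1,1]$ using the short linear recurrence of Corollary~\ref{cor:linear_comb}, organized in strips of width $\Theta(1/d)$.

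Fix $d=O(k^2\log k)$ and coefficients $\alpha_1,\dots,\alpha_d$ with $\alpha_j=O(1)$ as produced by Corollary~\ref{cor:linear_comb} with step size $\theta=1/d$, so that $g(x)=\sum_{j=1}^d\alpha_j\,g(x-j\theta)$ for every $x\in\mathbb R$. Partition $[1,\infty)$ into the strips $J_m=\bigl(1+(m-1)\theta,\ 1+m\theta\bigr]$ for $m\ge1$, and set $N_m=\sup_{x\in J_m}|g(x)|$ and $N_0=\sup_{[-1,1]}|g|$, where by Theorem~\ref{thm:worst_case_sFFT} we have $N_0\le\sqrt R\,\|g\|_2=\poly(k)\,\|g\|_2$. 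The elementary point is that for $x\in J_m$ the shift $x-j\theta$ lands in $J_{m-j}$ whenever $1\le j\le m-1$, and lands in $(0,1]\subseteq[-1,1]$ whenever $m\le j\le d$: indeed $x-j\theta\le 1+(m-j)\theta\le1$, while $x-j\theta>1+(m-1-j)\theta\ge(m-1)/d\ge0$ by the choice $\theta=1/d$ and $m\ge1$. Since each of the strips $J_{m-1},\dots,J_{m-d}$ is hit by exactly one index $j$, applying $|\alpha_j|=O(1)$ and bounding the at most $d$ terms that fell into $[-1,1]$ by $\sqrt R\,\|g\|_2$ gives, for an absolute constant $C$,
\[
N_m\ \le\ C\sum_{i=1}^{m-1}N_i\ +\ \poly(k)\,\|g\|_2 .
\]

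Folding the additive $\poly(k)\,\|g\|_2$ into $N_0$, this recurrence solves by a one-line induction to $N_m\le N_0\cdot C(1+C)^{m-1}$, i.e. $N_m\le \poly(k)\,\|g\|_2\,(1+C)^{m}$. For a given $x\ge1$ we have $x\in J_m$ with $m=\lceil (x-1)d\rceil\le (x-1)d+1$, so $|g(x)|\le N_m\le\poly(k)\,\|g\|_2\,(1+C)^{(x-1)d}$; absorbing the constant factor from $d=\Theta(k^2\log k)$ into the base of the exponential gives $|g(x)|\lesssim\poly(k)\,\|g\|_2\,C_1^{(x-1)k^2\log k}$ for a suitable constant $C_1$, as claimed.

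I expect the only real subtlety to be the bookkeeping in the displayed inequality: one must track precisely which strip each shifted argument $x-j\theta$ falls into, and in particular verify that each lower strip is charged at most once, so that the coefficient of $\sum_i N_i$ is $O(1)$ rather than $O(d)$ — this is exactly what pins the growth at $C_1^{(x-1)d}$ rather than a much larger $d^{\,\Theta((x-1)d)}$. (Invoking Theorem~\ref{thm:worst_case_sFFT} can be avoided by instead averaging the relation over $\theta$ in a tiny sub-interval of $(0,1/d]$, which replaces the $\sqrt R\,\|g\|_2$ bound on the in-$[-1,1]$ terms by an $O(\log k)\,\|g\|_2$ bound via $\sum_{j\le d}1/j$; either route produces the stated $\poly(k)$ factor.)
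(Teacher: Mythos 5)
Your proof is correct and follows essentially the same route as the paper's: partition $[1,\infty)$ into strips of width $\Theta(1/d)$, use the linear recurrence of Corollary~\ref{cor:linear_comb} to bound the sup over each strip by the sups over earlier strips plus a $\poly(k)\|g\|_2$ contribution from the terms that fall back into $[-1,1]$, and iterate. The only cosmetic difference is that you package the induction as solving the recurrence $N_m\le C\sum_{i<m}N_i+A$ (a slightly cleaner bookkeeping), whereas the paper directly induces the explicit bound $(2C_0)^l$ on the $l$-th strip.
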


\begin{remark}
The growth of Chebyshev polynomial is $e^{\Theta(k\sqrt{x-1})}$ for $x=1+O(1/k^2)$.
\end{remark}

\begin{proof}
Let $d=O(k^2 \log k)$ denote the length of the linear combination in Corollary~\ref{cor:linear_comb} and $\theta=\frac{2}{d}$. Given $g(t)$ and $\theta$, we use $\alpha_1,\cdots,\alpha_d$ to denote the coefficients of the linear combination of $g(t),g(t-\theta),\ldots,g(t-d \theta)$ in Corollary~\ref{cor:linear_comb}. For convenience, we use $C_0$ to denote the upper bound on the coefficients $\alpha_j$.

We use induction to prove that for some $C=O(1)$, for any $l$,
\begin{equation}\label{eq:induction}
\text{ for any } x \in (1, 1+\frac{2l}{d}], |g(x)| \le C \cdot d k^{1.5} \log k \cdot \|g\|_2 \cdot (2C_0)^{l}.
\end{equation} 
For base case $l=1$, from Corollary~\ref{cor:linear_comb}, $g(x)=\sum_{j=1}^d \alpha_j \cdot g(x-j \theta)$ where $x - j \theta \in [-1,1]$. Because each $\big|g(x-j \theta)\big| \le C \cdot k^{1.5} \log k \cdot \|g\|_2$ from Theorem~\ref{thm:worst_case_sFFT}, we have $$
\big|g(x)\big| \le \sum_{j=1}^d |\alpha_j| \cdot \big|g(x-j \theta)\big| \le C \cdot C_0 \cdot d \cdot k^{1.5} \log k \cdot \|g\|_2.$$

Suppose \eqref{eq:induction} is true for any $x \in (1, 1 + \frac{2l}{d}]$. Let us consider $x \in (1+\frac{2l}{d}, 1+ \frac{2(l+1)}{d}]$. We still have $g(x)=\sum_{j=1}^d \alpha_j \cdot g(x-j \theta)$ where each $x - j \theta \in (1+\frac{2(l-j)}{d}, 1 + \frac{2(l+1-j)}{d}]$. This indicates 
\begin{align*}
\big|g(x)\big| & \le \sum_{j=1}^d |\alpha_j| \cdot \big|g(x-j \theta)\big| \\
& \le C_0 \sum_{j=1}^d \big|g(x-j \theta)\big| \\
& \le C_0 \sum_{j=1}^l \big|g(x-j \theta)\big| + C_0 \sum_{j=l+1}^d \big|g(x-j \theta)\big| \\
& \le C_0 \sum_{j=1}^l C \cdot d k^{1.5} \log k \cdot \|g\|_2 \cdot (2 C_0)^{l+1-j} + C_0 (d-j) \cdot C \cdot k^{1.5} \log k \cdot \|g\|_2.\\
& \le C_0^{l+1} \cdot C \cdot d k^{1.5} \log k \cdot \|g\|_2 \cdot \sum_{j=1}^l 2^{l+1-j} + C_0 d \cdot C \cdot k^{1.5} \log k \cdot \|g\|_2.\\
& \le C_0^{l+1} \cdot C \cdot d k^{1.5} \log k \cdot \|g\|_2 (2^{l+1} - 2) + C_0 d \cdot C \cdot k^{1.5} \log k \cdot \|g\|_2 \\
& \le C_0^{l+1} \cdot C \cdot d k^{1.5} \log k \cdot \|g\|_2 \cdot 2^{l+1}.
\end{align*}
\end{proof}

For completeness, we bound the growth rate of $|t|>1+1/k$ here, which is a reformulation of Lemma 5.5 in \cite{CKPS17}.
\begin{lemma}\label{lem:bound_growth}
For any $g(t)=\sum_{j=1}^k v_j e^{2 \pi \bi f_j t}$ and any $|t|>1$, 
\[
|g(t)|^2 \lesssim k^3 \cdot (3k \cdot t)^k \cdot \|g\|^2_2.
\]
\end{lemma}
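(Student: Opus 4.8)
The plan is to follow the argument behind Lemma~5.5 of \cite{CKPS17}: for $|t|>1$ I would express $g(t)$ as an explicit linear combination of the values of $g$ at $k$ equally spaced points inside $[-1,1]$ via the residual‑polynomial identity, and then combine Cauchy--Schwarz with the coefficient bound of Lemma~\ref{lem:bound_z_n}.

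Replacing $g$ by $s\mapsto g(-s)$ (still $k$-Fourier-sparse, with the same $\|\cdot\|_2$ on $[-1,1]$) reduces to the case $t>1$. Fix a base point $a$ and a spacing $\theta>0$, set $z_j=e^{2\pi\bi f_j\theta}$, and let $r_{n,k}(z)=\sum_{i=0}^{k-1}r_{n,k}^{(i)}z^i$ be the residue of $z^n$ modulo $\prod_{j=1}^k(z-z_j)$. Since each $z_j$ is a root of that product, $z_j^n=r_{n,k}(z_j)$, so expanding $g(a+n\theta)=\sum_j v_j e^{2\pi\bi f_j a}z_j^n$ gives
\[
g(a+n\theta)=\sum_{i=0}^{k-1}r_{n,k}^{(i)}\,g(a+i\theta).
\]
I would then choose $n,\theta,a$ so that $a+n\theta=t$ while all of $a,a+\theta,\dots,a+(k-1)\theta$ lie in $[-1,1]$. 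Since these $k$ nodes span a length $(k-1)\theta$ that must be $\le 2$, this forces $\theta\le\frac{2}{k-1}$ and hence $n=(t-a)/\theta=\Theta(k|t|)$; concretely one can take $a=-1$, $n=\lceil(k-1)(t+1)\rceil$ (so $n\ge k$ and Lemma~\ref{lem:bound_z_n} applies), and $\theta=(t+1)/n$.

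Applying Cauchy--Schwarz to the identity yields
\[
|g(t)|^2\le\Big(\sum_{i=0}^{k-1}|r_{n,k}^{(i)}|^2\Big)\Big(\sum_{i=0}^{k-1}|g(a+i\theta)|^2\Big).
\]
For the first factor, Lemma~\ref{lem:bound_z_n} gives $|r_{n,k}^{(i)}|\le\binom{k-1}{i}\binom{n}{k-1}$, so it is at most $\binom{n}{k-1}^2\binom{2k-2}{k-1}\le 4^{k-1}\binom{n}{k-1}^2$, and $\binom{n}{k-1}\le(en/(k-1))^{k-1}=O(|t|)^{k-1}$ because $n=O(k|t|)$. For the second factor I would average the whole inequality over $a$ in an interval of length $\Omega(1)$ (holding $n$ fixed and letting $\theta=\theta(a)=(t-a)/n$ drift): the left side stays $|g(t)|^2$ since $a+n\theta(a)=t$ identically, and after the substitution $u=a+i\theta(a)$ each $\int|g(a+i\theta(a))|^2\,da$ becomes $O(1)\int_{-1}^1|g(u)|^2\,du=O(\|g\|_2^2)$ (the Jacobian $1-i/n\ge 1/2$ once $n\ge 2(k-1)$), whence $\sum_i\int|g(a+i\theta(a))|^2\,da=O(k\|g\|_2^2)$; alternatively one can bound the second factor crudely by $k\sup_{[-1,1]}|g|^2\le O(k^4\log^2k)\|g\|_2^2$ using Theorem~\ref{thm:worst_case_sFFT}, at the cost of extra powers of $k$. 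Putting the two factors together gives $|g(t)|^2\lesssim\poly(k)\cdot O(|t|)^{O(k)}\cdot\|g\|_2^2$, which after adjusting the constants is the claimed growth bound (a reformulation of Lemma~5.5 of \cite{CKPS17}).

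The crux is the node placement: forcing all $k$ interpolation nodes to stay inside $[-1,1]$ is exactly what pins the extrapolation index to $n=\Theta(k|t|)$, and it is the factor $\binom{n}{k-1}=O(|t|)^{k-1}$ coming from it that makes the growth polynomial in $|t|$ rather than exponential --- in contrast to Lemma~\ref{lem:bound_close1}, which telescopes short relations and is useful only for $|t|$ extremely close to $1$. The remaining work is keeping the constants under control, where averaging over the base point (so the node values contribute only $O(k\|g\|_2^2)$, with no factor of $R$) gives the cleanest bound.
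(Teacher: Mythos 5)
Your proposal is correct and follows the same skeleton as the paper's proof: both reduce to $t>1$, use the residual-polynomial identity $g(a+n\theta)=\sum_{i=0}^{k-1}r_{n,k}^{(i)}g(a+i\theta)$ with $z_j=e^{2\pi\bi f_j\theta}$, apply Cauchy--Schwarz, and invoke Lemma~\ref{lem:bound_z_n} to control $\sum_i|r_{n,k}^{(i)}|^2$ at an extrapolation index $n=\Theta(kt)$ forced by fitting $k$ nodes of spacing $\theta$ inside $[-1,1]$.

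The genuine difference is in how the node factor $\sum_{i=0}^{k-1}|g(\text{node}_i)|^2$ is controlled. The paper fixes $\theta=1/k$ so that the nodes $t-n\theta,\dots,t-(n-k+1)\theta$ land in the \emph{interior} $[-1/2,1/2]$ and then cites the $O(k)$ interior sup/average bound from \cite{CP18}, giving $\sum_i|g(\text{node}_i)|^2\lesssim k^2\|g\|_2^2$. You instead hold $n$ fixed and average the inequality over the base point $a$ in an interval of length $\Omega(1)$ (with $\theta(a)=(t-a)/n$): the coefficient bound is uniform in $a$, and the change of variables $u=a+i\theta(a)$ with Jacobian $1-i/n\ge1/2$ turns each node term into $O(\|g\|_2^2)$, so some $a$ gives $\sum_i|g(\text{node}_i)|^2=O(k\|g\|_2^2)$. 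This is self-contained and does not need the separate interior bound from \cite{CP18}; it is in fact the same averaging trick the paper uses in its proof of Theorem~\ref{thm:worst_case_sFFT}. Your fallback via Theorem~\ref{thm:worst_case_sFFT} ($k\cdot R=O(k^4\log^2k)$) also works, at the cost of extra polynomial factors in $k$ that are harmless for the eventual $|t|^{\wt{O}(k^2)}$ bound. Either route gives $|g(t)|^2\lesssim\poly(k)\cdot O(t)^{2(k-1)}\|g\|_2^2$, which is the same $t^{O(k)}$ growth the paper's argument actually yields (the stated exponent $(3kt)^k$ appears to drop a factor of two relative to the Cauchy--Schwarz computation; this does not affect Theorem~\ref{thm:bound_growth_around1}).
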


\begin{proof}
We fix $t>1$ in this proof. Let $\theta=1/k$ and $n=\big[(t+1/2)/\theta\big]$ such that $t - n \theta \in [-1/2,-1/2+\theta]$ and $t - (n - k) \theta \in [1/2, 1/2+\theta]$. We first show the coefficients $C_0,\cdots,C_{k-1}$ in 
\[
\sum_{j=0}^{k-1} C_j \cdot z^j = z^n \mod \prod_{j=1}^k (z - e^{2\pi \bi f_j \theta})
\] 
satisfy $g(t)=\sum_{l=0}^{k-1} C_j \cdot g\big( t - (n-l)\theta \big)$. Let $z_j=e^{2\pi \bi f_j \theta}$ such that $z_j^n = \sum_{j=0}^{k-1} C_j \cdot z^{j}$. For $g(t)=\sum_{j=1}^k v_j e^{2 \pi \bi f_j t}$, we rewrite it as 
\begin{align*}
\sum_{j=1}^k v_j e^{2 \pi \bi f_j (t - n \theta)} \cdot e^{2 \pi \bi f_j n \theta} &= \sum_{j=1}^k v_j e^{2 \pi \bi f_j (t - n \theta)} \cdot z_j^n \\
& = \sum_{j=1}^k v_j e^{2 \pi \bi f_j (t - n \theta)} \cdot \sum_{l=0}^{k-1} C_l \cdot z_j^{l} \\
& = \sum_{l=0}^{k-1} C_l \cdot \sum_{j=1}^k v_j e^{2 \pi \bi f_j (t - n \theta)} z_j^{l} \\
& = \sum_{l=0}^{k-1} C_l \cdot g(t - n \theta + l \theta).
\end{align*}
Thus $|g(t)|^2 \le (\sum_{j=0}^{k-1} |C_j|^2) \cdot (\sum_{l=0}^{k-1} |g(t - n \theta + l \theta)|^2)$.

Since $g(t - n \theta + l \theta) \in [- 2/3, 2/3]$, $|g(t - n \theta + l \theta)|^2 \lesssim k \underset{x \in [-1,1]}{\E}[|g(x)|^2]$ from  \cite{CP18}. On the other hand, $|C_j| \le {k-1 \choose j}{n \choose k-1} \le (2n)^{k-1}$ from Lemma~\ref{lem:bound_z_n}.

From all discussion above, 
\[
|g(t)|^2 \lesssim k \cdot (2n)^{k-1} \cdot k^2 \underset{x \in [-1,1]}{\E}[|g(x)|^2] \lesssim k^3 (3kt)^{k} \cdot \underset{x \in [-1,1]}{\E}[|g(x)|^2].
\]
\end{proof}

\begin{proofof}{Theorem~\ref{thm:bound_growth_around1}}
We combine Lemma~\ref{lem:bound_close1} and \ref{lem:bound_growth}:
For $x \le 1+1/k$, $C_1^{(x-1)k^2 \log k} = e^{(x-1) k^2 \log k \log C_1} = x^{O(k^2 \log k)}$. For $x > 1 + 1/k$, $(3kx)^{k}$  is still less than $x^{O(k^2 \log k)}$.
\end{proofof}

\end{document}